\newtheorem{theorem}{Theorem}[section]
\newtheorem{proposition}[theorem]{Proposition}
\newtheorem{corollary}[theorem]{Corollary}
\newtheorem{definition}[theorem]{Definition}
\newtheorem{observation}[theorem]{Observation}
\newtheorem{remark}[theorem]{Remark}
\title{A Mean Field Game approach for pollution regulation of competitive firms}
\author[1,2]{Gianmarco Del Sarto\footnote{gianmarco.delsarto$@$sns.it,gianmarco.delsarto@iusspavia.it}}
\affil[1]{Scuola Normale Superiore, Pisa, Italy }
\affil[2]{University School for Advanced Studies IUSS, Pavia, Italy}
\author[3]{Marta~Leocata\footnote{mleocata@luiss.it (primary author, alphabetic order)}}
\affil[3]{LUISS Guido Carli, Rome, Italy}
\author[4]{Giulia Livieri \thanks{g.livieri$@$lse.ac.uk (primary author, alphabetic order)\\ \\We are grateful to Prof. Ren\'{e} A\"id, Prof. Fausto Gozzi, Prof. Jameson Graber, Prof. Fujii, Masaaki, and seminar participants at the University of Padova, LUISS Guido Carli, University of Verona, Durham University, the \textit{Mathematics of subjective probability, Topics in Economics and Probability} (Milan, September 11-13, 2023), the \textit{DCP24 Dynamics} $\&$ \textit{Complexity} (Pisa, June 5-8, 2024), the $4^{th}$ \textit{Italian Meeting on Probability and Mathematical Statistics (Rome, June 10-14, 2024)} for discussions. Giulia Livieri acknowledges partial support by the grant \textit{Qnt4Green- Quantitative Approaches for Green Bond Market: Risk Assessment, Agency Problems, and Policy Incentives} of the Italian Ministry of Education and Research. Gianmarco Del Sarto acknowledges the support of the Italian national inter-university Ph.D. course in sustainable development and climate change. All errors are our own.}} 
\affil[4]{The London School of Economics and Political Science, London, United Kingdom}
\date{\today}
\begin{document}

\maketitle

\begin{abstract}
We develop a model based on mean-field games of competitive firms producing similar goods according to a standard AK model with a depreciation rate of capital generating pollution as a byproduct. Our analysis focuses on the widely-used cap-and-trade pollution regulation. Under this regulation, firms have the flexibility to respond by implementing pollution abatement, reducing output, and participating in emission trading, while a regulator dynamically allocates emission allowances to each firm. The resulting mean-field game is of linear quadratic type and equivalent to a mean-field type control problem, i.e., it is a potential game. We find explicit solutions to this problem through the solutions to differential equations of Riccati type. Further, we investigate the carbon emission equilibrium price that satisfies the market clearing condition and find a specific form of FBSDE of McKean-Vlasov type with common noise. The solution to this equation provides an approximate equilibrium price. Additionally, we demonstrate that the degree of competition is vital in determining the economic consequences of pollution regulation. 
\end{abstract}

\noindent {\bf Key words}:
Cap-and-trade; Linear Quadratic Problem; Mean Field Games;
Market Equilibrium; Social Cost Optimization
\section{Introduction}\label{sec::introduction}
\noindent   The problem of excessive firm pollution has long been a part of economic theory, mainly because it imposes a negative externality on society. In particular, it is considered as the consequence of the absence of price on emission, which implies higher volumes than socially optimal levels. Therefore, from an economic point of view, one possibility is to put a price on pollution; in this way, polluters will be more conscious about the social value of their private decisions. One of the most popular measures that help tackle this problem is the emission trading system, also known as the cap-and-trade system, which gives the environmental authority direct control on the overall quantity of emissions and, at the same time, increases the acceptability of environmental policy for covered companies because they can make profit from it. The EU-ETS (European Union Emission Trading Scheme) is, together with the US Sulfur Dioxide Trading System, the most prominent example of an existing cap-and-trade system deployed in practice (e.g., \cite{carmona2010marke}). Having made this premise, understanding how the market price of carbon in an emission trading system is formed through the interaction among a large number of (indistinguishable rational competitive) firms is significant. This paper proposes an integrated production-pollution-abatement model in continuous time and studies cap-and-trade under competition via the Mean Field Game (MFG, henceforth) approach. The theorethical model is described in detail in Section \ref{sec::theoreticalmodel}. In particular, we are interested in equilibrium carbon price formation in a cap-and-trade system, i.e., the pricing of carbon endogenously using a model of (indistinguishable rational competitive) firms under the market clearing condition.\\
\indent It is important to make the following point. In the present work, we consider two types of competitions.    On the one hand, the competition in polluting firms is because we do not focus on perfect competition or monopoly. However, instead, we account for firms' strategic interactions in the output markets by assuming that firms compete \textit{\'{a} la Cournot}.   In other words, competing firms are trapped in an equilibrium where each firm's decisions impose not just a pollution externality on society but also a competitive externality on the other firms. On the other hand, there is the type of competition in the continuum limit of an infinity of small players allowed by the MFG framework. Precisely, each player only sees and reacts to the statistical distribution of the other players' states; in turn, their actions determine the evolution of the state distribution. To avoid confusion, we always refer to the former when speaking about competition.\\
\indent MFG models appeared simultaneously and independently in the original works of \cite{lasry2007mean} and \cite{caines2006large}, and are, loosely speaking, limits of symmetric stochastic differential games with a large number of players where each of them interacts with the average behavior of his/her competitors. In particular, an MFG is an equilibrium, called $\epsilon$-Nash equilibrium, that occurs when the strategy employed by a representative agent of a given population is optimal, given the costs imposed by that population. An increasing stream of research has been flourishing since 2007, producing theoretical results and a wide range of applications in many fields, such as economics, finance, crowd dynamics, social sciences in general, and, only recently, in equilibrium price formation (e.g., \cite{fujii2022mean} and references therein). We refer to the lecture notes of \cite{cardaliaguet2012notes} and the two-volume monograph by Carmona and Delarue (\cite{delarue2018probabilisticone} and \cite{delarue2018probabilistictwo}) for an excellent presentation of the MFG theory from analytic and probabilistic perspective, respectively; in the present paper, we embrace a probabilistic perspective. A related but distinct concept is that of Mean Field Control (MFC), where the goal is to assign a strategy to all the agents at once so that the resulting population behavior is optimal concerning the cost imposed by a central planner. We refer to the excellent book by \cite{bensoussan2013mean} for a comparison between MFGs and MFC.\\
\indent In general, an optimal control for an MFC is not an equilibrium strategy for an MFG. Nevertheless, in many cases, the converse holds, and quantifying the differences between the two approaches is reminiscent of what is known as the price of anarchy, i.e., the added aggregate cost of allowing all players to choose their optimal strategy independently. The MFGs for which this happens are called potential MFGs (e.g., \cite{cecchin2022weak}). The model that we propose, while conceptually constructed as a MFG equilibrium, can be solved via a reformulation of MFC by using the results in \cite{graber2016linear}.   In particular, our model belongs to the class of linear-quadratic MFGs (e.g., \cite{bensoussan2016linear}) with common noise (e.g., \cite{pham2017dynamic} and \cite{pham2016linear}).   The common noise represents an inherent uncertainty in nature affecting simultaneously all the firms participating in the game (or being
controlled by a central planner).   We characterize the solution both in terms of a stochastic maximum principle (forward-backward system of stochastic differential equations (FBSDEs)) and Riccati equations.   In particular, similarly to \cite{fujii2022mean}, though their work is inspired by financial applications, when imposing the market clearing condition, we obtain an interesting form of FBSDEs of McKean-Vlasov type with common noise as a limit problem, involving the dependence on a conditional expectation. Therefore, the existence of a unique strong solution is proved by using the well-known Peng-Wu's continuation method \cite{peng1999fully}. In addition, we quantify the relation between the finite player game and its large population limit, as well as we show that the solution of the mean-field limit problem actually provides asymptotic market clearing in the large limit. Instead, if the carbon price process is given exogenously, the MFG solutions serve as $\epsilon$-Nash equilibria for the large player game because the game is solved by an optimal control problem \cite{graber2016linear}.\\
\indent The last part of the paper presents a numerical study of the proposed model, which is divided into two parts. In the first part, we analyse the role played by the environmental authority on the average level of production of a (representative firm).   In the second part, instead, we analyse the economics of competition. In particular, the representative firm faces a strategic trade-off between output reduction and pollution abatement under competition. The latter facilitates synchronization between the representative firm and the rest of the population in the sense that they agree to reduce output by using the pollution constraint; naturally, this synchronization mechanism is expected to work under a suitable range of constraints imposed by the pollution regulator, the one for which the impact of output reduction on the representative firm's profits dominates the cost of pollution abatement, of trading, and production.  Under monopoly, instead, the representative firm can no longer leverage on the competition with the population of firms to implement the previously described synchronization mechanism. Whence, the degree of competition plays a critical role in determining the economic consequences of pollution regulation.  In particular, our model captures a rich range of competitive markets -- with monopoly and Cournot oligopoly as special cases -- and several fundamental elements of pollution generation, abatement levels and costs, and regulation, which can serve as a basis for future research. \\
\indent We proceed as follows. Notation and basic objects are introduced in Section \ref{sec::notations}. In Section \ref{sec::theoreticalmodel}, we provide a precise description of the $N$-player game, where $N$ denotes the number of firms, together with the definition of $\epsilon$-Nash equilibria. In Section \ref{sec::mfgapproximation}, the limit dynamics for the $N$-player game is introduced. The corresponding notion of solution of the MFG is defined and discussed.  In Section \ref{sec::mfcapproximation}, the MFC problem associated with the MFG in Section \ref{sec::mfgapproximation} is introduced and discussed; we prove the solvability of the FBSDE of McKean-Vlasov type and the asymptotic market clearing condition. Section \ref{sec::numerical_illustrations} provides numerical results. Finally, in Section \ref{sec::futureresearch}, we give concluding remarks, discuss further extensions of the model and future directions of research. Additional results on linear-quadratic MFG and MFC are confined in Appendix \ref{app::LQMFG} and Appendix \ref{app::LQMFC}, respectively.
\section{Notations}
\label{sec::notations}
Because we are going to derive some broad-gauged results in Appendix \ref{app::LQMFG} and Appendix \ref{app::LQMFC}, the notation in this section will be quite general. Let $d, d_0, d_1, d_2 \in \mathbb{N}$, where $\mathbb{N}$ is the set of positive integers, which will be the dimensions of the space of private states, common noise values, idiosyncratic noise values and control actions, respectively. The $n$-dimensional Euclidean space $\mathbb{R}^n$, with $n \in \mathbb{N}$ a generic index, is equipped with the standard Euclidean norm, always indicated by $|\cdot|$. Moreover, we denote by $\mathcal{S}^{n}$ the set of all $n \times n$ symmetric matrices with real entries. In general, we identify the space of all $n \times m$ dimensional matrices with real entries with $\mathbb{R}^{n \times m}$.\\
\indent Let $N \in \mathbb{N}$. Let $(\overline{\Omega}^{0}, \overline{\mathcal{F}}^{0}, \overline{\mathbb{P}}^{0})$ and $(\overline{\Omega}^{i}, \overline{\mathcal{F}}^{i}, \overline{\mathbb{P}}^{i})_{i=1}^N$ be $(N+1)$ complete probability spaces equipped with filtrations $(\overline{\mathcal{F}}^{i}_t)$, $i \in \{0,\ldots,N\}$. In particular, $(\overline{\mathcal{F}}^{0}_t)$ is the completion of the filtration generated by the $d_{0}$-dimensional Brownian motion $(W^0(t))$, and, for each $i \in \{0,\ldots,N\}$, $(\overline{\mathcal{F}}^{i}_t)$ is the complete and right-continuous augmentation of the filtration generated by $d_1$-dimensional Brownian motions $(W^i(t))$, as well as a $(W^i(t))$-independent $d$-dimensional square-integrable random variables $(\xi^{i})_{i=1}^N$, which have by assumption the same law. Finally, we introduce the product probability spaces $\Omega^{i} = \overline{\Omega}^{0} \times \overline{\Omega}^{i}$, $\mathcal{F}^{i}$, $(\mathcal{F}_t^{i})$, $\mathbb{P}^{i}$, $i \in \{1,\ldots,N\}$, where $(\mathcal{F}^{i}, \mathbb{P}^{i})$ is the completion of $(\overline{\mathcal{F}}^{0} \otimes \overline{\mathcal{F}}^{i}, \overline{\mathbb{P}}^{0} \otimes \overline{\mathbb{P}}^{i})$ and $(\mathcal{F}_t^{i})$ is the complete and right-continuous augmentation of $(\overline{\mathcal{F}}_t^{0} \otimes \overline{\mathcal{F}}_t^{i})$. In the same way, we define the complete probability space $(\Omega, \mathcal{F}, \mathbb{P})$ equipped with $(\mathcal{F}_t)$ satisfying the usual conditions as a product of $(\overline{\Omega}^{i}, \overline{\mathcal{F}}^{i}, \overline{\mathbb{P}}^{i}, (\overline{\mathcal{F}}_t^i))_{i=0}^{N}$.\\
\indent Let $\Gamma$ be a closed and convex subset of $\mathbb{R}^{d_2}$, the set of control actions, or action space. Moreover, given a probability space $(\Omega, \mathcal{G}, \mathbb{P})$ and a filtration $(\mathcal{G}_t)$ in $\mathcal{G}$, let:
\begin{enumerate}[label=(S\arabic*)]
\item\label{itm:S1} $\mathbb{L}^2(\mathcal{G};\mathbb{R}^{n})$ be the set of $\mathbb{R}^{n}$-valued $\mathcal{G}$-measurable square-integrable random variables $U$. 
\item\label{itm:S2} $\mathbb{S}^2((\mathcal{G}_t);\mathbb{R}^{n})$ be the set of $\mathbb{R}^{n}$-valued $(\mathcal{G}_t)$-adapted continuous processes $(U(t))$ such that $\|U\|_{\mathbb{S}^2}:=\mathbb{E}\left[\sup_{t \in [0,T]}|U(t)|^2\right]^{\frac{1}{2}}<\infty$.
\item\label{itm:S3} $\mathbb{H}^2((\mathcal{G}_t);\mathbb{R}^{n})$ be the set of $\mathbb{R}^{n}$-valued $(\mathcal{G}_t)$-progressively measurable processes $(U(t))$ such that $\|U\|_{\mathbb{H}^2}:=\mathbb{E}\left[\int_{0}^{T} |U(t)|^2\,dt\right]<\infty$. 
\end{enumerate}
\noindent We denote by $\mathcal{L}(U)$ the law of a random variable $U$, and by $\overline{U}(s)=\mathbb{E}[U(s)|\overline{\mathcal{F}}_s^{0}]$ the conditional expectation of $U(s)$ given $W^0(s)$.\\
\indent For $\mathcal{S}$ a Polish space, let $\mathcal{P}(\mathcal{S})$ denote the space of probability measures on $\mathcal{B}(\mathcal{S})$, the Borel sets of $\mathcal{S}$. For $s \in \mathcal{S}$, let $\delta_{s}$ indicate the Dirac measure concentrated in $s$. Equip $\mathcal{P}(\mathcal{S})$ with the topology of weak convergence of probability measures. Then $\mathcal{P}(\mathcal{S})$ is again a Polish space. Let $d_{\mathcal{S}}$ be a metric compatible with the topology of $\mathcal{S}$ such that $(\mathcal{S}, d_{\mathcal{S}})$ is a complete and separable metric space. Given a complete compatible metric $d_{\mathcal{S}}$ on $\mathcal{S}$, we also consider the space of probability measures on $\mathcal{B}(\mathcal{S})$ with finite $p$-moments, with $p \geq 1$:    
\begin{equation*}
    \mathcal{P}_p(\mathcal{S}) \doteq \left(\nu \in \mathcal{P}(\mathcal{S})\,:\,\exists s_0 \in \mathcal{S}\,:\,\int_{\mathcal{S}}d_{\mathcal{S}}(s, s_0)^p\nu(\,ds)<\infty\right).
\end{equation*}
In particular, $\mathcal{P}_p(\mathcal{S})$ is a Polish space. A compatible complete metric is given by: 
\begin{equation*}
    d_{\mathcal{P}_p(\mathcal{S})}(\nu,\tilde{\nu})\doteq\left(\inf_{\substack{\alpha \in \mathcal{P}(\mathcal{S}\times\mathcal{S})\,:\,[\alpha]_1 = \nu\,\text{and}\,[\alpha]_2 =\tilde{\nu}}}\int_{\mathcal{S}\times\mathcal{S}}d_{\mathcal{S}}(s,\tilde{s})^p\alpha(\,ds, \,d\tilde{s})\right)^{1/p},
\end{equation*}
where $[\alpha]_1$ ($[\alpha]_2$) denotes the first (second) marginal of $\alpha$; $d_{\mathcal{P}_p(\mathcal{S})}$ is often referred to as the $p$-Wasserstein (or Vasershtein) metric.
\section{Theoretical model}\label{sec::theoreticalmodel}
This section proposes a stochastic equilibrium model for environmental markets accounting for the design of today's emission system. Our model is an integrated production-pollution-abatement model (e.g., \cite{anand2020pollution}) in continuous time, which combines a model of competing producers with a pollution model that includes pollution generation and abatement. Precisely, we consider $N \geq 1$, $N \in \mathbb{N}$, indistinguishable competing, profit-maximizing firms, whose carbon emissions are regulated in a cap-and-trade fashion. Although the regulation of carbon allowances occurs over several periods and allowances can be banked from one period to the other, we follow \cite{barrieu2014market}, \cite{carmona2010marke}, \cite{fell2010alternative}, \cite{aid2023optimal}, \cite{kollenberg2016emissions}, \cite{kollenberg2016dynamics}, and we focus on a single period of $T$ years at the end of which compliance is assessed.   We assume that capital is created according to a standard AK model with a positive depreciation rate of capital and with a positive technological level $A_k^{i}$; see the term ``revenues" in the cost functional in Equation \eqref{eq::costfunctional}. Let $K^{i}(t)$ be the level of capital at time $t$ of firm $i$, for $i=1,\ldots,N$. We assume that the dynamics of $(K^i(t))$ is described by the following stochastic differential equation (SDE)    
\begin{equation}\label{eq::capitaldynamics}
    dK^i(t) = (\kappa_f^{i} K^{f,i}(t) + \kappa_g^{i} K^{g,i}(t)-\delta^{i} K^{i}(t))\,dt + \sigma K^{i}(t)\,dW^{1,i}(t),\quad K^{i}(0)=\kappa_0,
\end{equation}
where $\kappa_f^{i}, \kappa_g^{i}, \sigma, \delta^{i}$ are positive constants. $K^{f,i}(t)$ and $K^{g,i}(t)$ represent the amount of fossil-fuel and green energy based level of capital used by firm $i$ for capital creation. $\delta^{i}$ is the depreciation rate of capital. The quantity $\sigma K^{i}(t)$ represents the standard deviation of the level of capital of firm $i$ and depends on $K^{i}(t)$ itself; $(W^{1,i}(t))$ is a standard Brownian motion. Firm $i$,  for $i=1, \ldots, N$, controls the level of capital trend via $K^{f,i}(t)$ and $K^{g,i}(t)$, which increases at a rate $\kappa_f^{i} K^{f,i}(t) + \kappa_g^{i} K^{g,i}(t)-\delta^{i} K^{i}(t)$, while the volatility is uncontrolled. We assume that $K^{f,i}(t), K^{g,i}(t) \in \mathbb{H}^2((\mathcal{F}_t^{i}),\mathbb{R})$. Moreover, we assume that firm $i$, for $i=1, \ldots, N$, faces quadratic costs, say $C^i(K^{f})$ and $C^i(K^{g})$, in the capital levels\footnote{Admittedly, we have not found a precise reference for this assumption, but it seems reasonable and necessary to obtain a linear quadratic form for our problem.}. More precisely, we have:
\begin{equation}\label{eq::productioncosts}
    C^{i,f}(K^{f}) = c_{1,1}^{i}K^{f,i} + c_{1,2}^{i}(K^{f,i})^2,\,\,\text{and}\,\,C^{i,g}(K^{g}) = c_{2,1}^{i}K^{g,i} + c_{2,2}^{i}(K^{g,i})^2.
\end{equation}
Firms generate pollution as a byproduct of the production process. Let $E_i$ denote the pollution, in terms of carbon emissions, generated by firm \textit{i} prior to any investment in abatement; we name $E_i$ as business-as-usual (BAU) emissions\footnote{Henceforth, to keep the reading, we use the term emission instead of carbon emission.}. Clearly, $E^{i}$ must be increasing in the capital level, and it seems reasonable to assume that it is a function of $K^{f,i}(t)$. We follow \cite{anand2020pollution}, and we impose a linear relation between the BAU emissions and the latter by assuming:
\begin{equation}\label{eq::bau}
    dE^i(t) = \textcolor{black}{\kappa_e^{i}} K^{f,i}(t)\,dt + \sigma_1^{i} d\widetilde{W}^{2,i}(t),\quad E^{i}(0)=E_0,
\end{equation}
where \textcolor{black}{$k_e^i>0$ characterizes the linear relationship,} and $\widetilde{W}^{2,i}(t)=\sqrt{1-\rho_i^2} W^{2,i}(t) + \rho_i W^{0,1}(t)$, $\rho_i \in [0,1]$, so that the correlation between $\widetilde{W}^{2,i}(t)$ and $\widetilde{W}^{2,j}(t)$ is $r_{i j}:=\rho_i\rho_j$. The increments of $W^{2,i}$ and $W^{0,1}$ are independent, and independent from the increments of $W^{1,i}$. The noise decomposition captures the fact that the emission of firm $i$ is affected by its own idiosyncratic noise $d W^{2,i}$ and by the common economic business cycle $d W^{0,1}$. A similar model for the BAU emissions is employed in \cite{aid2023optimal}. Here, we also assume the presence of a short-term emission shock $d e^{i}(t) = \sigma_2^{i} dW^{3,i}(t)$, $e^{i}(0)=e_0$ which may represent, e.g., the outage of a carbon-friendly production unit that is instantaneously replaced by a more polluting one (e.g., \cite{hitzemann2018equilibrium}). Increments of $W^{3,i}$ are independents from the increments of $W^{1,i}$, $W^{2,i}$, $W^{0,1}$. Therefore, the total emission dynamics is given by
\begin{equation}\label{eq::totalemission}
    d\widetilde{E}^i(t) = dE^{i}(t) + de^{i}(t) = (\textcolor{black}{\kappa_e^{i}} K^{f,i}(t)\,dt + \sigma_1^{i} d\widetilde{W}^{2,i}(t)) + (\sigma_2^{i} dW^{3,i}(t)),
\end{equation}
where $\widetilde{E}^i(0)=\widetilde{E}_0$. We describe now our model of pollution abatement. Firm $i$, for $i=1, \ldots, N$, breaks down emissions via two complementary notions: (i) abatement level, and (ii) abatement cost. Under the abatement effort rate $\alpha^{i}(t) \in \mathbb{H}^2((\mathcal{F}_t^{i}), \mathbb{R})$, emissions of firm $i$ becomes:       
\begin{equation}\label{eq::total}
    d\widetilde{E}^{i,\alpha}(t) = (\textcolor{black}{\kappa_e^{i}} K^{f,i}(t)-\alpha^{i}(t))\,dt + \sigma_1^{i} d\widetilde{W}^{2,i}(t)) + (\sigma_2^{i} dW^{3,i}(t)).
\end{equation}
In this way, the firm controls its emission trend, which increases at a rate $(\textcolor{black}{\kappa_e^{i}} K^{f,i}(t)-\alpha^{i}(t))$; on the other hand, the volatility remains uncontrolled.   Notice that, contrary to \cite{anand2020pollution}, our model does not assume that pollution is observable and a deterministic function of the level of capital. 
As regards the abatement costs, the extant literature assumes that pollution abatement costs are increasing and quadratic (see, e.g., \cite{subramanian2007compliance}), or at least convex increasing in the quantity of emission abated, which is $\alpha^{i}(t)$ in our model (see, e.g., \cite{levi2004converting}). This is because usually the initial units of emissions are easy to abate, but once the low-hanging
fruit have been exploited, pollution abatement becomes increasingly difficult (see, e.g., \cite{anand2020pollution}). Thus, we assume the following quadratic form for the abatement cost function:
\begin{equation}\label{eq::costabatement}
    C_i(\alpha) = h_i \alpha^i + \frac{1}{2 \eta_i} (\alpha^i)^2,\quad h_i, \eta_i > 0,
\end{equation}
where the constant $\eta_i$ is positively correlated with the flexibility of the abatement process and, therefore, with the reversibility of the decision. Before describing the dynamics for the bank account $\widetilde{X}$, we detail the competition mechanism in our model. We assume that firm $i$, for $i=1, \ldots, N$, faces linear inverse demand curve $p^{i}(t):=p(K^{i}(t),K^{-i}(t))$, which can be derived by a suitable quadratic utility function\footnote{It is not difficult to see that such a linear demand function can be derived from a quadratic utility function of the following form -- for the sake of simplicity we denote by $q_i$ the quantity produced by firm $i$ --: $U(q) = a\sum_{i=1}^{N}q_i-C_1\left(\sum_{i=1}^N q_i^2 + C_2 \sum_{\substack{j=1\\j\neq i}}^{N-1} q_i q_j\right)-q \cdot p$, where $C_1 = \frac{b}{2}\left(1-\gamma\left(1-\frac{1}{N}\right)\right)$ and $C_2 = \frac{\gamma}{N(1-\gamma)+\gamma}$ and solving the utility maximization problem for a representative consumer; the first-order condition gives Equation \eqref{eq::demandcurve}.}, given by 
\begin{equation}\label{eq::demandcurve}
    p^{i}(t) = a - b \textcolor{black}{(1-\gamma)A_k^{i}K^{i}(t)} - b \gamma \frac{1}{N}\sum_{\substack{j=1}}^{N} \textcolor{black}{A_k^{j} K^j(t)},
\end{equation}
where $a$ and $b$ are positive constants, and $\gamma \in [0,1]$ captures the degree of production substitution, and hence competition, and \textcolor{black}{$A_k^{i}$ represents the technological level and hence $A_k^{i} K^i(t)$ is the production function of the firm $i$}. We continue with the following observation: the demand function in Equation \eqref{eq::demandcurve} comprises a range of competitive markets, in which monopoly and Cournot oligopoly are polar cases.   Indeed, it can be written in the following way:
\begin{equation*}
\begin{split}
    p^{i}(t) = a - b \left(1 - \gamma\left( 1 - \frac{1}{N} \right)\right) A_k^{i} K^{i}(t) - b \gamma \frac{1}{N} \sum_{\substack{j=1\\j\neq i}}^{N-1} A_k^{j} K^j(t)
\end{split}
\end{equation*}
When $\gamma=0$ (i.e., monopoly), then $p^{i}(t)= a - b A_k^{i} K^{i}(t)$.  When $\gamma = 1$ (i.e., Cornout oligopoly), then $p^{i}(t) = a - b \gamma \frac{1}{N} \sum_{\substack{j=1}}^{N} A_k^{j} K^j(t)$ and so there is perfect competition. On the other hand, $\gamma \in (0,1)$ captures the degree of substitution.    In particular, the price $p^{i}(t)$ is influenced more by the level of production of the corresponding good $i$ with respect to the total quantity produced by all the other firms (firm $i$ excluded); indeed $b\left(1-\gamma\left(1-\frac{1}{N}\right)\right) > b \gamma \frac{1}{N}$ for every $\gamma \in (0,1)$, which is a natural (although myopic in the sense of emotions) postulation. Notice that this type of asymmetry is not in contrast with the symmetry required by the MFG framework. We now turn to the description of the dynamics for the bank account.\\
\indent The bank account's dynamics is specified as in \cite{aid2023optimal}. We assume that the regulator opens for each firm $i$, $i=1,\ldots,N$, at $t=0$ a bank account $\widetilde{X}^{i}$ and allocates permits, which are represented by the cumulative process $\widetilde{A}^{i}$. The dynamics of the bank account is given by: 
\begin{equation}\label{eq::bankaccount}
    d\widetilde{X}^i_t = \beta^i(t)\,dt + d\widetilde{A}^{i}(t) - d\widetilde{E}^{i,\alpha}(t),
\end{equation}
where $\beta^i(t)\in\mathbb{H}^2((\mathcal{F}^{i}_t),\mathbb{R})$ is the trading rate in the liquid allowance market; emissions, trade, and bank account are measured in tons or in multiples of tons). We assume that $\widetilde{A}^{i}$ has the following dynamics:
\begin{equation}\label{eq::cumulativeallowance}
    d\widetilde{A}^{i}(t) = \tilde{a}^i(t)\,dt + \tilde{\sigma}_2\,dW^{0,2}(t),\quad\widetilde{A}^{i}(0)=\widetilde{A}_0.
\end{equation}
where $(W^{0,2}(t))$ is a standard Brownian motion common to all firms independent from all the other noises involved in the model.    The fact that $(W^{0,2}(t))$ is independent from $(W^{0,1}(t))$ is admittedly a very heavy assumption; the case of correlated common noises is an important subject for future research; see the discussion in Section \ref{sec::futureresearch}.  The choice of a dynamic allocation mechanism, instead of a static one, is because of the presence of the common shock in the BAU emissions dynamic, otherwise there would be no benefit from the implementation of a dynamic allocation scheme; see \cite{aid2023optimal}. The quantity $\tilde{a}$ represents the rate. Notice that the sign of $\widetilde{A}^{i}$ can be either negative, meaning that the regulator is placing a penalty on the firm bank account, or positive, meaning that the regulator is giving true permits to the firm. Admittedly, the dynamics for $\widetilde{A}^{i}(t)$ could be more general, by adding, for instance, a pure jump part (see \cite{aid2023optimal}, Section 2). Firm $i$, $i=1, \ldots, N$, controls the trading rate $\beta^i(t)$.\\
\indent Now, let $(\overline{\omega}_t)$ be the price of allowances. It can be either exogenous, for example described by the Black-Scholes model, or endogenous, in the sense that it is determined by the fundamental condition of the market. More precisely, the total number of tons of emissions being purchased by a firm via the emission exchange at a given time must be equal to the number of those being sold by others via the emission exchange at the same time. In particular, the balance between the sales and the purchases, called market clearing conditions, must hold at any point in time. The market clearing condition reads as: 
\begin{equation}\label{eq::marketclearingcondition}
    \sum_{i=1}^{N}\hat{\beta}^{i}(t) = 0,\quad dt \otimes d\mathbb{P}-\text{a.e.},
\end{equation}
where $\hat{\beta}_t^{i}$ is the trading rate of the $i$th firm. We will be interested in finding an appropriate price process $(\overline{\omega}_t)$ so that it achieves the market clearing condition among the rational firms.  More precisely, as stated in the introduction, we are interested in finding a market equilibrium, which is defined as trading strategies and market price such that each firm has minimized its criteria and the market clears for the market price. We assume that firms are price-takers, an assumption that is in line with the large number of companies regulated under today's emission trading systems\footnote{Indeed, more than 5,500 firms are regulated under the EU ETS (e.g., \cite{calel2016environmental})}, and minimize their cost functional by finding an optimal trade-off between implementing abatement measures, trading permits in the market, and taking the risk of penalty payments. Another possibility is to model the price as an underlying martingale plus a drift representing a form of permanent price impact (e.g., \cite{carmona2015probabilistic}):
\begin{equation*}
    d\overline{\omega}_t = \frac{\tilde{\nu}}{N} \sum_{i=1}^{N} c^{'}(\beta^{i}(t))\,dt + \sigma_0 dW_t,
\end{equation*}
which in the particular case $c(\beta)=\beta^2$ corresponds to the influential Almgren-Chriss model (\cite{almgren2001optimal}); $\tilde{\nu}>0$. Notice that in this case, we would obtain a mean-field game of control with common noise. We follow \cite{fujii2022mean}, and we state that the process $(\overline{\omega}_t)$ is likely to be given by a $\overline{\mathcal{F}}^{0}$-progressively measurable process since the effects from the idiosyncratic parts from many firms are expected to be canceled out. Moreover, we assume that if firm $i$, $i=1,\ldots,N$, places a (market) order of $\beta^i(t)$ when the market price is $(\overline{\omega}_t)$, then the cost incurred by the firms is given by
\begin{equation}\label{eq::costoftrading}
    \beta^{i}(t)\overline{\omega}_t + \frac{1}{2\nu}(\beta^{i}(t))^2,
\end{equation}
where $\nu>0$ is the (constant) market depth parameter which takes into account a price impact effect as in the original work of \cite{kyle1985continuous}. We make now the following remark. 
\begin{remark}
In our model, we do not enforce constraints on the controls because we give priority to finding explicit solutions to our problem. Indeed, our goal is to analyze the qualitative behaviour of the system, a goal achieved in a satisfactory way in the numerical section. A similar ``relaxation" of the problem is done also in, e.g., \cite{aid2016optimal2} and \cite{alasseur2020extended}.    
\end{remark}
Let $\mathcal{H}_1^{N} := \mathbb{H}^2((\mathcal{F}_t^{i}),\mathbb{R}^4)$, and let $\mathcal{H}_N^{N}$ the set of all $N$-dimensional vectors $\boldsymbol{v}^{N}:=(v^{N,1},\ldots,v^{N,N})$ such that $v^{N,i} \in \mathcal{H}_1^{N}$, with the vector $v^{N, i}(t)$ defined as $v^{N, i}(t):=(K^{f,i}(t), K^{g,i}(t), \alpha^{i}(t), \beta^{i}(t))$, for $i = 1, \ldots, N$. Each element of $\mathcal{H}_N^{N}$ is called strategy vector. Under the assumption of risk neutrality, firm $i$, for $i = 1, \ldots, N$, evaluates a strategy vector $\boldsymbol{v}^N \in \mathcal{H}_N^{N}$ according to its expected cost (notice that we highlight the dependence on the vector $\boldsymbol{v}^{N}$ in the term in Equation \eqref{eq::demandcurve})
\begin{equation}\label{eq::costfunctional}
\begin{split}
    \mathcal{J}^{i}(\boldsymbol{v}^N)  &= \mathbb{E}\Bigg[\int_{0}^{T}\underbrace{-p^{i}(t, \boldsymbol{v}^N)\textcolor{black}{A_k^{i}K^{i}(t)}}_{- \text{revenues}} + \underbrace{\beta^i(t)\overline{\omega}_t+\frac{1}{2\nu}(\beta^{i}(t))^2}_{\text{cost of trading \eqref{eq::costoftrading}}} + \underbrace{C_i(\alpha(t))}_{\text{abatement cost \eqref{eq::costabatement}}}\\
                            &+ \underbrace{C^{i,f}(K^{f}(t))+C^{i,g}(K^{g}(t))}_{\text{costs of production \eqref{eq::productioncosts}}}\,dt + \underbrace{\lambda(\widetilde{X}^i(T))^2}_{\text{final penalization}}\Bigg]
\end{split}
\end{equation}
where the term $\lambda (\widetilde{X}_T^i)^2$, with $\lambda>0$, is the terminal monetary penalty on the bank accounts set by the regulator, which is a regularized version of the terminal cap penalty function applied in practice, which is zero if the firm is compliant and linear otherwise. However, as cited in \cite{carmona2009optimal}, optimal strategies cannot be found in closed form in this case. Through the previous penalty, the firm is going to pay both if its bank account is above or below the compliance zero level. In particular, notice that the Brownian motions in the dynamics for $\widetilde{A}^{i}$ and $\widetilde{E}^{i,\alpha}$ are chosen for tractability reason, because of the additive quadratic penalty $\lambda (\widetilde{X}_T^i)^2$ in the cost functional \eqref{eq::costfunctional}.\\
\indent Finally, the dynamics for the capital level and the bank account are given by:
\begin{equation}\label{eq::statedynamics}
    \begin{cases}
        \begin{split}
        dK^i(t)&= (\kappa_f^{i} K^{f,i}(t) + \kappa_g^{i} K^{g,i}(t)-\delta^{i} K^i(t))\,dt + \sigma K^{i}(t)\,dW^{1,i}(t),\quad K^{i}(0)=\kappa_0.\\
        d\widetilde{X}^{i}(t)&= (\beta^i(t)+\tilde{a}^i(t)+\alpha^i(t)-\textcolor{black}{\kappa_e^{i}} K^{f,i}(t))\,dt + \tilde{\sigma}_2\,dW^{0,2}(t)-\sigma_1^{i} \rho_i\,dW^{0,1}(t)\\
                             &-\sigma_1^{i}\sqrt{1-\rho_i^2}\,dW^{2,i}(t)-\sigma_2^{i} dW^{3,i}(t),\quad\widetilde{X}^{i}(0)=\widetilde{X}_0.
    \end{split}
    \end{cases}
\end{equation}
In addition, see, again, Equation \eqref{eq::demandcurve}, we have:
\begin{equation*}
    p^{i}(t,\boldsymbol{v}^N) = a - b \textcolor{black}{(1-\gamma)A_k^{i}K^{i}(t)} - b \gamma \frac{1}{N}\sum_{\substack{j=1}}^{N} \textcolor{black}{A_k^{j} K^j(t)} K^j(t).
\end{equation*}
In the present paper, we take a non-cooperative game point of view. The aim of each firm $i$, for $i=1,\ldots,N$, is to minimize the cost in Equation \eqref{eq::costfunctional} by controlling the level of capital linked to fossil-fuel and green technologies, the quantity of emissions abated, and the trading rate in the allowance market. In a non-cooperative game setting, we are led to the analysis of a non-zero sum stochastic game with $N$ players and to the search of $\epsilon$-Nash equilibrium. In the next definition, we use the standard notation $[v^{N,-i}, v]$ to indicate a strategy vector equal to $\boldsymbol{v}^N$ for all firms but the $i$-th, which deviates by playing $v \in \mathcal{H}_1^{N}$ instead.  
\begin{definition}[$\epsilon$-Nash equilibrium for the $N$-players game]\label{def::nashequilibria}
Let $\epsilon \geq 0$. A strategy vector $\boldsymbol{v}^N \in \mathcal{H}_N^{N}$ is called $\epsilon$-Nash equilibrium for the $N$-player game if for every $i \in \{1,\ldots,N\}$ and for any deviation $v \in \mathcal{H}_1^{N}$ we have:
\begin{equation*}
    \mathcal{J}^{i}(\boldsymbol{v}^N) \leq \mathcal{J}^{i}([v^{N,-i}, v]) + \epsilon.
\end{equation*}
\end{definition}

\noindent Before proceeding, we summarize our notation in Table \ref{tab:par_model}.

\begin{table}[htbp]
    \centering
    \caption{The table summarizes and explains the notations used in the theoretical model. The index $i$ generally refers to company $i$ in the economy, and by omitting the $i$ we refer to an economy-wide version of the respective variable. Time dependence is indicated in parentheses. BAU stands for Business As Usual. GBM stands for Geometric Brownian Motion.}
    \begin{tabular}{ll} \toprule
        \textbf{Notation.} & \textbf{Synthetic description} \\ \midrule
        $K^{i}(t)$         & Level of capital of company $i$ at time $t$.\\
        $K^{f,i}(t)$       & Fossil-fuel level of capital of company $i$ implemented at time $t$.\\
        $K^{g,i}(t)$       & Green level of capital of company $i$ implemented at time $t$.\\
        $\kappa_f^{i}$     & Amount of fossil-fuel level of capital of company $i$ implemented $\forall\,t$.\\
        $\kappa_g^{i}$     & Amount of green level of capital of company $i$ implemented $\forall\,t$.\\
        $\sigma$            & Volatility of the level of capital of company $i$.\\
        $C^{i,f}(K^{f})$    & Cost associated to the fossil-fuel level of capital; quadratic; $c_{1,1}^{i}, c_{1,2}^{i}$. \\
        $C^{i,g}(K^{g})$    & Cost associated to the green level of capital; quadratic; $c_{2,1}^{i}, c_{2,2}^{i}$. \\
        $E^{i}(t)$          & BAU emissions of company $i$ at time $t$.\\
        $\sigma_1^{i}$          & Volatility of BAU emissions of company $i$.\\
        $\rho_i$            & Correlation between the idiosyncratic and the common noise of $E^{i}(t)$.\\   
        $e^{i}(t)$          & Short-term emission shocks of company $i$ at time $t$.\\
        $\sigma_2^{i}$          & Volatility of short-term emission shocks of company $i$.\\
        $\widetilde{E}^{i}(t)$ & BAU + short-term emission of company $i$ at time $t$.\\
        $\alpha^{i}(t)$     & Emission abatement effort rate of company $i$ implemented at time $t$.\\
        $C_i(\alpha)$       & Abatement cost function function; linear-quadratic; $h_i$, $1/(2 \eta_i)$.\\
        $p^{i}(t)$          & Price of the good produced by company $i$ at time $t$; inverse demand; $a$, $b$.\\ 
        $\gamma$            & Degree of production substitution; $\gamma \in [0,1]$.\\
        $\widetilde{X}^{i}(t)$ & Carbon emission bank account of company $i$ at time $t$.\\ 
        $\beta^{i}(t)$      & Trading rate in $\widetilde{X}^{i}(t)$ of company $i$ implemented at time $t$.\\  
        $\widetilde{A}^{i}(t)$ & Allocated permits of company $i$ at time $t$; GBM, $\tilde{a}^i(t),\,\tilde{\sigma}_2$.\\
        $\nu$               & Market depth.\\
        $\lambda$           & Terminal monetary penalty on the bank account set by the regulator.\\
        \textcolor{black}{$\delta^{i}$}           & Depreciation rate of the capital stock.\\
        \textcolor{black}{$A_k^{i}$}           & Productivity level of the capital.\\
        \bottomrule
    \end{tabular}
 \label{tab:par_model}
\end{table}

\section{A mean field game approximation with common noise for the $N$ player game.}\label{sec::mfgapproximation}
In this section, we consider the filtered probability space $(\Omega, \mathcal{F}, \mathbb{P}, (\mathcal{F}_t))$, $d_1=3$ Brownian motions $(W^{j}(t))$, $1\leq j \leq 3$, which are mutually independent and independent from the completion of the filtration $(\overline{\mathcal{F}}_t^0)$, defined in Section \ref{sec::notations}.    In order to find the expression for the MFG approximation, we follow \cite{lacker2019mean} and we introduce the \textit{type vectors} $\zeta_i = (\kappa_f^{i}, \kappa_g^{i}, \sigma_1^{i}, \rho_i, \sigma_2^{i}, \delta^{i}, A_k^{i})$, for $i = 1,\ldots,N$.  As said in the introduction, the finite set of firms becomes a continuum and competes with the rest of the (infinite) population.   In particular, the MFG is defined in terms of a representative firm who is assigned a \textit{random}-type vector $\zeta = (\kappa_f, \kappa_g, \sigma_1, \rho, \sigma_2, \delta, A_k)$ at time zero, which encodes the distribution of the (continuum of) firms' types. Formally, cfr. \cite{lacker2019mean}, the type vector $\zeta_i$ induces an empirical measure called the \textit{type distribution}, which is the probability measure on the type space $\mathcal{Z}^{e}:=\mathbb{R} \times \mathbb{R} \times \mathbb{R}_{+} \times [0,1] \times \mathbb{R}_{+} \times [0,1] \times \mathbb{R}_{+}$, given by $m_N(A) = \frac{1}{N}\sum_{i=1}^{N} \mathrm{1}_{A}(\zeta_i),$ for Borel sets $A \subset \mathcal{Z}^{e}$. We assume now that as the number of firms becomes large, $N\rightarrow\infty$, the just introduced empirical measure $m_N$ has a weak limit $m$, in the sense that $\int_{\mathcal{Z}^{e}} \varphi dm_N \rightarrow \int_{\mathcal{Z}^{e}} \varphi dm$ for every bounded continuous function $\varphi$ on $\mathcal{Z}^{e}$; this holds almost surely if the $\zeta_i^{'}$ are i.i.d. samples from $m$.  In particular, the probability measure $m$ represents the distribution of type parameters $\zeta$ among the continuum of firms. At this point, let $x_0=(k_0, \tilde{x}_0)$ be a random vector which is independent from $(\overline{\mathcal{F}}_t^0)$. The representative firm's level of capital $K$ and bank account $\widetilde{X}$ solve
\begin{equation}\label{eq::statedynamicsmfglimit}
    \begin{cases}
        \begin{split}
            dK(t)&=(\kappa_f K^{f}(t) + k_g K^{g}(t)-\delta K^i(t))\,dt +\sigma K(t)\,dt,\quad K(0)=\kappa_0.\\
            d\widetilde{X}(t)&=(\beta(t) + \tilde{a}(t) + \alpha(t) - \textcolor{black}{\kappa_e} K^f(t))\,dt + \tilde{\sigma}_2 dW^{0,2}(t)-\sigma_1 \rho dW^{0,1}(t)\\
                             &-\sigma_1\sqrt{1-\rho^2}dW^{2}(t)-\sigma_2 dW^{3}(t),\quad \tilde{X}(0)=\tilde{x}_0,
        \end{split}
    \end{cases}
\end{equation}
where $v(t):=(K^{f,i}(t), K^{g,i}(t), \alpha^{i}(t), \beta^{i}(t))$ belongs to the space $\mathbb{H}^2((\mathcal{F}_t),\mathbb{R}^4)$. Moreover, by setting $\overline{K}(t)=\mathbb{E}[K(t)|\overline{\mathcal{F}}_s^0]$, with $s \leq t$, we denote 
\begin{equation*}
    p^{\overline{K}}(t) = a - b\textcolor{black}{(1-\gamma)A_kK(t)} - b \gamma \textcolor{black}{A_k\overline{K}(t)},
\end{equation*}
where we assumed that for a large number of firms $N$, we approximate the dynamics in Equation \eqref{eq::demandcurve} by the expression in the previous equation, where we used directly the quantity $\overline{K}(t)$ instead of a generic $(\overline{\mathcal{F}}_t^0)$ adapted real-valued process since the dynamics of $p^{i}$ is uncontrolled (see, also, the discussion in \cite{alasseur2020extended}, Section 3, Page 653).\\
We now consider the following cost functional: 
\begin{equation}\label{eq::functionalmfglimit}
    \begin{split}
        \mathcal{J}^{NE}(v;\overline{K})&= \mathbb{E}\Bigg[\int_{0}^{T}- p^{\overline{K}}(t) K (t) + \beta(t)\overline{\omega}_t+\frac{1}{2\nu}(\beta(t))^2 + C(\alpha(t))\\
                                &+ C^{f}(K^{f}(t))+C^{g}(K^{g}(t))\,dt + \lambda(\widetilde{X}(T))^2\Bigg],
    \end{split}
\end{equation}
where the superscript $\text{NE}$ stands for Nash equilibrium. Equations \eqref{eq::statedynamicsmfglimit} and \eqref{eq::functionalmfglimit} represent a MFG of linear quadratic type which fits into the framework studied in \cite{graber2016linear}, Section 3, apart from the presence of terms of order zero in both the private state dynamics $X(t):=(K(t),\widetilde{X}(t))^T$ and the running cost functional; see the discussion in Appendix \ref{app::LQMFG} and  Appendix \ref{app::LQMFC}. For the reader's convenience and to maintain the present work as self-contained as possible, Appendix \ref{app::LQMFG} presents the class of linear quadratic MFGs considered here. In particular, by using the notation in Appendix \ref{app::LQMFG}, the non-zero matrices characterizing the dynamics of $X(t)$ are the following:
\begin{equation}\label{eq:matricesdsexogenousprice}
    A_0(s) =
    \begin{bmatrix}
            0\\
 \tilde{a}(s)\\  
    \end{bmatrix},\,\,
    \textcolor{black}{A =
    \begin{bmatrix}
        -\delta &    0\\
0& 0\\  
    \end{bmatrix}},\,\,
    B = 
    \begin{bmatrix}
        \kappa_f & \kappa_g & 0 & 0 \\
       \textcolor{black}{-\kappa_e} &                0 & 1 & 1 \\
    \end{bmatrix},\,\,
\end{equation}
\begin{equation}\label{eq:matricesdWexogenousprice}
    C_{0,2} =
    \begin{bmatrix}
            0\\
            -\sigma_1\sqrt{1-\rho^2}\\  
    \end{bmatrix},\,\, 
    C_{0,3} =
    \begin{bmatrix}
            0\\
            -\sigma_2\\  
    \end{bmatrix},\,\,
    C_1  =
    \begin{bmatrix}
   \sigma & 0\\
        0 & 0\\
    \end{bmatrix},\,\,
\end{equation}
\begin{equation}\label{eq:matricesdW0exogenous}
\hspace{-4cm}
    F_{0,1} =
    \begin{bmatrix}
            0\\
-\sigma_1\rho\\  
    \end{bmatrix},\,\,
    F_{0,2} =
    \begin{bmatrix}
              0\\
\tilde{\sigma}_2\\  
    \end{bmatrix}.
\end{equation}
\noindent Whereas, the ones characterizing the cost functional are given by:   
\begin{equation}\label{eq:matricesexogenouspricefunctional}
    Q = 
    \begin{bmatrix}
        \textcolor{black}{b(1-\gamma) A_k^2} & 0\\
        0 & 0\\
    \end{bmatrix}
    \quad
    \overline{Q} = 
    \begin{bmatrix}
        \textcolor{black}{\frac{b\gamma A_k^2}{2}} & 0\\
        0       & 0\\
    \end{bmatrix}
    \quad
    R = 
    \begin{bmatrix}
        c_{1,2} & 0      & 0& 0\\
        0       & c_{2,2}& 0& 0\\
        0       & 0& \frac{1}{2\eta}&0\\
        0       & 0& 0& \frac{1}{2\nu}
    \end{bmatrix}.
\end{equation}

\begin{equation}\label{eq:matricesexogenouspricefunctionaltwo}
    q = 
    \begin{bmatrix}
            \textcolor{black}{- \frac{aA_k}{2}}\\
              0\\
    \end{bmatrix}
    \quad 
    r(s)=
    \begin{bmatrix}
            \frac{c_{1,1}}{2}\\
            \frac{c_{2,1}}{2}\\
            \frac{h}{2}\\
            \frac{\overline{\omega}(s)}{2}\\
    \end{bmatrix}
\quad H=\begin{bmatrix} 
        0 & 0 \\
        0 & \lambda\\
   \end{bmatrix}
\end{equation}
Before proceeding, we make the following remark. 
\begin{remark}\label{rmk::marketimpact}
In order to satisfy assumptions \ref{itm:N5}, the matrix $R$ has to be a positive-definite matrix, implying that $\nu$ must be in $(0,\infty)$. In particular, we do not consider directly the case without frictions as done in \cite{aid2023optimal}, which, admittedly, could be a common assumption in the carbon market (see, e.g., \cite{kollenberg2016emissions}). 
\end{remark}

Let us now assume that $(\overline{\omega}_t) \in \mathbb{H}^2((\overline{\mathcal{F}}^0_t);\mathbb{R})$, with $\overline{\omega}_T \in \mathbb{L}^2(\overline{\mathcal{F}}^0_T;\mathbb{R})$, is given. Then, we have the following characterization (see also \cite{graber2016linear}, Proposition 3.2), which is due to the uniform convexity of the functional $\mathcal{J}^{NE}(v;\overline{K})$ that guarantees the existence of a unique minimizer (see either \cite{graber2016linear}, Lemma 2.2, or \cite{yong2013linear}, Proposition 2.7). In order to not burden the reading, we omit the proof of the subsequent proposition because we will provide in Section \ref{sec::mfcapproximation}, Proposition \ref{prop::propositionexistenceanduniquenessmfc}, the proof of the characterization of the associated MFC problem, which follows the same line of argument. Before continuing, let us make the following observation about the notation.   The subsequent results involve the processes $Z \in \mathbb{H}^2((\mathcal{F}_t);\mathbb{R}^{2 \times 3})$ and $Z_0 \in \mathbb{H}^2((\mathcal{F}_t);\mathbb{R}^{2 \times 2})$. To denote the entry $(i,j)$ of $Z$ (resp. of $Z_0$), we will use the notation $Z^{(i)}_{j}$ (resp. $Z_{0,j}^{(i)}$).
\begin{proposition}\label{prop::propositionexistenceanduniquenessmfg}
Let $\overline{K}(t)=\mathbb{E}[K(t)|\overline{\mathcal{F}}_t^0]$, and $x_0 = (k_0, \tilde{x}_0)$ be a random vector which is independent from $(\overline{\mathcal{F}}^0_t)$. Then, there exists a unique control $v=v(\overline{K}, x_0)$ minimizing the functional in Equation \eqref{eq::functionalmfglimit}. Furthermore, let $X(s)=(K(s),\widetilde{X}(s))$ be the corresponding trajectory, i.e., the solution of Equation \eqref{eq::statedynamicsmfglimit} with control $v$. Then there exists a unique solution $(Y, Z, Z_0)\in \mathbb{S}^2((\mathcal{F}_t);\mathbb{R}^2) \times \mathbb{H}^2((\mathcal{F}_t);\mathbb{R}^{2 \times 3}) \times \mathbb{H}^2((\mathcal{F}_t);\mathbb{R}^{2 \times 2})$ of the following BSDE, with $s \in [0,T]$:
\begin{equation}\label{eq::bsdemfg}
    \begin{cases}
    \begin{split}
        dY^{(1)}(s)=&-\left(\textcolor{black}{-\delta Y^{(1)}(s)\,ds}+\sigma Z_1^{(1)}(s) + b \textcolor{black}{\overline{K}}(s) + \frac{b \gamma}{2}\overline{K}(s)-\frac{a}{2}\right)\,ds\\
                   &+ \sum_{j=1}^{3}Z_j^{(1)}dW^{j}(s)+ \sum_{j=1}^{2} Z_{0,j}^{(1)}dW^{0,j}(s),\quad Y^{(1)}(T)=0;\\
        dY^{(2)}(s)=& \sum_{j=1}^{3}Z_j^{(2)}dW^{j}(s)+ \sum_{j=1}^{2} Z_{0,j}^{(2)}dW^{0,j}(s),\quad Y^{(2)}(T)=\lambda(\widetilde{X}(T))^2
    \end{split}
    \end{cases}
\end{equation}
satisfying the coupling condition, with $s \in [0,T]$, \textit{a.s.},
\begin{equation}\label{eq::couplingmfg}
\begin{cases}
 K^{f}(s) &= - \frac{\kappa_f}{c_{1,2}} Y^{(1)}(s) + \frac{\kappa_f}{c_{1,2}} Y^{(2)}(s) - \frac{c_{1,1}}{2 c_{1,2}},\\
 K^{g}(s) &= - \frac{\kappa_g}{c_{2,2}} Y^{(1)}(s) - \frac{c_{2,1}}{2 c_{2,2} },\\
\alpha(s) &=- 2\eta Y^{(2)}(s) - \eta h,\\
 \beta(s) &= - 2\nu Y^{(2)}(s) - \nu \overline{\omega}(s).
\end{cases}
\end{equation}
Conversely, suppose $(X, v, Y, Z, Z_0) \in \mathbb{S}^2((\mathcal{F}_t);\mathbb{R}^2)  \times 
\mathbb{H}^2((\mathcal{F}_t);\mathbb{R}^{4})\times
\mathbb{S}^2((\mathcal{F}_t);\mathbb{R}^2)  \times \mathbb{H}^2((\mathcal{F}_t);\mathbb{R}^{2 \times 3}) \times \mathbb{H}^2((\mathcal{F}_t);\mathbb{R}^{2 \times 2})$ is a solution to the forward-backward system \eqref{eq::statedynamicsmfglimit}, \eqref{eq::bsdemfg} and coupling condition \eqref{eq::couplingmfg}. Then $v$ is the optimal control minimizing $\mathcal{J}^{NE}(v;\overline{K})$, and $X(s)$ is the optimal trajectory. In particular, $v$ is a mean field Nash equilibrium. 
\end{proposition}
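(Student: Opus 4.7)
The plan is to solve this as a standard convex LQ stochastic control problem for the representative firm, with $(\overline{K}(s))$ and $(\overline{\omega}(s))$ treated as exogenous $\overline{\mathcal{F}}^0$-adapted inputs (the MFG, not MFC, viewpoint), and then apply the stochastic Pontryagin principle to obtain the FBSDE characterization.

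First I would establish existence and uniqueness of the minimizer. By Remark \ref{rmk::marketimpact}, the control-cost matrix $R$ is positive definite, while $Q$, $\overline{Q}$, $H$ are positive semidefinite by inspection of \eqref{eq:matricesexogenouspricefunctional}--\eqref{eq:matricesexogenouspricefunctionaltwo}. Since \eqref{eq::statedynamicsmfglimit} is affine in $(X,v)$ with inhomogeneous terms in $\mathbb{H}^2$, the map $v \mapsto X^v$ is affine continuous from $\mathbb{H}^2((\mathcal{F}_t);\mathbb{R}^4)$ to $\mathbb{S}^2((\mathcal{F}_t);\mathbb{R}^2)$, whence $v \mapsto \mathcal{J}^{NE}(v;\overline{K})$ is a uniformly convex, continuous, coercive quadratic functional on a Hilbert space; existence and uniqueness of the minimizer then follow from the direct method, equivalently from Lemma 2.2 of \cite{graber2016linear}.

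Second, I would apply the stochastic maximum principle. Writing the generalized Hamiltonian
\[
\mathcal{H}(s,x,v,y,z) = y^\top(Ax+Bv+A_0(s)) + z_1^\top C_1 x + \ell(s,x,v;\overline{K},\overline{\omega}),
\]
with $\ell$ the running cost from \eqref{eq::functionalmfglimit}, the adjoint system is $-dY = \partial_x \mathcal{H}\,ds - \sum_j Z_j\,dW^j - \sum_j Z_{0,j}\,dW^{0,j}$, with terminal condition $Y(T) = \partial_x(\lambda \widetilde{X}(T)^2)$, which vanishes in the first component. A direct component-wise computation with $\overline{K}$ frozen reproduces exactly the drivers of \eqref{eq::bsdemfg}: the $-\delta Y^{(1)}$ from $A^\top Y$, the $\sigma Z_1^{(1)}$ from $C_1^\top Z_1$ (the correction due to the multiplicative noise $\sigma K\,dW^1$), and the affine-in-$\overline{K}$ and constant terms from $\partial_K \ell$. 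The first-order condition $\partial_v \mathcal{H} = 0$, solved component-wise via the block-diagonal structure of $R$, delivers the four coupling equations \eqref{eq::couplingmfg}.

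For the converse I would invoke convexity: since $\mathcal{H}$ is convex in $(x,v)$ and the terminal penalty is convex, the Pontryagin necessary condition is also sufficient, so any quintuple $(X,v,Y,Z,Z_0)$ solving the forward--backward system with \eqref{eq::couplingmfg} yields an admissible optimal control, which by strict convexity coincides with the unique minimizer and hence constitutes a mean-field Nash equilibrium. The main obstacle is bookkeeping: reconciling the $1/2$-convention of the LQ standard form in Appendix \ref{app::LQMFG} with the explicit running cost, and correctly tracking the $\sigma Z_1^{(1)}$ correction from the state-dependent volatility of $K$. Crucially, because $\overline{K}$ is treated as an exogenous $\overline{\mathcal{F}}^0$-input in the MFG formulation, no Lions derivative with respect to the measure is required here, which is the key simplification relative to the MFC proof in Section \ref{sec::mfcapproximation}.
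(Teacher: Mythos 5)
Your proposal is correct and follows essentially the same route as the paper: the paper omits the proof of this proposition and refers to the proof of Proposition \ref{prop::propositionexistenceanduniquenessmfc}, which is exactly your argument unpacked by hand --- uniform convexity for existence/uniqueness of the minimizer, then the variation-process/Gateaux-derivative computation with It\^o's formula applied to $\langle Y,V\rangle$ (i.e., the derivation of the stochastic maximum principle you invoke), and strict convexity for the converse. Your observation that the MFG case is simpler because $\overline{K}$ is frozen (no differentiation through the conditional expectation) is accurate and consistent with the paper's remark that the MFG proof ``follows the same line of argument'' as the MFC one.
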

Before proceeding,  let us comment on the coupling condition in Equation \eqref{eq::couplingmfg}, which applies also to the coupling condition in Equation \eqref{eq::couplingmfc} as well as when $\overline{\omega}_t$ is replaced by the corresponding expression for the endogenous price (see Section \ref{sec::marketclearingconditionandequilibriumprice}). First,  from the third equation in Equation \eqref{eq::couplingmfc} we obtain that
\begin{eqnarray*}
    -2 Y^{(2)}(s) = \frac{\alpha(s)}{\eta} + h,
\end{eqnarray*}
i.e., $-2 Y^{(2)}(s)$ is equal to the marginal abatement cost $C^{'}(\alpha(t))$; see Equation \eqref{eq::costabatement}. Plugging it into the last equation in Equation \eqref{eq::couplingmfc}, leads us to
\begin{equation*}
    \beta(s) = \nu \left(\frac{\alpha(s)}{\eta}+h-\overline{\omega}_t\right).
\end{equation*}
Whence, the firm buys (resp. sells) if its marginal abatement cost is higher (resp. lower) than the market price, in agreement with the economic intuition. Similarly, from the second equation in Equation \eqref{eq::costabatement} we obtain that
\begin{equation*}
    - Y^{(1)}(s) = \frac{1}{2 \kappa_g}\left(c_{2,1} + 2 c_{2,2} K^{g}(s)\right),
\end{equation*}
i.e., $-Y^{(1)}(s)$ is proportional to the marginal level of green capital cost $(C^{g})^{'}(K^{g})$; see Equation \eqref{eq::productioncosts}.   Now plugging the previous term into the expression for $K^{f}(s)$, leads us to:
\begin{eqnarray*}
    K^{f}(s) = \frac{1}{2 c_{1,2}}\left(\frac{\kappa_f}{k_g}(C^{g})^{'}(K^{g}) - \kappa_f C^{'}(\alpha(t))-c_{1,1}\right)
\end{eqnarray*}
Whence, the firms decide the fossil fuel level of capital $K^{f}(s)$ by roughly (see the subsequent discussion) comparing the marginal level of green capital cost $(C^{g})^{'}(K^{g})$ with the sum of the marginal abatement cost $C^{'}(\alpha(t))$ and the baseline cost $c_{1,1}$, again in line with economic intuition.   
\section{A mean field control approximation with common noise for the $N$ player game.}\label{sec::mfcapproximation}
Interestingly, the MFG in Equations \eqref{eq::statedynamicsmfglimit}-\eqref{eq::functionalmfglimit} is equivalent to \textit{a}\footnote{See Remark \ref{rmk::remarkmfgmfc}.} mean field type control problem (see \cite{graber2016linear}, Proposition 3.3 and Corollary 3.4), whose general formulation and resolution is given in Appendix \ref{app::LQMFC}. In our case, the state dynamics is as in Equation \eqref{eq::statedynamicsmfglimit} with the associated matrices as in Equations \eqref{eq:matricesdsexogenousprice}-\eqref{eq:matricesdW0exogenous}, and the objective functional $\mathcal{J}^{LQ}_{x,t}$ is given by:
\begin{equation}\label{eq::functionalmfclimit}
 \begin{split}
        \mathcal{J}^{LQ}(v;\overline{K})&= \mathbb{E}\Bigg[\int_{0}^{T}- p^{\overline{K}}(t) \overline{K} (t) + \beta(t)\overline{\omega}_t+\frac{1}{2\nu}(\beta(t))^2 + C(\alpha(t))\\
                                &+ C^{f}(K^{f}(t))+C^{g}(K^{g}(t))\,dt + \lambda(\widetilde{X}(T))^2\Bigg].
    \end{split}
\end{equation}
Whence, the matrices characterizing the cost functional are in Equations \eqref{eq:matricesexogenouspricefunctional}--\eqref{eq:matricesexogenouspricefunctionaltwo}. 
In particular, the following proposition, analogous to Proposition \ref{prop::propositionexistenceanduniquenessmfg}, holds true (see Proposition 2.4 in \cite{graber2016linear}).

\begin{proposition}\label{prop::propositionexistenceanduniquenessmfc}
Suppose $v$ is an optimal control minimizing the objective functional $\mathcal{J}^{LQ}(v;\overline{K})$ in Equation \eqref{eq::functionalmfclimit} with corresponding trajectory $X(s) = (K(s), \widetilde{X}(s))$ solution of Equation \eqref{eq::statedynamicsmfglimit} with control $v$.    Then there exists a unique solution $(Y, Z, Z_0)\in \mathbb{S}^2((\mathcal{F}_t);\mathbb{R}^2) \times \mathbb{H}^2((\mathcal{F}_t);\mathbb{R}^{2 \times 3}) \times \mathbb{H}^2((\mathcal{F}_t);\mathbb{R}^{2 \times 2})$ of the following BSDE, with $s \in [0,T]$
\begin{equation}\label{eq::bsdemfc}
    \begin{cases}
    \begin{split}
        dY^{(1)}(s)=&-\left(\textcolor{black}{-\delta Y^{(1)}(s)\,ds}+\sigma Z_1^{(1)}(s) + b K(s) + \frac{b \gamma}{2}\overline{K}(s)-\frac{a}{2}\right)\,ds\\
                   &+ \sum_{j=1}^{3}Z_j^{(1)}dW^{j}(s)+ \sum_{j=1}^{2} Z_{0,j}^{(1)}dW^{0,j}(s),\quad Y^{(1)}(T)=0;\\
        dY^{(2)}(s)=& \sum_{j=1}^{3}Z_j^{(2)}dW^{j}(s)+ \sum_{j=1}^{2} Z_{0,j}^{(2)}dW^{0,j}(s),\quad Y^{(2)}(T)=\lambda(\widetilde{X}(T))^2
    \end{split}
    \end{cases}
\end{equation}
satisfying the coupling condition, with $s \in [0,T]$, \textit{a.s.},
\begin{equation}\label{eq::couplingmfc}
\begin{cases}
 K^{f}(s) &= - \frac{\kappa_f}{c_{1,2}} Y^{(1)}(s) + \frac{\kappa_f}{c_{1,2}} Y^{(2)}(s) - \frac{c_{11}}{2 c_{1,2}},\\
 K^{g}(s) &= - \frac{\kappa_g}{c_{2,2}} Y^{(1)}(s) - \frac{c_{2,1}}{2 c_{2,2} },\\
\alpha(s) &=- 2\eta Y^{(2)}(s) - \eta h,\\
 \beta(s) &= - 2\nu Y^{(2)}(s) - \nu \overline{\omega}(s).
\end{cases}
\end{equation}
Conversely, suppose $(X, v, Y, Z, Z_0) \in \mathbb{S}^2((\mathcal{F}_t);\mathbb{R}^2)  \times 
\mathbb{H}^2((\mathcal{F}_t);\mathbb{R}^{4})\times
\mathbb{S}^2((\mathcal{F}_t);\mathbb{R}^2)  \times \mathbb{H}^2((\mathcal{F}_t);\mathbb{R}^{2 \times 3}) \times \mathbb{H}^2((\mathcal{F}_t);\mathbb{R}^{2 \times 2})$ is a solution to the forward-backward system \eqref{eq::statedynamicsmfglimit}, \eqref{eq::bsdemfc} and coupling condition \eqref{eq::couplingmfc}. Then $v$ is the optimal control minimizing $\mathcal{J}^{LQ}(v;\overline{K})$, and $X(s)$ is the optimal trajectory.
\end{proposition}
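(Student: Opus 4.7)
The plan is to prove both directions of Proposition \ref{prop::propositionexistenceanduniquenessmfc} via the Pontryagin stochastic maximum principle adapted to mean-field control, exploiting the linear-quadratic structure; the argument mirrors the strategy of \cite{graber2016linear}, Proposition 2.4, applied with the specific matrices \eqref{eq:matricesdsexogenousprice}--\eqref{eq:matricesexogenouspricefunctionaltwo}. As a preliminary step, strict convexity of $\mathcal{J}^{LQ}(\,\cdot\,;\overline{K})$ -- guaranteed by the positive definiteness of $R$, i.e.\ by $c_{1,2},c_{2,2},1/(2\eta),1/(2\nu)>0$ -- together with continuity of the control-to-state map gives existence and uniqueness of the minimiser $v \in \mathbb{H}^2((\mathcal{F}_t);\mathbb{R}^4)$ by the direct method, so I may work with the actual minimiser rather than a minimising sequence.

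For the necessary direction, I would first verify that the adjoint BSDE \eqref{eq::bsdemfc} admits a unique solution $(Y,Z,Z_0)$ in the stated spaces: its driver is Lipschitz affine in $(Y,Z,Z_0)$, the inhomogeneity $b K(s)+(b\gamma/2)\overline{K}(s)-a/2$ is square-integrable since $X \in \mathbb{S}^2$, and $\lambda(\widetilde{X}(T))^2 \in \mathbb{L}^2$, so classical Pardoux--Peng theory applies. Next I would carry out a G\^ateaux derivative computation: for any admissible perturbation $\tilde{v}$, the linearised state $\tilde{X}$ satisfies $d\tilde{X}(s)=(A\tilde{X}(s)+B\tilde{v}(s))\,ds + C_1\tilde{X}(s)\,dW^1(s)$. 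Applying It\^o's formula to $\langle Y(s),\tilde{X}(s)\rangle$, taking expectation (martingale parts vanish), and using the tower property together with $\overline{X}(s)=\mathbb{E}[X(s)\mid\overline{\mathcal{F}}_s^0]$ to symmetrise the mean-field cross term, the first-order optimality condition reduces to
\begin{equation*}
\mathbb{E}\int_0^T \bigl\langle B^\top Y(s)+R v(s)+r(s),\,\tilde{v}(s)\bigr\rangle\,ds = 0 \quad \text{for all admissible } \tilde{v}.
\end{equation*}
Pointwise vanishing $B^\top Y+R v+r=0$, combined with the explicit block forms of $B$, $R$, $r$ in \eqref{eq:matricesexogenouspricefunctional}--\eqref{eq:matricesexogenouspricefunctionaltwo}, reads off the four components of \eqref{eq::couplingmfc}.

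For the converse direction, let $(X,v,Y,Z,Z_0)$ solve the forward-backward system with coupling \eqref{eq::couplingmfc}, and let $\tilde{v}$ be any admissible control with trajectory $\tilde{X}$. By convexity of the running cost and of the terminal penalty $x \mapsto \lambda x^2$, the difference $\mathcal{J}^{LQ}(\tilde{v};\overline{K})-\mathcal{J}^{LQ}(v;\overline{K})$ is bounded below by a sum of linear terms in $(\tilde{X}-X,\tilde{v}-v)$ plus the symmetrised mean-field contribution. Applying It\^o to $\langle Y(s),\tilde{X}(s)-X(s)\rangle$ on $[0,T]$, all cross terms with $Z$ and $Z_0$ cancel, the $v$-linear contribution collapses to $\mathbb{E}\int_0^T \langle B^\top Y+R v+r,\tilde{v}-v\rangle\,ds$, which vanishes by \eqref{eq::couplingmfc}, and the terminal contribution matches the boundary term coming from the penalty. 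What remains is non-negative, yielding $\mathcal{J}^{LQ}(\tilde{v};\overline{K})\geq\mathcal{J}^{LQ}(v;\overline{K})$.

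The main technical obstacle is the correct handling of the mean-field dependence in the MFC adjoint driver. Because the running cost depends on $\overline{X}$, the effective state derivative absorbed into the BSDE is $\partial_X f + \mathbb{E}[\partial_{\overline{X}} f \mid \overline{\mathcal{F}}_s^0]$, and the symmetric block $\overline{Q}\,\overline{X}$ in \eqref{eq:matricesexogenouspricefunctional} produces, after symmetrisation, the coefficient $b\gamma/2$ (rather than $b\gamma$ as in the frozen-mean-field MFG adjoint of Proposition \ref{prop::propositionexistenceanduniquenessmfg}) in front of $\overline{K}(s)$ in the driver of $Y^{(1)}$. This is precisely the hallmark of the potential-game relationship between the MFG and the MFC discussed above, and is the only feature of the proof that falls outside standard LQ BSDE machinery.
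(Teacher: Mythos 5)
Your proposal is correct and follows essentially the same route as the paper: the necessity part via the variation process, the G\^ateaux derivative, and It\^o's formula applied to the duality pairing with the adjoint $Y$ (using the tower property to symmetrise the $\overline{Q}\,\overline{X}$ term), and the sufficiency part via strict convexity of $\mathcal{J}^{LQ}$ combined with the vanishing first-order condition, which is exactly what your explicit convexity inequality establishes. The only deviations are cosmetic: you invoke Pardoux--Peng where the paper cites the mean-field BSDE results of Buckdahn et al., and your closing remark contrasting the $b\gamma/2$ coefficient with a purported $b\gamma$ in the MFG adjoint does not quite match the paper's Proposition \ref{prop::propositionexistenceanduniquenessmfg} (whose driver also carries $b\gamma/2$; the $2\overline{Q}$ discrepancy lives in Remark \ref{rmk::remarkmfgmfc}), but this is commentary rather than a step of the proof.
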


\begin{proof}

The proof follows the steps of Theorem 1.59 in \cite{delarue2018probabilistictwo}. First, let us write the MFC problem in Equations \eqref{eq::statedynamicsmfglimit} and \eqref{eq::functionalmfclimit} in matrix notation as in Appendix \ref{app::LQMFC}; matrices are given in \eqref{eq::statedynamicsmfglimit}-\eqref{eq::functionalmfglimit} and \eqref{eq:matricesdsexogenousprice}-\eqref{eq:matricesdW0exogenous}. 
\begin{equation}\label{eq::functionalmfclimitmatrixnotation}
    \begin{split}
        \mathcal{J}^{LQ}_{t,x}(v; \overline{X}) &= \mathbb{E}\Bigg[\int_{t}^{T}\langle Q X(s), X(s)\rangle+\langle \overline{Q}\,\overline{X}(s), \overline{X}(s)\rangle +\langle R\,v(s), v(s)\rangle\\
                                &+ 2\langle q, X(s)\rangle + \langle r(s), v(s)\rangle\,ds + \langle H, X(T), X(T)\rangle\Bigg].
    \end{split}
\end{equation}
\begin{equation}\label{eq::dynamicsmfclimitmatrixnotation}
    \begin{split}
       dX(s) &= (A_0(s) + B v(s))\,ds\textcolor{black}{+AX(s)\,ds}\\
             &+ C_{0,2}\,dW^2(s) +C_{0,3}\,dW^3(s) + C_1 X(s)\,dW^1(s)\\
             &+F_{0,1}\,dW^{0,1}(s) + F_{0,2}\,dW^{0,2}(s).
    \end{split}
\end{equation}
In the previous equations, as usual, $\bar{X}(s)$ denotes the conditional expectation of $X(s)$ given $\overline{\mathcal{F}}_s^{0}$. Second, let $(v, X)$ be the optimal pair for the MFC problem, $v^{h}$ the control $v^{h} = v + h \tilde{v}$, and $X^{h}$ the trajectory associated to $v^{h}$. Then, let us define the so-called variation process $(V(s))$ as the solution of the following SDE:
\begin{equation*}
    dV(s) =\textcolor{black}{AV(s)\,ds}+ B v(s)\,ds + C_1 V(s) dW^1(s),\quad s \in [0,T]
\end{equation*}
By repeating the computations in \cite{delarue2018probabilisticone}, Lemma 6.10, we have that the following limit holds true 

\begin{equation*}
    \lim_{\epsilon \rightarrow 0} \mathbb{E}\left[\sup_{s \in [0,T]}\Big\vert \frac{X^h(s)-X(s)}{\epsilon}-V(s)\Big\vert^2\right]=0.
\end{equation*}
We now observe that
\begin{equation*}
    \begin{split}
        \mathcal{J}^{LQ}_{t,x}(v^h); \overline{X})&=\mathbb{E}\Bigg[\int_{t}^{T}\Big[ \langle Q\,X^h(s), X^h(s)\rangle+\langle\overline{Q}\,\overline{X}^h(s),\overline X^h(s)\rangle+\langle R v^h(s),v^h(s)\rangle \\
        &\hspace{2cm}+2\langle q ,X^h(s)\rangle+2\langle r(s), v^h(s)\rangle
        \Big]\,ds + \langle H X^h(T), X^h(T)\rangle \Bigg],
    \end{split}
\end{equation*}
from which the Gateaux derivative of $\mathcal{J}^{LQ}_{t,x}(v^h; \overline{X})$ in the direction $h$ reads as
\begin{equation*}
    \begin{split}
    &\frac{d}{dh}\mathcal{J}^{LQ}_{t,x}(v^h; \overline{X})\\
    &=2\mathbb{E}\Bigg[\int_t^T\langle QX^h,V\rangle+\langle \bar Q\bar X^h,\bar V\rangle+\langle Rv^h,\tilde v\rangle+\langle q,V\rangle+\langle r,\tilde v\rangle \,ds +\langle HX^h_T,V_T\rangle\Bigg].
    \end{split}
\end{equation*}

Whence, the optimal condition for $(v, X)$
\begin{equation*}
    \frac{d}{dh}\mathcal{J}^{LQ}_{t,x}(v^h; \overline{X})\Big\vert_{h=0}=0
\end{equation*}
is
\begin{equation}\label{eq::optimalitycondition}
    \mathbb{E}\Bigg[\int_t^T\langle QX^h,V\rangle+\langle \bar Q\bar X^h,\bar V\rangle+\langle Rv^h,\tilde v\rangle+\langle q,V\rangle+\langle r,\tilde v\rangle \,ds +\langle HX^h_T,V_T\rangle\Bigg]=0.
\end{equation}
By \cite{buckdahn2009mean,buckdahn2009mean2}, we know that the BSDE in Equation \eqref{eq::bsdemfc} admits a unique solution. Then, by applying Ito's formula to the process $(\langle Y(t), V(t)\rangle)$,
\begin{equation*}
    \begin{split}
       \langle Y(t), V(t)\rangle&=\int_t^T-\left(\textcolor{black}{\langle A^T Y(s), V(s) \rangle}+\langle C_1^T Z_1(s), V(s)\rangle + \langle Q X(s), V(s)\rangle  \right)\,ds\\
       &-\int_t^T\langle \left(\overline{Q}\,\overline{X}(s), V(s)\rangle+\langle q, V(s)\rangle\right)\,ds
+\int_t^T \langle B^T Y(s), \tilde{v}(s)\rangle + \langle C_1^T Z_1(s), V(s)\rangle\,ds\\
&+\textcolor{black}{\int_t^T \langle AV(s),Y(s)\rangle\,ds}+\text{Martingale}.\\
    \end{split}
\end{equation*}
By taking the expectation on both sides, by recalling that $\mathbb{E}\left[\langle Q \overline{X}(s), V(s)\rangle\right] = \mathbb{E}\left[\langle Q \overline{X}(s), \overline{V}(s)\rangle\right]$ and $Y(T) =H X(T)$, we have
\begin{equation*}
\begin{split}
    \mathbb{E}\Bigg[&\int_{t}^{T}\left(\langle Q X(s), V(s)\rangle + \langle \overline{Q}\overline{X}(s), V(s)\rangle+
        \langle q, V(s)\rangle\right)-\langle B^T Y(s), \tilde{v}(s)\rangle\,ds\\
                    &+ \langle H X(T), V(T)\rangle\Bigg]=0.
\end{split}
\end{equation*}
By using the optimally condition in Equation \eqref{eq::optimalitycondition} and the arbitrariness of $\tilde v$ we obtain the coupling condition \eqref{eq::couplingmfc}.\\
\indent To prove the converse, it is sufficient to observe that the functional $\mathcal{J}^{LQ}_{t,x}(v; \overline{X})$ is strictly convex (see, again, \cite{graber2016linear}, Lemma 2.2, or \cite{yong2013linear}, Proposition 2.7). Then, given a solution $(X,v,Y,Z,Z_0)$ to the forward-backward system \eqref{eq::statedynamicsmfglimit}, \eqref{eq::bsdemfc}, we know that the Gateaux derivative of $\mathcal{J}^{LQ}_{t,x}(v; \overline{X})$ at $v$ is zero, which means that $v$ is a minimizer.

\end{proof}

\indent   Before proceeding,  we make the following remark.  
\begin{remark}\label{rmk::remarkmfgmfc}
The MFC problem in Equations \eqref{eq::statedynamicsmfglimit} and \eqref{eq::functionalmfclimit} is not the MFC problem version of \eqref{eq::statedynamicsmfglimit} and \eqref{eq::functionalmfglimit}. Indeed, the latter would have the following objective functional
\begin{equation}\label{eq::functionalmfclimitmatrixnotationtrue}
    \begin{split}
        \mathcal{J}^{LQ}_{t,x}(v; \overline{X}) &= \mathbb{E}\Bigg[\int_{t}^{T}\langle Q X(s), X(s)\rangle+\boldsymbol{2}\langle \overline{Q}\,\overline{X}(s), \overline{X}(s)\rangle +\langle R\,v(s), v(s)\rangle\\
                                &+ 2\langle q, X(s)\rangle + \langle r(s), v(s)\rangle\,ds + \langle H, X(T), X(T)\rangle\Bigg].
    \end{split}
\end{equation}
\end{remark}
\noindent   Nicely, by using the derivations in Appendix \ref{app::LQMFC}, we can explicitly write down the solution of problem \eqref{eq::statedynamicsmfglimit}--\eqref{eq::functionalmfclimit} in terms of the following system of Riccati equations, where matrices $C_1, Q, B, R, H$ and vectors $r(t), q, A_0(t)$ are defined in \eqref{eq:matricesdsexogenousprice}, \eqref{eq:matricesdWexogenousprice}, \eqref{eq:matricesdW0exogenous}, \eqref{eq:matricesexogenouspricefunctional}, \eqref{eq:matricesexogenouspricefunctionaltwo}.
\begin{equation}\label{eq::riccati_P}
\begin{cases}
        \overset{\cdot}{P}(t) + C_1^T P(t) C_1 + Q - P(t) B R^{-1} B^T P(t) \textcolor{black}{+P(t)A+A^TP(t)}= 0;\\
        P(T) = H.\\
\end{cases}
\end{equation}
\begin{equation}\label{eq::riccati_PI}
    \begin{cases}
        \overset{\cdot}{\Pi}(t) + C_1^T P(t) C_1 + (Q + \bar{Q}) - \Pi(t) B R^{-1} B^T \Pi(t)\textcolor{black}{+\Pi(t) A+A^T\Pi(t)} = 0;\\
        \Pi(T) = H.\\
    \end{cases}
\end{equation}
\begin{equation}\label{eq::riccati_phi}
\begin{cases}
    \overset{\cdot}{\phi}(t) - \Pi(t) B R^{-1} r(t) + q +\Pi(t) A_0(t) - \Pi(t) B R^{-1} B^T \phi(t) \textcolor{black}{+A^T\phi(t)}=0; \\
    \phi(T)=0.
    \end{cases}
\end{equation}
\noindent   In particular, Theorem 2.6 in \cite{graber2016linear} ensures that the previous Equations \eqref{eq::riccati_PI}--\eqref{eq::riccati_phi} admit a unique solution, where $P$, $\Pi$ are two deterministic processes in $\mathcal{S}^2$, whereas $\phi$ is a deterministic process in $\mathbb{R}^2$. In addition, both the optimal control and the associated optimal trajectory for the problem \eqref{eq::statedynamicsmfglimit} and \eqref{eq::functionalmfclimitmatrixnotation} are expressed in terms of the solutions $P$, $\Pi$ and $\phi$, $t \in [0,T]$:
\begin{equation}\label{eq::optimalcontrolriccati}
    v(t) = - R^{-1} B^T P(t) (X(t)-\overline{X}(t)) - R^{-1} (B^T \Pi(t) \overline{X}(t) + r(t) + B^T \phi(t)).
\end{equation}
\begin{equation}\label{eq::optimaltraject}
\begin{split}
    dX(t)   &=(A_0(s)-B R^{-1} B^T P(t) (X(t) - \overline{X}(t)))\,ds\\
            &- B R^{-1} (B^T \Pi(t) \overline{X}(t) + B^T \phi(t) + r(t)))\,ds\\
            &+ C_1 X(s)\,dW^1(s) + C_{0,2} dW^2(s) + C_{0,3} dW^3(s)\\
            &+F_{0,1}dW^{0,1}(s)+F_{0,2}dW^{0,2}(s)
\end{split}
\end{equation}
\noindent   Proposition \ref{prop::propositionexistenceanduniquenessmfc} ensures that there exists a unique adapted solution of the forward-backward system \eqref{eq::statedynamicsmfglimit}, \eqref{eq::bsdemfc}, and therefore such a solution coincides with the solution constructed with the Riccati Equations \eqref{eq::riccati_PI}--\eqref{eq::riccati_phi}.\\
\indent We make now the following observation regarding the so-called Price of Anarchy (PoA, henceforth).
\begin{observation}[PoA]
The objective functional $\mathcal{J}^{NE}(v;\overline{K})$ (Equation \eqref{eq::functionalmfglimit}) and $\mathcal{J}^{LQ}(v;\overline{K})$ (Equation \eqref{eq::functionalmfclimit}) are not precisely the same, even though $v$ solves both a fixed point Nash equilibrium and a mean field type control problem. The difference between $\mathcal{J}^{NE}(v;\overline{K})$ and $\mathcal{J}^{LQ}(v;\overline{K})$ is called PoA since it represents the added aggregate cost of allowing all players to choose their optimal strategy independently.    
\begin{equation*}
\begin{split}
    \mathcal{J}^{NE}(v;\overline{K})-\mathcal{J}^{LQ}(v;\overline{K}) 
            &= b \frac{\gamma}{2}\mathbb{E}\left[\int_{t}^{T}(\mathbb{E}[K(s)|\mathcal{\overline{F}}^{0}_{s}])^2\,ds\right]
\end{split}
\end{equation*}
It is strictly positive as soon as the conditional expectation of the equilibrium capital level is different from zero, and strictly increasing in $\gamma$, reflecting the fact that greater competition yields a greater cost of non-cooperation. 
\end{observation}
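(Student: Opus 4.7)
The plan is to observe that $\mathcal{J}^{NE}(v;\overline{K})$ and $\mathcal{J}^{LQ}(v;\overline{K})$ as written in Equations \eqref{eq::functionalmfglimit}--\eqref{eq::functionalmfclimit} agree in every running--cost ingredient and in the terminal penalty, differing only in the revenue contribution, and then to reduce the remaining integral by tower conditioning on $(\overline{\mathcal{F}}_t^0)$.

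The first step is a direct subtraction. The trading cost $\beta(t)\overline{\omega}_t+\tfrac{1}{2\nu}\beta(t)^2$, the abatement cost $C(\alpha(t))$, the capital costs $C^f(K^f(t))+C^g(K^g(t))$, and the terminal penalty $\lambda\widetilde{X}(T)^2$ appear identically in both functionals and cancel. What is left is
\[
\mathcal{J}^{NE}(v;\overline{K}) - \mathcal{J}^{LQ}(v;\overline{K}) \;=\; \mathbb{E}\!\left[\int_0^T -p^{\overline{K}}(t)\bigl(K(t)-\overline{K}(t)\bigr)\,dt\right].
\]

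Next I would substitute $-p^{\overline{K}}(t) = -a + b(1-\gamma)A_k K(t) + b\gamma A_k\overline{K}(t)$ and condition on $\overline{\mathcal{F}}_t^0$ inside the time integral. By the definition $\overline{K}(t)=\mathbb{E}[K(t)\mid\overline{\mathcal{F}}_t^0]$, the residual $K(t)-\overline{K}(t)$ has zero $\overline{\mathcal{F}}_t^0$--conditional mean; hence the $-a(K-\overline{K})$ piece and the $b\gamma A_k\overline{K}(K-\overline{K})$ piece both vanish in expectation (in the latter, $\overline{K}$ is $\overline{\mathcal{F}}_t^0$--measurable and factors out). The cleanest way to collect the surviving contribution is to switch to the matrix notation of Equations \eqref{eq::functionalmfclimitmatrixnotation}--\eqref{eq::functionalmfclimitmatrixnotationtrue}: the Nash functional carries the cross term $2\langle \overline{Q}\,X(t),\overline{X}(t)\rangle$, the LQ functional carries $\langle \overline{Q}\,\overline{X}(t),\overline{X}(t)\rangle$, and the tower identity $\mathbb{E}[\langle \overline{Q}\,X(t),\overline{X}(t)\rangle]=\mathbb{E}[\langle \overline{Q}\,\overline{X}(t),\overline{X}(t)\rangle]$ collapses their difference to $\mathbb{E}\!\left[\int_0^T\langle \overline{Q}\,\overline{X}(t),\overline{X}(t)\rangle\,dt\right]$. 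Reading off $\overline{Q}_{11}=\tfrac{b\gamma}{2}$ from \eqref{eq:matricesexogenouspricefunctional} (up to the productivity normalization implicit there) and noting that $\overline{X}_1=\overline{K}$ then yields the claimed formula $\tfrac{b\gamma}{2}\,\mathbb{E}\!\left[\int_0^T\overline{K}(s)^2\,ds\right]$.

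Strict positivity on $\{\gamma>0,\ \overline{K}\not\equiv 0\}$ and monotone increase in $\gamma$ are immediate from inspecting the scalar coefficient. The only real bookkeeping obstacle will be to match the factor of two coming from symmetrizing the cross product $\overline{K}\,K$ into the bilinear form $\langle \overline{Q}\,X,\overline{X}\rangle$ under the convention of Appendix \ref{app::LQMFG}; this factor is precisely what distinguishes the Nash running cost from the equivalent potential--MFC running cost and accounts for the sign of the Price of Anarchy.
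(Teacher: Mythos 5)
The paper states this observation without proof, so your derivation has to stand on its own, and as written it contains an internal contradiction. Your steps 1--2 are correct for the functionals literally displayed in \eqref{eq::functionalmfglimit} and \eqref{eq::functionalmfclimit}: the difference reduces to $\mathbb{E}[\int_0^T -p^{\overline{K}}(t)(K(t)-\overline{K}(t))\,dt]$, and under conditioning the $-a$ piece and the $b\gamma A_k\overline{K}(K-\overline{K})$ piece vanish. But that means the \emph{surviving} term is the $b(1-\gamma)A_k K(K-\overline{K})$ piece, i.e.\ the $Q$-contribution, whose expectation is $b(1-\gamma)A_k\,\mathbb{E}[\int_0^T (K(t)^2-\overline{K}(t)^2)\,dt]$ --- an integrated conditional variance with coefficient proportional to $1-\gamma$, which is \emph{not} the claimed $\tfrac{b\gamma}{2}\mathbb{E}[\int \overline{K}^2]$ (it is maximal at $\gamma=0$ and vanishes only when $K$ is $\overline{\mathcal{F}}^0$-measurable). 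Your step 3 then asserts that the surviving contribution is the $\overline{Q}$ cross-term difference --- precisely the piece you just showed cancels in the scalar computation. The reason for the mismatch is that switching to the matrix forms \eqref{eq::functionalmfclimitmatrixnotation}--\eqref{eq::functionalmfclimitmatrixnotationtrue} is not ``a cleaner way to collect'' the same quantity: replacing $K$ by $\overline{K}$ in the quantity slot of $-p^{\overline{K}}\cdot$ also converts the $b(1-\gamma)A_k^2K^2$ term into $b(1-\gamma)A_k^2K\overline{K}$, whereas the matrix form of $\mathcal{J}^{LQ}$ keeps $\langle QX,X\rangle$ with the full state $X$ and modifies only the coefficient of $\langle\overline{Q}\,\overline{X},\overline{X}\rangle$ (from $2$ to $1$, cf.\ Remark \ref{rmk::remarkmfgmfc}). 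The displayed scalar $\mathcal{J}^{LQ}$ and its matrix form are therefore two different functionals, and the observation's formula corresponds to the latter.

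The repair is to abandon the scalar subtraction and work in matrix form throughout: subtract \eqref{eq::functionalmfclimitmatrixnotation} from \eqref{eq::functionalmfclimitmatrixnotationtrue}, which leaves only $\mathbb{E}[\int_t^T\langle\overline{Q}\,\overline{X}(s),\overline{X}(s)\rangle\,ds]$, and then read off $\overline{Q}_{11}=\tfrac{b\gamma A_k^2}{2}$ from \eqref{eq:matricesexogenouspricefunctional} to get $\tfrac{b\gamma A_k^2}{2}\mathbb{E}[\int_t^T\overline{K}(s)^2\,ds]$ (so the observation's stated coefficient is also missing the factor $A_k^2$, which your parenthetical about the ``productivity normalization'' correctly hints at). Alternatively, state explicitly that \eqref{eq::functionalmfclimit} as displayed is inconsistent with \eqref{eq::functionalmfclimitmatrixnotation} and that the PoA is defined through the matrix forms. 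Your closing remarks about the factor of two and about strict positivity and monotonicity in $\gamma$ are fine once the computation is anchored to the right pair of functionals; as the proposal stands, the chain of equalities does not go through.
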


\section{Market clearing condition and equilibrium price.}\label{sec::marketclearingconditionandequilibriumprice}
We start this section with the definition of market equilibrium for the finite player game. 
\begin{definition}\label{def::market_equilibrium_finite_player}
    For the $N$ player game a market equilibrium is, for every $t \in [0,T]$, a $N$-dimensional vector $\mathbf{\beta}^{\star, N}(t)=(\beta^{\star, 1}(t), \ldots, \beta^{\star,N}(t))$ such that: (1) each $\beta^{\star,i}(t) \in \mathcal{H}_1^{N}$; (2) $\beta^{\star,i}(t)$ is the $i^{th}$ component of the $\epsilon$-Nash equilibrum for the $N$-player game (see Definition \ref{def::nashequilibria}), and (3) the asymptotic market clearing condition $\lim_{N \rightarrow \infty} \sum_{i=1}^{N}\beta^{\star,i}(t) = 0$ holds true.
\end{definition}

At this point,  we observe that because our MFG is equivalent to an optimal control problem, the mean field Nash equilibrium is an $\epsilon$-Nash equilibrium for the $N$-player game, in the sense of Definition \ref{def::nashequilibria}; see Theorem 3.6 in \cite{graber2016linear}.   By using Proposition \ref{prop::propositionexistenceanduniquenessmfc}, the optimal trading rate $\beta^{\star,i}(t)$ of each firm is given by:  
\begin{equation*}
    \beta^{\star,i}(t) = -2 \nu Y^{(2),i}(t)-\nu\overline{\omega}(t),\quad t \in [0,T].
\end{equation*}
Because we model the trading mechanism as part of the firms' decision problem, the equilibrium (market-clearing) price of emission allowances emerges endogenously. In the present situation, Equation \eqref{eq::marketclearingcondition} is equivalent to the following condition:     
\begin{equation}\label{eq::marketclearingcondition}
    \frac{1}{N}\sum_{i=1}^{N}\beta^{\star,i}(t) = \frac{1}{N}\sum_{i=1}^{N}\left(-2 \nu Y^{(2),i}(t)-\nu\overline{\omega}(t)\right)=0,
\end{equation}
from which we have that
\begin{equation}\label{eq::equilibriumprice}
    \overline{\omega}(t) = - \frac{2}{N}\sum_{i=1}^{N}Y^{(2),i}(t).
\end{equation}
The previous solution is of course inconsistent with our standing assumption that the price process $(\overline{\omega}_t)$ is a $(\overline{\mathcal{F}}^0)$-adapted process. However, we can argue as in \cite{fujii2022mean}, Page 267, and expect that in the large-$N$ limit, the market price of allowances may be given by 
$\overline{\omega}_t=-2\mathbb{E}[Y^{(2)}(t)|\overline{\mathcal{F}}^0_t]$.  Therefore, we need to consider a different BSDE system with respect to the one in Proposition \ref{prop::propositionexistenceanduniquenessmfc}. More precisely, the coupling condition for $\beta(s)$ in Equation \eqref{eq::couplingmfc} is now given by:
\begin{equation*}
    \beta(s) = - 2 \nu Y^{(2)}(s) + 2 \nu \mathbb{E}[Y^{(2)}(s)|\overline{\mathcal{F}}_t^{0}],
\end{equation*}
which leads to the following matrix-based representation of the coupling condition 
\begin{equation}\label{eq::couplingmfcendogenous}
    v(t) = - R^{-1} (B^T\,Y(t) + \tilde{r} + D\,\overline{Y}(s)),\quad t \in [0,T],\,\,a.s.
\end{equation}
where 
\begin{equation}\label{eq::matricescouplingnew}
    \tilde{r}
    =
    \begin{bmatrix}
        \frac{c_{1,1}}{2}\\
        \frac{c_{2,1}}{2}\\
        \frac{h}{2}\\
        0
    \end{bmatrix},\quad
    D 
    =
    \begin{bmatrix}
        0 & 0 \\
        0 & 0 \\
        0 & 0 \\
        0 &-\frac{1}{2}
    \end{bmatrix}.
\end{equation}
\noindent   Equations \eqref{eq::statedynamicsmfglimit} and \eqref{eq::bsdemfc}, instead, remain unchanged; notice that now $v(t)$ is as in Equation \eqref{eq::couplingmfcendogenous}. Their matrix-based representation is given by the following equations: 
\begin{equation}\label{eq::statedynamicsmfglimitmatrixform}
    \begin{split}
     dX(s) &= (A_0(s) \textcolor{black}{+AX(s)}+B v(s))\,ds\\
             &+ C_{0,2}\,dW^2(s) +C_{0,3}\,dW^3(s) + C_1 X(s)\,dW^1(s)\\
             &+F_{0,1}\,dW^{0,1}(s) + F_{0,2}\,dW^{0,2}(s),\quad X(0) = x_0.
    \end{split}
\end{equation}
\begin{equation}\label{eq::bsdematrixform}
    \begin{split}
     dY(s)  &=- (\textcolor{black}{A^TY(s)}+C_1^T Z_1 (s) + Q X(s) + \overline{Q}\,\overline{X}(s) + q)\,ds\\
            &+\sum_{j=1}^{3}Z_j(s) dW^{j}(s)+ \sum_{j=1}^{2} Z_{0,j}(s) dW^{0,j}(s),\quad Y(T) = H X(T).
    \end{split}
\end{equation}
We now state and prove the following short-term existence result.
\begin{theorem}\label{thm::localexistence}
There exists some constant $\tau>0$ which depends only on the matrices $A_0(s), B, C_{0,2}, C_{0,3}, C_1, F_{0,1}, F_{0,2}, Q, \overline{Q}, q$ such that for any $T \leq \tau$, there exists a unique strong solution $(X, Y, Z, Z_0) \in \mathbb{S}^2((\mathcal{F}_t);\mathbb{R}^2)  \times 
\mathbb{S}^2((\mathcal{F}_t);\mathbb{R}^2)  \times \mathbb{H}^2((\mathcal{F}_t);\mathbb{R}^{2 \times 3}) \times \mathbb{H}^2((\mathcal{F}_t);\mathbb{R}^{2 \times 2})$ to the FBSDE \eqref{eq::statedynamicsmfglimitmatrixform}--\eqref{eq::bsdematrixform}.
\end{theorem}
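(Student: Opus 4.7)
The plan is to prove short-time existence by a Picard/Banach fixed-point argument on the BSDE component, after substituting the coupling condition into the forward dynamics. Once the coupling $v(s)=-R^{-1}(B^{T}Y(s)+\tilde r+D\,\overline{Y}(s))$ is plugged into \eqref{eq::statedynamicsmfglimitmatrixform}, the system becomes a fully coupled linear McKean-Vlasov FBSDE in which the forward drift depends on $(Y,\overline{Y})$ and the backward driver and terminal condition depend on $(X,\overline{X})$. I would work on the Banach space $\mathbb{S}^{2}((\mathcal{F}_{t});\mathbb{R}^{2})$ and define a map $\Phi$ as follows: given $\tilde Y\in\mathbb{S}^{2}$, first solve the linear forward SDE
\begin{equation*}
dX^{\tilde Y}=\bigl(A_{0}(s)+AX^{\tilde Y}-BR^{-1}B^{T}\tilde Y-BR^{-1}\tilde r-BR^{-1}D\,\overline{\tilde Y}\bigr)ds+C_{1}X^{\tilde Y}dW^{1}+\text{(const. noise)},
\end{equation*}
and then solve the linear BSDE with driver $-(A^{T}Y+C_{1}^{T}Z_{1}+QX^{\tilde Y}+\overline{Q}\,\overline{X}^{\tilde Y}+q)$ and terminal $HX^{\tilde Y}(T)$. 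Set $\Phi(\tilde Y):=Y$.

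Well-posedness of the two decoupled steps is classical. The forward equation is a linear SDE with bounded coefficients and an $\mathbb{H}^{2}$-perturbation of the drift (since $\tilde Y\in\mathbb{S}^{2}\subset\mathbb{H}^{2}$ and $\|\overline{\tilde Y}\|_{\mathbb{S}^{2}}\le\|\tilde Y\|_{\mathbb{S}^{2}}$ by conditional Jensen), so it has a unique $\mathbb{S}^{2}$-solution. The backward equation is a linear BSDE with an $\mathbb{L}^{2}$-terminal value and an $\mathbb{H}^{2}$-integrable affine driver, so classical results (e.g.\ Pardoux--Peng) give a unique solution $(Y,Z,Z_{0})\in\mathbb{S}^{2}\times\mathbb{H}^{2}\times\mathbb{H}^{2}$. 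Thus $\Phi$ is well-defined.

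The contraction step is where the short-time restriction enters. For two inputs $\tilde Y^{1},\tilde Y^{2}$, Itô on $|\Delta X|^{2}$, Cauchy--Schwarz, and Gronwall yield
\begin{equation*}
\|\Delta X\|_{\mathbb{S}^{2}}^{2}\;\le\;C_{1}(T)\,T\,\|\Delta \tilde Y\|_{\mathbb{S}^{2}}^{2},
\end{equation*}
with $C_{1}(T)$ bounded as $T\to0$, because $\Delta X(0)=0$ and the extra drift is integrated over $[0,T]$; the multiplicative noise $C_{1}\Delta X\,dW^{1}$ is absorbed by Gronwall. For the backward piece, standard linear BSDE stability (BDG + Young) gives
\begin{equation*}
\|\Delta Y\|_{\mathbb{S}^{2}}^{2}+\|\Delta Z\|_{\mathbb{H}^{2}}^{2}+\|\Delta Z_{0}\|_{\mathbb{H}^{2}}^{2}\;\le\;C_{2}(T)\bigl(\mathbb{E}|H\Delta X(T)|^{2}+\|\Delta X\|_{\mathbb{H}^{2}}^{2}\bigr)\;\le\;C_{3}(T)\,\|\Delta X\|_{\mathbb{S}^{2}}^{2},
\end{equation*}
again with $C_{3}(T)$ bounded near $0$. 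Composing the two estimates, $\|\Phi(\tilde Y^{1})-\Phi(\tilde Y^{2})\|_{\mathbb{S}^{2}}^{2}\le C_{1}C_{3}\,T\,\|\Delta\tilde Y\|_{\mathbb{S}^{2}}^{2}$, so choosing $\tau>0$ with $C_{1}(\tau)C_{3}(\tau)\tau<1$ (a condition depending only on the matrices listed in the statement) makes $\Phi$ a strict contraction on $\mathbb{S}^{2}((\mathcal{F}_{t});\mathbb{R}^{2})$ for every $T\le\tau$. Banach's fixed-point theorem then yields a unique $Y\in\mathbb{S}^{2}$, and $X:=X^{Y}$ together with the $(Z,Z_{0})$ produced in the BSDE step gives the unique strong solution of \eqref{eq::statedynamicsmfglimitmatrixform}--\eqref{eq::bsdematrixform} in the required regularity class.

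The main obstacle I anticipate is the terminal coupling $Y(T)=HX(T)$: unlike the driver contributions, this term carries no a priori $T$-factor, so one must make sure the BSDE stability constant $C_{3}(T)$ stays bounded as $T\to 0$ and that all the smallness is extracted from the forward estimate's explicit $T$-factor. The conditional-expectation dependences $\overline{X},\overline{Y}$ are not an actual difficulty because $\|\overline{U}\|_{\mathbb{S}^{2}}\le\|U\|_{\mathbb{S}^{2}}$ lets them be absorbed into the same Lipschitz estimates; similarly, the lower-order affine terms $A_{0}(s),\tilde r,q$ drop out of the difference $\Delta X,\Delta Y$ and play no role in the contraction.
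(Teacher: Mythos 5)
Your proposal is correct and is essentially the paper's argument: the paper also proves short-time well-posedness by a decoupling Picard map ($\Phi: X \mapsto (Y,Z) \mapsto \widetilde{X}$, a contraction on $\mathbb{S}^2$ for small $T$, with the $T$-factor extracted from the forward stability estimate and the conditional expectations absorbed via Jensen), the only difference being that you iterate on $Y$ (solving forward then backward) while the paper iterates on $X$ (solving backward then forward), which is immaterial.
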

\begin{proof}
The proof is an adaption of the arguments used in \cite{delarue2018probabilisticone},   Theorem 4.24.   The main difference is that there exists a term involving $\mathbb{E}[Y^{(2)}(s)|\overline{\mathcal{F}}_t^{0}]$ via the coupling condition in \eqref{eq::couplingmfcendogenous}.    Let $\Phi$ be the map constructed in the following way.  For any element $(X,Y) \in \mathbb{S}^{2}((\overline{\mathcal{F}}_t);\mathbb{R}^2)$, let $(Y, Z)$ be the solution of the following BSDE, where $s \in (0,T]$: 
\begin{equation}\label{eq::BSDEprooflocalexistence}
    \begin{cases}
        dY(s) = -\left(\textcolor{black}{AY(s)}+C_1^T Z_1(s) + Q X(s) + \overline{Q}\,\overline{X}(s) + q\right)\,ds\\
        \quad\quad\quad\quad\quad\quad\quad\quad\quad\quad\quad\quad\quad+\sum_{j=1}^{3}Z_j(s)dW^{j}(s)+\sum_{j=1}^{2}Z_{0,j}(s)dW^{0,j}(s)\\
        Y(T) = H X(T)
    \end{cases}
\end{equation}
Notice that $X \in \mathbb{S}^2((\mathcal{F}_t);\mathbb{R}^2)$ and the pair $(Y,Z) \in \mathbb{S}^2((\mathcal{F}_t);\mathbb{R}^2) \times \mathbb{H}^2((\overline{\mathcal{F}}_t);\mathbb{R}^{2\times 3})$ are progressively measurable with respect the completion of the filtration generated by $(W(s)-W(t))_{s \in [t,T]}$ and $(W^{0}(s)-W^{0}(t))_{s \in [t,T]}$.    Then, we associate to the couple $(Y,Z)$ the solution $(\widetilde{X}(s))$ of the following SDE, where $s \in (0,T]$
\begin{equation}\label{eq::sdextildelocalexistence}
    \begin{split}
        d\widetilde{X}(s)&=\left(A_0(s)\textcolor{black}{+A\widetilde X(s)}-B R^{-1}(B^T Y(s) + \tilde{r} + D \overline{Y}(s)\right)\,ds\\
                         &+C_{0,2}dW^2(s)+C_{0,3}dW^3(s)+C_1 \widetilde{X}(s) dW^1(s)\\
                         &+F_{0,1}dW^{0,1}(s)+F_{0,2}dW^{0,2}(s),\quad\quad \widetilde{X}(0) = x_0.
    \end{split}
\end{equation}
The map $\Phi$ is given by $\Phi : X \rightarrow (Y,Z) \rightarrow \widetilde{X}$.  The aim is to show that $\Phi$ is a contraction for small $T$.  To this end, let $X^{1}$ and $X^2 \in \mathbb{S}^2((\mathcal{F}_t);\mathbb{R}^2)$ and denote by $(Y^1, Z^1)$ and $(Y^2,Z^2)$ be the associated solution of the BSDE in \eqref{eq::BSDEprooflocalexistence}.  In addition, set $\widetilde{X}^1 = \Phi(X^1)$ and $\widetilde{X}^2 = \Phi(X^2)$.   The fact that $\Phi$ is a contraction follows from the following standard estimates for SDEs and BSDEs:
\begin{equation*}\label{eq::boundSDEs}
    \begin{split}
        &\mathbb{E}\left[\sup_{s \in [0,T]}\Big|\widetilde{X}^1(s)-\widetilde{X}^2(s)\Big|\right]+\mathbb{E}\left[\sup_{s \in [0,T]}\Big|\overline{\widetilde{X}^1}(s)-\overline{\widetilde{X}^2}(s)\Big|\right]\\
        &\leq   C T \mathbb{E}\Bigg[\sup_{s \in [0,T]}\Big|Y^1(s)-Y^2(s)\Big|^2+\sup_{s \in [0,T]}\Big|\overline{Y^1}(s)-\overline{Y^2}(s)\Big|^2\\
        &+\sum_{j=1}^3\int_{0}^{T}\Big|Z_j^1(s)-Z_j^2(s)\Big|\,ds+\sum_{j=1}^2\int_{0}^{T}\Big|Z_{0,j}^1(s)-Z_{0,j}^2(s)\Big|\,ds\Bigg]\\
        &\leq C T \left( \mathbb{E}\Bigg[\sup_{s \in [0,T]}\Big|X^1(s)-X^2(s)\Big|^2\Bigg] + \mathbb{E}\Bigg[\sup_{s \in [0,T]}\Big|\overline{X^1}(s)-\overline{X^2}(s)\Big|^2\Bigg]\right)
    \end{split}
\end{equation*}
\end{proof}

\noindent Before proceeding, we make the following observation. Should the solution of the FBSDE \eqref{eq::statedynamicsmfglimitmatrixform}--\eqref{eq::bsdematrixform} be linked to that of some MFC problem, a term of the form $D\,\bar{v}(s)$ would be present in the state dynamics because of the presence of a term like $D \mathbb{E}[Y(t)|\overline{\mathcal{F}}_t^0]$ in the coupling condition. However, this is not the case for our FBSDE.\\
\indent The next theorem gives us the unique existence of solutions to the FBSDE \eqref{eq::statedynamicsmfglimitmatrixform}--\eqref{eq::bsdematrixform} for general $T$.

\begin{theorem} \textcolor{black}{Under the assumption that \[\left(\frac{\kappa_f^2}{c_{1,2}} + \frac{\kappa_g^2}{c_{2,2}}-\frac{\kappa_f\kappa_e}{c_{1,2}}\right)>0,\quad\left(2\eta+\nu+\frac{\kappa_e^2}{c_{1,2}}-\frac{\kappa_f\kappa_e}{c_{1,2}}\right)>0,\]}
there exists a unique strong solution $(X, Y, Z, Z_0) \in \mathbb{S}^2((\mathcal{F}_t);\mathbb{R}^2)  \times 
\mathbb{S}^2((\mathcal{F}_t);\mathbb{R}^2)  \times \mathbb{H}^2((\mathcal{F}_t);\mathbb{R}^{2 \times 3}) \times \mathbb{H}^2((\mathcal{F}_t);\mathbb{R}^{2 \times 2})$ to the FBSDE \eqref{eq::statedynamicsmfglimitmatrixform}--\eqref{eq::bsdematrixform}.
\end{theorem}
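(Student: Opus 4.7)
The plan is to apply the Peng--Wu continuation method \cite{peng1999fully}, taking the short-horizon existence of Theorem~\ref{thm::localexistence} as the base step and bootstrapping to arbitrary $T>0$ by establishing an a priori estimate whose constant is \emph{independent} of the interval length. Because the coupling \eqref{eq::couplingmfcendogenous} makes the forward drift depend on $Y$ both pointwise and through its conditional mean $\overline{Y}(s)=\mathbb{E}[Y(s)\mid\overline{\mathcal{F}}_{s}^{0}]$, the first step is to split the FBSDE into the conditional-mean sub-system for $(\overline{X},\overline{Y})$ and the fluctuation sub-system for $(X-\overline{X},Y-\overline{Y})$. These two sub-systems are driven by disjoint families of noises, so the monotonicity analysis can be carried out on each of them separately and the resulting $\mathbb{S}^{2}$-bounds combine additively.

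Applying It\^o's formula to $\langle X^{(1)}-X^{(2)},Y^{(1)}-Y^{(2)}\rangle$ for two candidate solutions, and taking expectation, the $A$-drift terms cancel (they are transposed in the two equations) and the $C_{1}$-volatility term cancels against the backward cross term $C_{1}^{T}Z_{1}$. What remains, after decomposing $X$ and $Y$ into mean and fluctuation via the tower property, is the sum of quadratic forms
\begin{equation*}
\bigl\langle BR^{-1}B^{T}\,\delta\widetilde{Y},\delta\widetilde{Y}\bigr\rangle+\bigl\langle BR^{-1}(B^{T}+D)\,\delta\overline{Y},\delta\overline{Y}\bigr\rangle+\bigl\langle Q\,\delta\widetilde{X},\delta\widetilde{X}\bigr\rangle+\bigl\langle(Q+\overline{Q})\,\delta\overline{X},\delta\overline{X}\bigr\rangle
\end{equation*}
plus the terminal contribution $\langle H\,\delta X(T),\delta X(T)\rangle$. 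A direct computation from the definitions of $B$, $R$, $D$ in \eqref{eq:matricesdsexogenousprice}, \eqref{eq:matricesexogenouspricefunctional}, \eqref{eq::matricescouplingnew} yields
\begin{equation*}
BR^{-1}B^{T}=\begin{bmatrix}\frac{\kappa_{f}^{2}}{c_{1,2}}+\frac{\kappa_{g}^{2}}{c_{2,2}} & -\frac{\kappa_{f}\kappa_{e}}{c_{1,2}}\\[2pt] -\frac{\kappa_{f}\kappa_{e}}{c_{1,2}} & \frac{\kappa_{e}^{2}}{c_{1,2}}+2\eta+2\nu\end{bmatrix},\qquad BR^{-1}(B^{T}+D)=\begin{bmatrix}\frac{\kappa_{f}^{2}}{c_{1,2}}+\frac{\kappa_{g}^{2}}{c_{2,2}} & -\frac{\kappa_{f}\kappa_{e}}{c_{1,2}}\\[2pt] -\frac{\kappa_{f}\kappa_{e}}{c_{1,2}} & \frac{\kappa_{e}^{2}}{c_{1,2}}+2\eta+\nu\end{bmatrix}.
\end{equation*}
Both are symmetric $2\times 2$ matrices with a common negative off-diagonal; diagonal dominance forces both to be strictly positive definite \emph{precisely} under the two stated hypotheses (note that $2\eta+\nu>0$ is the tighter of the two $\nu$-conditions, so it implies $2\eta+2\nu>0$ automatically). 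Combined with $Q,Q+\overline{Q}\succeq 0$ and $H\succeq 0$, this supplies the G-monotonicity constant needed to close the forward--backward pairing.

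With the a priori estimate in hand, the continuation argument proceeds by interpolating linearly between the target FBSDE and a decoupled reference system for which solvability is trivial, and showing that solvability at any homotopy parameter $\lambda\in[0,1]$ implies solvability at $\lambda+\delta_{0}$ for a universal step $\delta_{0}>0$ that depends only on the monotonicity constants above and on the norms of the coefficient matrices. This can be carried out along the lines of \cite{delarue2018probabilisticone}, Theorem~4.24; finitely many such steps then cover the interval $[0,T]$, and uniqueness falls out of the same a priori estimate applied to the difference of two candidate solutions.

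The main obstacle is the non-standard $D\overline{Y}$ term in \eqref{eq::couplingmfcendogenous}: as emphasised in the remark preceding the theorem, it does not arise from a variational principle and therefore cannot be reabsorbed into a potential-game reformulation as in Proposition~\ref{prop::propositionexistenceanduniquenessmfc}. Keeping the conditional-mean FBSDE distinct from its fluctuation counterpart throughout the estimate is precisely what allows this term to be accounted for -- its role is to shift the $(2,2)$-entry of the mean-system coupling matrix from $2\eta+2\nu$ down to $2\eta+\nu$, which explains why the hypothesis $\bigl(2\eta+\nu+\kappa_{e}^{2}/c_{1,2}-\kappa_{f}\kappa_{e}/c_{1,2}\bigr)>0$ is the binding one for global well-posedness.
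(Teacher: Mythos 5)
Your proposal is correct and follows the same overall strategy as the paper: short-time existence as in Theorem~\ref{thm::localexistence}, then the Peng--Wu continuation method driven by a monotonicity estimate obtained from It\^o's formula applied to $\langle \hat X,\hat Y\rangle$, with the $A$-terms and the $C_1$/$Z_1$ cross terms cancelling exactly as you describe. The one genuine difference is how the monotonicity constant is extracted from the drift coupling. The paper works scalar-by-scalar: it bounds the cross term $2\frac{\kappa_f\kappa_e}{c_{1,2}}\hat y^{(1)}\hat y^{(2)}$ by Young's inequality and the conditional-mean term $\nu\,\hat{\overline y}^{(2)}\hat y^{(2)}$ by the tower property plus Jensen, which produces exactly the two stated scalar conditions as the constant $\beta_2$. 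You instead use the orthogonal decomposition $\hat Y=(\hat Y-\hat{\overline Y})+\hat{\overline Y}$ (the cross terms vanish by the tower property, since the coefficient matrices are deterministic) and reduce the estimate to positive definiteness of $BR^{-1}B^{T}$ and $BR^{-1}(B^{T}+D)$; your explicit computation of these two matrices is correct, and the stated hypotheses are precisely row-wise diagonal dominance for them. This buys a slightly cleaner structural picture -- it makes transparent why only the $(2,2)$-entry is affected by the market-clearing term $D\overline Y$, and it shows the hypotheses could in principle be weakened to the determinant conditions $\det(BR^{-1}B^{T})>0$ and $\det(BR^{-1}(B^{T}+D))>0$. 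Two minor imprecisions: ``precisely under the two stated hypotheses'' overstates the matter, since diagonal dominance is sufficient but not necessary for positive definiteness (the needed implication, hypotheses $\Rightarrow$ positive definiteness $\Rightarrow$ strict monotonicity in $y$, is the correct one and is all you use); and the continuation step itself is the one of \cite{peng1999fully}, Theorem~2.2, rather than of \cite{delarue2018probabilisticone}, Theorem~4.24, which is the contraction argument behind the short-horizon result.
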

\begin{proof}
The proof hinges on the continuation method of \cite{peng1999fully} and reduces to verify that assumption (H2.1) in the previous paper holds true for our system, in expectation. Notice that in our case their $2\times 2$ full-rank matrix $G$ is the identity matrix. In order to facilitate the comparison, we rewrite the FBSDE \eqref{eq::statedynamicsmfglimitmatrixform}--\eqref{eq::bsdematrixform} in terms of the following functional $b, f, \sigma, \sigma_0, \Phi$ and of a vector $\theta$ where all the static parameters are collected
\begin{equation*}
    \begin{split}
        &\textcolor{black}{b(s,Y(s),X(s),\tilde{a}(s),\theta):=\left(A_0(s)+AX(s)-B R^{-1}(B^{T}Y(s)+\tilde{r}+D\overline{Y}(s)\right)}\\
        &\textcolor{black}{f(s,Z_1(s),Y(s),X(s),\theta):=\left(A^TY(s)+C_1^T Z_1(s)+Q X(s) + \overline{Q} \overline{X}(s) + q\right)}\\
        &\Phi(X(T)):=H X(T),\\
        &\sigma(X^{(1)}(s), \theta):=
    \begin{bmatrix}
        \sigma X^{(1)}(s)&0&0\\
        0&-\sigma_1\sqrt{1-\rho^2}&-\sigma_2
    \end{bmatrix}\\
    &\sigma_0(\theta):=
    \begin{bmatrix}
            0&0\\
-\sigma_1\rho&\tilde{\sigma}_2\\    
    \end{bmatrix}\\
    &\theta = (\kappa_f, \kappa_g, c_{1,2}, c_{2,2}, \eta, \nu, c_{1,1}, c_{2,1}, h, \sigma_1, \rho, \sigma_2, \sigma, \tilde{\sigma}_2, b, \gamma, a, \lambda)
    \end{split}
\end{equation*}
as
\begin{equation*}\label{eq::fbsde_continuation}
    \begin{split}
        dX(s)&=\textcolor{black}{b(s,Y(s),X(s),\tilde{a}(s),\theta)}\,ds + \sigma(X^{(1)}(s), \theta)dW(s) + \sigma_0(\theta)dW^{0}(s).\\
        dY(s)&=\textcolor{black}{-f(s,Z_1(s),Y(s),X(s),\theta)}\,ds + Z(s)dW(s) + Z_0(s)dW^{0}(s),
    \end{split}
\end{equation*}
where $X(0)=x_0$ and $Y(T) = \Phi(X(T))$.  
\noindent We use the following notation:
\begin{equation*}
    u = \begin{bmatrix}
            x \\
            y \\
            z \\
            z_0
        \end{bmatrix},\quad\quad A(s,u) = \begin{bmatrix}
            - f\\
              b\\
        \sigma\\
        \sigma_0\\
        \end{bmatrix}(s,u).
\end{equation*}
Besides, for all pairs $(x, y, z, z_0), (x^{'}, y^{'}, z^{'}, z^{'}_0) \in \mathbb{L}^{2}(\mathcal{F};\mathbb{R}^2 \times \mathbb{R}^2 \times \mathbb{R}^{2\times 3}\times\mathbb{R}^{2\times 2})$, we denote by $\hat{x}=x-x^{'}$, $\hat{y}=y-y^{'}$, $\hat{z}=z-z^{'}$, and $\hat{z}_0=z_0-z_0^{'}$. We have:
\begin{equation}\label{eq:term_f}
\begin{split}
        &\textcolor{black}{\langle -f(s,z_1,y, x, \theta) - (-f(s, z_1^{'}, y',x^{'}, \theta)), x-x^{'}\rangle}\\
        &= - \sigma \hat{z}_1^{(1)} \hat{x}^{(1)} - b(1-\gamma)A_k^2 (\hat{x}^{(1)})^2 - \frac{b \gamma A_k^2}{2} \hat{\overline{x}}^{(1)} \hat{x}^{(1)}+\delta \hat y^{(1)}\hat x^{(1)}       
\end{split}
\end{equation}
\textcolor{black}{\begin{equation}\label{eq:term_b}
    \begin{split}
        &\langle b(s, y, \tilde{a}, \theta)-b(s, y^{'},\tilde{a}, \theta), y-y^{'} \rangle\\
        &=-\left(\frac{\kappa_f^2}{c_{1,2}} + \frac{\kappa_g^2}{c_{2,2}}\right)(\hat{y}^{(1)})^2 + 2 \frac{\kappa_f\kappa_e}{c_{1,2}}\hat{y}^{(1)}\hat{y}^{(2)}-\left(2(\eta+\nu)+\frac{\kappa_e^2}{c_{1,2}}\right)(\hat{y}^{(2)})^2 + \nu \hat{\overline{y}}^{(2)}  \hat{y}^{(2)}-\delta\hat{x}^{(1)}\hat{y}^{(1)}
    \end{split}
\end{equation}}
and 
\begin{equation}\label{eq:term_sigma}
\langle \sigma (x,\theta)-\sigma(x',\theta),z\rangle= \sigma \hat{x}^{(1)} \hat{z}_1^{(1)};
\end{equation}
notice that $\sigma_0$ does not lead to any contribution since it is state-independent. By combining \eqref{eq:term_f}, \eqref{eq:term_b}, \eqref{eq:term_sigma}, we get
\textcolor{black}{\begin{equation*}
\begin{split}
&\langle A(s, u)- A(s, u'),u-u'\rangle=\\
&- b(1-\gamma)A_k^2 (\hat{x}^{(1)})^2 - \frac{b \gamma A_k^2}{2} \hat{\overline{x}}^{(1)} \hat{x}^{(1)}-\left(\frac{\kappa_f^2}{c_{1,2}} + \frac{\kappa_g^2}{c_{2,2}}\right)(\hat{y}^{(1)})^2 + 2 \frac{\kappa_f\kappa_e}{c_{1,2}}\hat{y}^{(1)}\hat{y}^{(2)}\\
&-\left(2(\eta+\nu)+\frac{\kappa_e^2}{c_{1,2}}\right)(\hat{y}^{(2)})^2 + \nu \hat{\overline{y}}^{(2)}  \hat{y}^{(2)}.  
 \end{split}
\end{equation*}}
\noindent Now, by taking the expectation, by the law of total expectation and Jensen inequality:
\textcolor{black}{
\begin{eqnarray*}\label{eq:monotonicity}
&&\mathbb{E}\left[\langle A(s, u)- A(s, u'),u-u'\rangle\right]\leq - b(1-\gamma)A_k^2 \mathbb{E}[(\hat{x}^{(1)})^2]-\frac{b\gamma A_k^2}{2} \mathbb{E}[(\hat{x}^{(1)})^2] \\
&&- \left(\frac{\kappa_f^2}{c_{1,2}} + \frac{\kappa_g^2}{c_{2,2}}-\frac{\kappa_f\kappa_e}{c_{1,2}}\right)\mathbb{E}[(\hat{y}^{(1)})^2]\\
&&\quad-\left(2\eta+\nu+\frac{\kappa_e^2}{c_{1,2}}-\frac{\kappa_f\kappa_e}{c_{1,2}}\right) \mathbb{E}[(\hat{y}^{(2)})^2]\\
&\leq& - \left(\frac{\kappa_f^2}{c_{1,2}} + \frac{\kappa_g^2}{c_{2,2}}-\frac{\kappa_f\kappa_e}{c_{1,2}}\right)\mathbb{E}[(\hat{y}^{(1)})^2]\\
&&\quad-\left(2\eta+\nu+\frac{\kappa_e^2}{c_{1,2}}-\frac{\kappa_f\kappa_e}{c_{1,2}}\right) \mathbb{E}[(\hat{y}^{(2)})^2]\\
&\leq& - \min\left(\left(\frac{\kappa_f^2}{c_{1,2}} + \frac{\kappa_g^2}{c_{2,2}}-\frac{\kappa_f\kappa_e}{c_{1,2}}\right),\left(2\eta+\nu+\frac{\kappa_e^2}{c_{1,2}}-\frac{\kappa_f\kappa_e}{c_{1,2}}\right)\right)\mathbb{E}[|\hat{y}|^2]\\
\end{eqnarray*}}
In addition, we have:
\begin{equation*}
    \langle \Phi(x_T)-\Phi(x_T^{'}), \hat{x}\rangle = \lambda (\hat{x}^{(1)})^2 \geq 0.
\end{equation*}
\noindent   In particular, the monotone conditions in (H2.3) in \cite{peng1999fully} hold with $\beta_1 = 0$ and \textcolor{black}{$\beta_2 = \min\left(\left(\frac{\kappa_f^2}{c_{1,2}} + \frac{\kappa_g^2}{c_{2,2}}-\frac{\kappa_f\kappa_e}{c_{1,2}}\right),\left(2\eta+\nu+\frac{\kappa_e^2}{c_{1,2}}-\frac{\kappa_f\kappa_e}{c_{1,2}}\right)\right)$}. Now, the continuation method hinges on the following steps. First, one has to introduce a family of FBSDE indexed by a parameter $\varrho \in [0,1]$,
\begin{equation}
    \begin{split}
        &dx_t^{\varrho} = [-(1-\varrho)\beta_2(y_t^{\varrho})+\varrho b(t, u_t^{\varrho}, \tilde{a}(t),\theta)+\phi_t]\,dt\\
        &+[ \varrho \sigma(t, u_t) + \psi_t]dW(t)+\sigma_0(\theta)\,dW^0(t)\\
        &dy_t^{\varrho} = - [ \varrho f(t, u_t^{\varrho})+\gamma_t]\,dt + z_t^{\varrho}dW(t) + z_{0,t}^{\varrho}\,dW^0(t),\\
        &x_0^{\varrho}=x_0, \quad y_T^{\varrho} = \varrho \Phi(x_T^{\varrho}) + (1-\varrho) x_T^{\varrho} + \xi,
    \end{split}
\end{equation}
where $\phi, \psi$ and $\gamma$ are given processes in $\mathbb{H}^2((\mathcal{F}_t);\mathbb{R}^2)$ and $\mathbb{H}^2((\mathcal{F}_t);\mathbb{R}^{2\times 3})$, respectively, and $\xi \in \mathbb{L}^2(\mathcal{F}_T;\mathbb{R}^2)$. Notice that the previous system is constructed in such a way that for $\rho=0$ its solution is straightforward, whereas for $\rho=1$ it implies the existence of a unique strong solution to the FBSDE \eqref{eq::statedynamicsmfglimitmatrixform}--\eqref{eq::bsdematrixform}. In particular, the monotone conditions above allow to extend the existence from $\varrho=0$ to $\varrho=1$. The proof follows directly from the one of Theorem 2.2 in \cite{peng1999fully}, and it is therefore omitted.
\end{proof}
It is important to note that the existence of a unique strong solution to the FBSDE \eqref{eq::statedynamicsmfglimitmatrixform}--\eqref{eq::bsdematrixform} holds for every level of competition $\gamma \in [0,1]$.

\begin{corollary}\label{prop::propositionestimates}
    The solution $(X, Y, Z^{0}, Z)$ to the FBSDE \eqref{eq::statedynamicsmfglimitmatrixform}--\eqref{eq::bsdematrixform} satisfies the following estimate
    \begin{equation*}
        \mathbb{E}\left[\sup_{t \in [0,T]}|X(t)|^2+\sup_{t \in [0,T]} |Y(t)|^2+\sum_{j=1}^3\int_{0}^{T}|Z_j(t)|^2\,dt + \sum_{j=1}^2\int_{0}^{T}|Z_{0,j}(t)|^2\,dt\right] \leq C,
    \end{equation*}
    where $C$ is a constant depending only on $T$ and on the matrices of the system $A_0(s)$, $B$, $C_{0,2}$,$C_{0,3}$, $C_1$, $F_{0,1}$, $F_{0,2}$, $Q$, $\overline{Q}$, $q$.   
\end{corollary}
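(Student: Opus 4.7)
The plan is to combine standard a priori estimates for the forward and backward components, using the coupling condition \eqref{eq::couplingmfcendogenous} to express the control $v$ in terms of $Y$ and $\overline{Y}$, and then close the circular dependence by invoking the same monotonicity computations that appeared in the proof of the preceding existence theorem. Existence of the solution is already granted, so the task is purely one of bounding its norm.

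First, I would substitute $v(s) = -R^{-1}(B^{T}Y(s)+\tilde r + D\,\overline{Y}(s))$ into \eqref{eq::statedynamicsmfglimitmatrixform}, apply It\^o's formula to $|X(t)|^{2}$, take expectation, use Cauchy--Schwarz and Young's inequality to separate the $X$ and $Y$ contributions, and then apply the Burkholder--Davis--Gundy inequality to the stochastic integrals (including the state-dependent term $C_{1}X\,dW^{1}$, which contributes a term controlled via Gronwall). Using the trivial bound $\mathbb{E}|\overline{Y}(s)|^{2}\le\mathbb{E}|Y(s)|^{2}$ from Jensen, this yields a forward bound of the shape
\begin{equation*}
    \mathbb{E}\Bigl[\sup_{s\in[0,T]}|X(s)|^{2}\Bigr] \le C_{1}\Bigl(1+\mathbb{E}|x_{0}|^{2}+\mathbb{E}\int_{0}^{T}|Y(s)|^{2}\,ds\Bigr).
\end{equation*}
Symmetrically, applying It\^o to $|Y(t)|^{2}$ on $[t,T]$, using $Y(T)=HX(T)$, and standard BSDE estimates (again BDG for the martingale pieces) furnishes
\begin{equation*}
    \mathbb{E}\Bigl[\sup_{s\in[0,T]}|Y(s)|^{2}\Bigr] + \sum_{j=1}^{3}\mathbb{E}\int_{0}^{T}|Z_{j}(s)|^{2}\,ds + \sum_{j=1}^{2}\mathbb{E}\int_{0}^{T}|Z_{0,j}(s)|^{2}\,ds \le C_{2}\Bigl(1+\mathbb{E}|X(T)|^{2}+\mathbb{E}\int_{0}^{T}|X(s)|^{2}\,ds\Bigr).
\end{equation*}

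To close the system, I would apply It\^o's formula to $\langle X(t),Y(t)\rangle$ and take expectations. After integration by parts, the drift terms reproduce exactly the quantity $\mathbb{E}\langle A(s,u)-A(s,0),u\rangle$ that was computed in the proof of the existence theorem, plus an affine contribution from $A_{0}(s)$, $q$, $\tilde r$, and the terminal $\langle HX(T),X(T)\rangle$. Invoking the monotonicity bound established there (with parameter $\beta_{2}>0$ under the standing assumption on the model coefficients) gives
\begin{equation*}
    \beta_{2}\,\mathbb{E}\int_{0}^{T}|Y(s)|^{2}\,ds + \lambda\,\mathbb{E}|X^{(1)}(T)|^{2} \le C_{3}\Bigl(1+\mathbb{E}\int_{0}^{T}|X(s)|^{2}\,ds\Bigr),
\end{equation*}
which controls $\mathbb{E}\int_{0}^{T}|Y|^{2}\,ds$ linearly in $\mathbb{E}\int_{0}^{T}|X|^{2}\,ds$. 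Plugging this into the forward bound and applying Gronwall's lemma yields an absolute estimate for $\mathbb{E}\sup_{s}|X(s)|^{2}$; reinjecting it back into the backward bound controls $\mathbb{E}\sup_{s}|Y(s)|^{2}$ together with the $Z$ and $Z_{0}$ norms, producing the claimed inequality with a constant $C$ depending only on $T$ and on the system matrices.

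The main obstacle I anticipate is the presence of the conditional expectation $\overline{Y}$ in the coupling, which couples the forward and backward dynamics in a non-Markovian way and prevents a direct application of classical FBSDE estimates. The remedy is the Jensen-type collapse noted above, which lets the mean-field term be absorbed into the genuine $Y$-norm once full expectation is taken; the other mildly delicate point is the multiplicative diffusion $C_{1}X\,dW^{1}$, which forces one to postpone the Gronwall step until after the monotonicity closure, rather than applying it naively term by term.
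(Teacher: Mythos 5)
Your route is genuinely different from the paper's. The paper never invokes the monotonicity structure here: it derives a backward estimate (It\^o on $|Y|^2$, then BDG/Young, plus the analogous computation for $\overline{Y}$) bounding $\mathbb{E}\sup|Y|^2$ and the $Z$-norms by $\mathbb{E}|X(T)|^2+\mathbb{E}\int|X|^2+\mathbb{E}\int|\overline{X}|^2$ (plus $\int|Y|^2$, removed by Gronwall), a forward estimate bounding $\mathbb{E}\sup|X|^2$ by $\mathbb{E}\int|Y|^2+\mathbb{E}\int|\overline{Y}|^2$, and then simply ``combines'' the two. Your first two steps coincide with that; your third step --- It\^o on $\langle X,Y\rangle$ plus the monotonicity constant $\beta_2$ from the existence theorem --- is an additional ingredient, and it is in fact the right instinct: it is the mechanism by which the Peng--Wu framework breaks exactly the circularity that the paper's ``combine the inequalities'' glosses over.

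However, the closure as you wrote it does not go through for general $T$. Your duality inequality $\beta_2\,\mathbb{E}\int_0^T|Y|^2\,ds \le C_3\bigl(1+\mathbb{E}\int_0^T|X|^2\,ds\bigr)$, fed into the forward bound, yields $\mathbb{E}\sup_{s\le t}|X(s)|^2 \le C'\bigl(1+\int_0^T\mathbb{E}\sup_{r\le s}|X(r)|^2\,ds\bigr)$: the integral on the right runs over the \emph{whole} interval $[0,T]$, not $[0,t]$, so this is not a Gronwall inequality and only closes when $C'T<1$, i.e.\ for small time --- precisely the case already covered by Theorem 5.2. Two repairs are needed. First, in the duality estimate the quadratic form in $\hat{x}$ coming from $-\langle f,\hat x\rangle$ is nonpositive (it helps), so the only source of $\mathbb{E}\int|X|^2$ on the right is the zero-order terms $A_0,\tilde r,q,\sigma(0),\sigma_0$; by Young's inequality their contribution is $\epsilon\,\mathbb{E}\int(|X|^2+|Y|^2+|Z|^2+|Z_0|^2)+C_\epsilon$ with $\epsilon$ at your disposal, and after substituting the forward and backward estimates the $\epsilon$-terms can be absorbed for $\epsilon$ small (chosen after $T$). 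Second, you dropped the boundary term $\mathbb{E}\langle x_0,Y(0)\rangle$ produced by It\^o on $\langle X,Y\rangle$; it must be split by Young's inequality and its $\mathbb{E}|Y(0)|^2$ part absorbed via the backward-then-forward estimates. With these corrections one obtains $\mathbb{E}\int_0^T|Y|^2\,ds\le C$ outright, and the rest of your argument (forward bound, then backward bound) delivers the claim. Two cosmetic points: the terminal duality term is $\lambda\,\mathbb{E}|X^{(2)}(T)|^2$, not $\lambda\,\mathbb{E}|X^{(1)}(T)|^2$, since $H=\mathrm{diag}(0,\lambda)$ (the paper's existence proof contains the same slip); and your argument uses the standing sign condition guaranteeing $\beta_2>0$, which the corollary's statement does not advertise but which is in force for the solution whose existence was established.
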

\begin{proof}
By applying Ito's formula to $|Y(t)|^2$ we obtain:
\begin{equation}\label{eq::equationItoProof}
    \begin{split}
        &\mathbb{E}[|Y(t)|^2] + \mathbb{E}\left[\sum_{j=1}^{3}\int_{t}^{T}|Z_j(s)|^2\,ds\right] + \mathbb{E}\left[\sum_{j=1}^2\int_{t}^{T}|Z_{0,j}(s)|^2\,ds\right]\\
        &\leq \mathbb{E}\left[H |X(T)|^2\right] + \epsilon\,\mathbb{E}\left[\int_{t}^{T}|Z_1(s)|^2\,ds\right] \\
        &+ C \mathbb{E}\left[\int_{t}^{T}|Y(s)|^2\,ds+\int_{t}^{T}|X(s)|^2\,ds+\int_{t}^{T}|\overline{X}(s)|^2\,ds\right].
    \end{split}
\end{equation}
Then, by choosing $\epsilon>0$ small, say $\epsilon<1$, and applying Gronwall's inequality, we obtain
\begin{equation*}
    \begin{split}
        &\mathbb{E}[|Y(t)|^2] + \mathbb{E}\left[\sum_{j=1}^{3}\int_{t}^{T}|Z_j(s)|^2\,ds\right] + \mathbb{E}\left[\sum_{j=1}^2\int_{t}^{T}|Z_{0,j}(s)|^2\,ds\right]\\
        &\leq \mathbb{E}\left[H |X(T)|^2\right] + C \mathbb{E}\left[\int_{t}^{T}|X(s)|^2\,ds+\int_{t}^{T}|\overline{X}(s)|^2\,ds\right].
    \end{split}
\end{equation*}
Now, by using again Ito's formula, a simple application of the Burkholder-Davis-Gundy inequality, Young's inequality, and the triangular inequality gives, for new constants $C_H$, $C_{Q}$, $C_{\overline{Q}}$,  $C>0$:
\begin{equation*}
    \begin{split}
        &\mathbb{E}\left[\sup_{t \in [0,T]}|Y(t)|^2 + \sum_{j=1}^{3}\int_{t}^{T}|Z_j(s)|^2\,ds + \sum_{j=1}^{2}\int_{t}^{T}|Z_{0,j}(s)|^2\,ds\right]\\
        &\leq C_H \mathbb{E}\left[|X(T)|^2+\int_{0}^{T}|Y(s)|^2\,ds\right]\\
        &+ \mathbb{E}\left[\epsilon\int_{0}^{T}|Z(s)|^2\,ds+C_Q\int_{0}^{T}|X(s)|^2\,ds+C_{\overline{Q}}\int_{0}^{T}|\overline{X}(s)|^2\,ds\right]\\
        &+\epsilon_1\mathbb{E}\left[\sup_{t \in [0,T]}|Y(t)|^2\right] + C \mathbb{E}\left[\sum_{j=1}^{3}\int_{0}^{T}|Z_j(s)|^2\,ds\right]\\
        &+\epsilon_2\mathbb{E}\left[\sup_{t \in [0,T]}|Y(t)|^2\right] + C \mathbb{E}\left[\sum_{j=1}^{2}\int_{0}^{T}|Z_{0,j}(s)|^2\,ds\right]\\
    \end{split}
\end{equation*}
Choosing now $\epsilon_1, \epsilon_2$ such that $\epsilon_1+\epsilon_1<1$ and using Equation \eqref{eq::equationItoProof}, we get for a new constant $C>0$:
\begin{equation}\label{eq::equationItoProof1}
    \begin{split}
        &\mathbb{E}\left[\sup_{t \in [0,T]}|Y(t)|^2 + \sum_{j=1}^{3}\int_{t}^{T}|Z_j(s)|^2\,ds + \sum_{j=1}^{2}\int_{t}^{T}|Z_{0,j}(s)|^2\,ds\right]\\
        &\leq C\mathbb{E}\left[|X(T)|^2+\int_{0}^{T}|Y(s)|^2\,ds+C\int_{0}^{T}|X(s)|^2\,ds +C\int_{0}^{T}|\overline{X}(s)|^2\,ds\right]
    \end{split}
\end{equation}
Analogous computations can be performed on $(\bar{Y}(t))$, which lead, for a new constant $C>0,$ to:
\begin{equation}\label{eq::equationItoProof2}
    \begin{split}
        &\mathbb{E}\left[\sup_{t \in [0,T]}|\overline{Y}(t)|^2 + \sum_{j=1}^{3}\int_{t}^{T}|\overline{Z}_j(s)|^2\,ds + \sum_{j=1}^{2}\int_{t}^{T}|\overline{Z}_{0,j}(s)|^2\,ds\right]\\
        &\leq C\mathbb{E}\left[|X(T)|^2+\int_{0}^{T}|X(s)|^2\,ds+\int_{0}^{T}|\overline{X}(s)|^2\,ds\right]\\
        &+C\mathbb{E}\left[\int_{0}^{T}|X(s)|^2\,ds+\int_{0}^{T}|\overline{Y}(s)|^2\,ds\right]
    \end{split}
\end{equation}
By combining Equation \eqref{eq::equationItoProof1} and \eqref{eq::equationItoProof2}, for a new constant $C_{\epsilon}>0$, we get:
\begin{equation}\label{eq::equationItoProof3}
    \begin{split}
        &\mathbb{E}\left[\sup_{t \in [0,T]}|Y(t)|^2+\sup_{t \in [0,T]}|\overline{Y}(t)|^2+\overline{C}_{\epsilon}\sum_{j=1}^{3}\int_{0}^{T}|Z_j(s)|^2\,ds+\overline{C}_{\epsilon}\sum_{j=1}^{2}\int_{0}^{T}|Z_{0,j}(s)|^2\,ds\right]\\
        &\leq C\mathbb{E}\left[|X(T)|^2 + |\overline{X}(T)|^2 \right] + C \mathbb{E}\left[\int_{0}^{T}|X(s)|^2\,ds+\int_{0}^{T}|\overline{X}(s)|^2\,ds\right]\\
        &+C \mathbb{E}\left[\int_{0}^{T}|Y(s)|^2\,ds + \int_{0}^{T}|\overline{Y}(s)|^2\,ds \right]
    \end{split}
\end{equation}
where $\overline{C}_{\epsilon}:=(1-C_{\epsilon})$. On the other hand, the standard estimates for SDEs give, for a new constant $C>0$, 
\begin{equation}\label{eq::equationItoProof4}
    \begin{split}
        &\mathbb{E}\left[\sup_{t \in [0,T]}|X(t)|^2\right]+\mathbb{E}\left[\sup_{t \in [0,T]}|\overline{X}(t)|^2\right]\\
        &\leq |X(0)|^2+\mathbb{E}\left[\int_{0}^{T}|Y(s)|^2\,ds+\int_{0}^{T}|\overline{Y}(s)|^2\,ds\right]+C.
    \end{split}
\end{equation}
Combining the inequalities \eqref{eq::equationItoProof3} and \eqref{eq::equationItoProof4} and a simple application of the Burkholder-Davis-Gundy inequality establishes the claim.
\end{proof}
We are now ready to investigate if our FBSDE \eqref{eq::statedynamicsmfglimitmatrixform}--\eqref{eq::bsdematrixform} actually provides an approximation of the market price and if so, how accurate it is.  In particular, if we use $(- 2 \mathbb{E}[Y^{(2)}(t)|\overline{\mathcal{F}}_t^{0}])$ as the input $(\overline{\omega}_t)$, where $(Y^{(2)}(t))$ is the unique solution to the FSBDE \eqref{eq::statedynamicsmfglimitmatrixform}--\eqref{eq::bsdematrixform}, then by Theorem 3.6 in \cite{graber2016linear} the optimal strategy for the individual firm is given by
\begin{equation}\label{eq::optimaltrading}
    \beta^{\star,i}(t):=-2 \nu Y^{(2),i}(t) + 2 \nu \mathbb{E}[Y^{(2)}(t)|\overline{\mathcal{F}}_t^{0}],
\end{equation}
where $(Y^{(2),i})$ is the (second component of the) solution to \eqref{eq::statedynamics} and \eqref{eq::bsdemfc} with $(\overline{\omega}_t = - 2 \mathbb{E}[Y^{(2)}(t)|\overline{\mathcal{F}}_t^{0}])$ and $W^{1} \equiv W^{1,i}$ and $W^{3} \equiv W^{3,i}$.    The next theorem shows that the market clearing condition in the large-$N$ limit, i.e.,
\begin{equation*}
    \lim_{N \rightarrow \infty} \frac{1}{N}\sum_{i=1}^{N}\beta^{\star,i}(t) = 0,\quad dt \otimes d\mathbb{P}-a.s.
\end{equation*}
holds.
\begin{theorem}\label{thm::marketclearingcondition}
Let $T>0$ and $(\beta^{\star,i}(t))$ be defined as in Equation \eqref{eq::optimaltrading}.  Then
\begin{equation}\label{thm::mktclearingone}
    \lim_{N \rightarrow \infty}\mathbb{E}\left[\Bigg|\frac{1}{N}\sum_{i=1}^{N}\beta^{\star,i}(t)\Bigg|^2\right] = 0.
\end{equation}
Moreover, there exists some constant $C$ independent of $N$ such that:
\begin{equation}\label{thm::mktclearingtwo}
    \mathbb{E}\left[\Bigg|\frac{1}{N}\sum_{i=1}^{N}\beta^{\star,i}(t)\Bigg|^2\right] \leq \frac{C}{N}
\end{equation}
\end{theorem}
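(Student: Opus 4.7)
The plan is to exploit the conditional independence across firms, given the common-noise filtration, of the individual FBSDE solutions. Substituting the explicit form of $\beta^{\star,i}(t)$ from \eqref{eq::optimaltrading} and rearranging gives
\begin{equation*}
\frac{1}{N}\sum_{i=1}^{N}\beta^{\star,i}(t) \;=\; -2\nu\left(\frac{1}{N}\sum_{i=1}^{N}Y^{(2),i}(t) \;-\; \mathbb{E}[Y^{(2)}(t)\mid\overline{\mathcal{F}}_t^{0}]\right).
\end{equation*}
The first key step is the symmetry identity $\mathbb{E}[Y^{(2),i}(t)\mid\overline{\mathcal{F}}_t^{0}] = \mathbb{E}[Y^{(2)}(t)\mid\overline{\mathcal{F}}_t^{0}]$ for every $i$. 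Each $(X^{i},Y^{i})$ solves an individual FBSDE whose coefficients, terminal datum, and $\overline{\mathcal{F}}^{0}$-adapted price input $\overline{\omega}_t=-2\,\mathbb{E}[Y^{(2)}(t)\mid\overline{\mathcal{F}}_t^{0}]$ coincide with those of the McKean--Vlasov FBSDE, and the idiosyncratic inputs $(W^{1,i},W^{2,i},W^{3,i},\xi^{i})$ are i.i.d. copies of $(W^{1},W^{2},W^{3},\xi)$. Pathwise uniqueness (available under the monotonicity assumption of the preceding existence theorem) then forces the conditional law of $Y^{(2),i}(t)$ given $\overline{\mathcal{F}}_t^{0}$ to coincide with that of $Y^{(2)}(t)$, whence the conditional expectations agree.

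Define the centred variables $V_i(t):=Y^{(2),i}(t)-\mathbb{E}[Y^{(2),i}(t)\mid\overline{\mathcal{F}}_t^{0}]$, so that the display above reads $-2\nu\cdot N^{-1}\sum_i V_i(t)$. The decisive observation is that $\{V_i(t)\}_{i=1}^{N}$ are conditionally independent given $\overline{\mathcal{F}}_t^{0}$, with zero conditional mean. Indeed $Y^{(2),i}(t)$ is $\mathcal{F}_t^{i}$-measurable and, once the common noise is fixed, depends solely on firm $i$'s idiosyncratic Brownian motions and initial condition, which are mutually independent across $i$ and independent of $\overline{\mathcal{F}}_t^{0}$. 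Squaring, taking expectation, and applying the tower property then annihilates the cross terms and yields
\begin{equation*}
\mathbb{E}\left[\left|\frac{1}{N}\sum_{i=1}^{N}\beta^{\star,i}(t)\right|^{2}\right] \;=\; \frac{4\nu^{2}}{N^{2}}\sum_{i=1}^{N}\mathbb{E}[V_i(t)^{2}].
\end{equation*}

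To conclude \eqref{thm::mktclearingtwo}, I would bound $\mathbb{E}[V_i(t)^{2}]\leq 4\,\mathbb{E}[(Y^{(2),i}(t))^{2}]$ by conditional Jensen, and then invoke Corollary \ref{prop::propositionestimates}: the same a priori estimate applies verbatim to each individual FBSDE, since the coefficients and the $\mathbb{L}^{2}$-norm of the price input $\overline{\omega}_t$ are controlled by the McKean--Vlasov solution independently of $i$, giving $\mathbb{E}[(Y^{(2),i}(t))^{2}]\leq C$ uniformly in $i$ and $t\in[0,T]$. This delivers \eqref{thm::mktclearingtwo} with a constant of the order $16\nu^{2}C$, and \eqref{thm::mktclearingone} is an immediate consequence of letting $N\to\infty$.

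The main obstacle is the symmetry identity of the first paragraph: it relies crucially on the fact that, from the viewpoint of an individual firm, the price $\overline{\omega}_t$ is exogenous and $\overline{\mathcal{F}}^{0}$-adapted, so that the decoupled FBSDE for firm $i$ has the same law (conditional on the common noise) as that for the representative firm. Pathwise uniqueness under the monotonicity condition of the preceding theorem makes this rigorous; once this symmetry is granted, the remainder of the argument is a standard conditional-variance computation with the characteristic $1/N$ rate.
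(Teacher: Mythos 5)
Your proposal is correct and follows essentially the same route as the paper: both arguments rest on the observation that the $(Y^{(2),i}(t))_{i=1}^{N}$ are conditionally i.i.d.\ given $\overline{\mathcal{F}}_t^{0}$, use the tower property to kill the cross terms and obtain the $4\nu^{2}N^{-2}\sum_{i}\mathbb{E}\bigl[\,\bigl|Y^{(2),i}(t)-\mathbb{E}[Y^{(2)}(t)\mid\overline{\mathcal{F}}_t^{0}]\bigr|^{2}\bigr]$ bound, and then conclude via the a priori estimates of Corollary \ref{prop::propositionestimates}. Your write-up is in fact somewhat more explicit than the paper's (which simply asserts the conditional i.i.d.\ property, citing Lemma 5.1 of \cite{fujii2022equilibrium}) in justifying why the conditional laws coincide and why the cross terms vanish.
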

\begin{proof}
    The proof is similar to the one of Lemma 5.1 in \cite{fujii2022equilibrium}. Because $(Y^{(2),i}(t))_{i=1}^{N}$ are $(\overline{\mathcal{F}}^0_t)$ conditionally i.i.d., the ladder property of the conditional expectation yields
    \begin{equation*}
        \begin{split}
            \mathbb{E}\left[\Bigg|\frac{1}{N}\sum_{i=1}^{N}\beta^{\star,i}(t)\Bigg|^2\right] &= \mathbb{E}\left[\Bigg|\frac{1}{N}\sum_{i=1}^{N}\left(-2 \nu Y^{(2),i}(t) + 2 \nu \mathbb{E}[Y^{(2)}(t)|\overline{\mathcal{F}}_t^{0}]\right)\Bigg|^2\right]\\
            &\leq   \frac{4 \nu^2}{N^2}\sum_{i=1}^{N}\mathbb{E}\left[\Big|Y^{(2),i}(t)-\mathbb{E}[Y^{(2)}(t)|\overline{\mathcal{F}}_t^{0}]\Big|^2\right]
        \end{split}
    \end{equation*}
\end{proof}
\noindent Since $\sup_{t \in [0,T]}\mathbb{E}\left[\Big|Y^{(2),i}(t)-\mathbb{E}[Y^{(2)}(t)|\overline{\mathcal{F}}_t^{0}]\Big|^2\right] \leq 2 \sup_{t \in [0,T]} \mathbb{E}\left[\Big|Y^{(2),1}(t)\Big|^2\right]$, the conclusion follows from the estimates in Corollary \ref{prop::propositionestimates}.  Finally, we conclude this section by providing explicit solutions to the FSBDE \eqref{eq::statedynamicsmfglimitmatrixform}--\eqref{eq::bsdematrixform} in terms of the following system of Riccati equations; the proof of its derivation follows the same line of arguments as in Appendix \ref{app::LQMFC} and it is, therefore, omitted.   In particular, Theorem \ref{thm::localexistence} guarantees that solutions to \eqref{eq::statedynamicsmfglimitmatrixform}--\eqref{eq::bsdematrixform} are uniquely determined in terms of these Riccati equations.

\begin{equation}\label{eq::riccati_P_new}
\begin{cases}
        \overset{\cdot}{P}(t) + C_1^T P(t) C_1 + Q - P(t) B R^{-1} B^T P(t) \textcolor{black}{+P(t) A+A^TP(t)}= 0;\\
        P(T) = H.\\
\end{cases}
\end{equation}
\begin{equation}\label{eq::riccati_PI_new}
    \begin{cases}
        \overset{\cdot}{\Pi}(t) + C_1^T P(t) C_1 + (Q + \bar{Q}) - \Pi(t) B R^{-1} \textcolor{black}{(B^T+D)} \Pi(t) +\textcolor{black}{+\Pi(t) A+A^T\Pi(t)}= 0;\\
        \Pi(T) = H.\\
    \end{cases}
\end{equation}
\begin{equation}\label{eq::riccati_phi_new}
\begin{cases}
    \overset{\cdot}{\phi}(t) - \Pi(t) B R^{-1} \tilde{r} + q +\Pi(t) A_0(t) - \Pi(t) B R^{-1} (B^T+D) \phi(t) 
    \textcolor{black}{+A^T\phi(t)}=0 \\
    \phi(T)=0.
    \end{cases}
\end{equation}
Matrices $C_1, Q, \overline{Q}, B, R, H, q, A_0(t)$ are as in Section \ref{sec::mfcapproximation}, whereas $\tilde{r}$ and $D$ are defined in Equation \eqref{eq::matricescouplingnew}. In particular, $(X(t))$ and solves the following equation:
\begin{equation}\label{eq::optimaltrajectfinal}
\begin{split}
    dX(t)   &=(A_0(s)+\textcolor{black}{AX(s)}-B R^{-1} B^T P(t) (X(t) - \overline{X}(t)))\,ds\\
            &- B R^{-1} ((B^T + D) \Pi(t) \overline{X}(t) + (B^T+D) \phi(t) + \tilde{r})\,ds\\
            &+ C_1 X(s)\,dW^1(s) + C_{0,2} dW^2(s) + C_{0,3} dW^3(s)\\
            &+F_{0,1}dW^{0,1}(s)+F_{0,2}dW^{0,2}(s),\quad\quad X(0)=x_0.
\end{split}
\end{equation}
In addition, the equilibrium price is given by
\begin{equation}\label{eq::equilibriumprice}
    \overline{\omega}_t = - 2 \left((\Pi(t))_{2,1}\overline{K}(t)+(\Pi(t))_{2,2}\overline{\widetilde{X}}(t)\right)-2(\phi(t))_{2},
\end{equation}
where $(\Pi(t))_{\ell,m}$ denotes the entry $(\ell,m)$ of the matrix $\Pi(t)$ and $(\phi(t))_2$ the second component of the vector $\phi(t)$, $t \in [0,T]$.
\section{Numerical Illustration}
\label{sec::numerical_illustrations}
We illustrate here the firm's behavior in the policy scheme described in the previous sections. We consider the objective of reducing carbon emission over $T = 5$ years. The solutions of the Riccati equations \eqref{eq::riccati_P_new}--\eqref{eq::riccati_phi_new} are computed by using the MATLAB numerical integrator ode45 with a temporal resolution of $\Delta t = 10^{-3}$, which is also employed to simulate the SDE \eqref{eq::optimaltrajectfinal} via the Euler-Maruyama method. All the expectations below are computed via the classical Monte Carlo method using $5 \cdot 10^3$ trajectories. Table \ref{tab:par_model_simulation} reports the employed numerical value for the parameters along with a synthetic yet exhaustive description in the caption.
\begin{table}[htbp]
    \centering
    \caption{Numerical values for the parameters employed in the numerical illustration. The values of the parameter for which the source is not reported are taken for the sake of the exercise. The value for the volatility of the BAU emissions $\sigma_1$ is taken from \cite{aid2023optimal}.  The average correlation to the common shock of emissions is taken from \cite{aid2023optimal}. The previous two values are estimated by considering the six main sectors covered by the EU ETS (Public Power and Heat, Pulp and Paper, Cement, Lime and Glass, Metals, Oil and Gas, Other) and the corresponding yearly verified emissions from 2008 to 2012. The $\nu$ parameter is the current value of the market depth in the EEX exchange (e.g., \cite{biagini2024}). The flexibility parameter $\eta$ is taken from \cite{biagini2024}, where it is obtained by updating the approximation in \cite{gollier2024cost}, Section 4.4. The value of $h$ is taken from \cite{aid2023optimal} and it roughly corresponds to the running maximum of the EU ETS carbon price, reached in February 2023. The initial level of BAU emissions is taken from \cite{biagini2024} and it corresponds to the order of magnitude of the energy and industrial pollution in the past four years, as per the European Environmental Agency and the data provider Statista.com. The default value for $\tilde{a}$ has been chosen according to the following rationale. In \cite{aid2023optimal} the initial firm's allowances endowment is $A_0=0.1$. By assuming that the same amount of allowances are distributed over $T=5$ years and a constant value for $\tilde{a}(t)$ we have that $0.1 = \int_{0}^{T}\tilde{a}(t)\,dt=\tilde{a} T$, whence $\tilde{a}=0.1/T$.}
    \begin{tabular}{ll} \toprule
        \textbf{Parameter} & \textbf{Numerical value and short description} \\ \midrule
        $\kappa_f$     & $5$; parameter linked to the fossil-fuel level of capital;\\
        $\kappa_g$     & $3\gamma+0.2$; parameter linked to the green level of capital;\\
        $\delta$       & $1\%$; depreciation rate of the level of capital;\\
        $\sigma$       & $0.5\%$ \textrm{Gtons/year}$^{\frac{1}{2}}$; volatility of the level of capital.\\
        $\kappa_e$     & $2$ \textrm{Gtons/(level of capital)}; proportionality factor production - emissions.\\
        $\tilde{a}$    & $[0,20]$ \textrm{Gtons/year}; growth rate emissions permits. Default: $0.5/T$ \textrm{Gtons/year}.\\
        $\tilde{\sigma}_2$ & $0.2$ \textrm{Gtons/year}$^{\frac{1}{2}}$;volatility of the emissions permits.\\
        $\sigma_1$     & 0.2 \textrm{Gtons/year}$^{\frac{1}{2}}$; volatility of the BAU emissions.\\
        $\sigma_2$     & 0.5 \textrm{Gtons/year}$^{\frac{1}{2}}$; volatility of the short-term emissions.\\
        $\rho$         & 0.92; average correlation to the common shock of emissions.\\
        $a$            & $50$ \textrm{Eur}; parameter of the inverse demand curve.\\
        $b$            & $0.07$; parameter of the inverse demand curve.\\
        $A_k$          & $2$; productivity level of the capital.\\
        $\nu$          & $285.713$ \textrm{Gton}$^2$/\textrm{Eur} a year; market depth.\\
        $\eta$         & $0.211$ \textrm{Gton}$^2$/\textrm{Eur} a year; flexibility parameter.\\
        $h$            & $80$ \textrm{Eur}/\textrm{ton}; abatement cost coefficient, liner part.\\
        $c_{1,1}$      & $0.01$ \textrm{Eur/(level of capital)}; cost fossil-fuel level of capital, linear part.\\
        $c_{1,2}$      & $3$ \textrm{Eur $\cdot$ year}/\textrm{(level of capital)$^2$};cost fossil-fuel level of capital, quadratic part.\\
        $c_{2,1}$      & $0.02$ \textrm{Eur/(level of capital)}; cost green level of capital, linear part.\\
        $c_{2,2}$      &$4$ \textrm{Eur $\cdot$ year}/\textrm{(level of capital)$^2$};cost green level of capital, quadratic part.\\
        $\lambda$      & $7.5 \cdot 10^{-*}$ \textrm{Eur/ton$^2$} with $* \in \{-7,-5-3\}$; final penalty.\\
        $\kappa_0$     & $30$; initial level of capital.\\
        $\widetilde{E}_0$  & $4$ \textrm{Gtons}; initial level of BAU emissions.\\
        $A_0$           & $0.1$ \textrm{Gtons}; initial level of emissions permits. \\
        $\widetilde{X}_0$ & $A_0-\widetilde{E}_0$; initial level of the bank account.\\
        \bottomrule
    \end{tabular}
 \label{tab:par_model_simulation}
\end{table}

\subsection{Cap-and-trade system: the role of the regulator.}
\label{subsec::the_role_of_the_regulator}
This subsection emphasizes the role played by the regulator, which may be separated into two components, namely the dynamical allocation of emission allowances $\widetilde{A}(t)$ and the severity of the cap, reflected into the parameter $\lambda$. Naturally, ceteris paribus, the average level of production increases as $\tilde{a}$ increases,  although it does not play a first-order role to the representative firm's production; see Figure \ref{fig::averageproductionvstildaa}. According to our model, this dependence may be explained by the fact that the optimal production features a linear dependence on $\tilde{a}$, with a slope that depends on the solution of the Riccati equations; see Equation \eqref{eq::optimaltrajectfinal}. Also, the average pollution abatement rate $\alpha(t)$, the optimal average trading rate $\beta(t)$, and the average price of permits $\bar{\omega}_t$ naturally decreases as $\tilde{a}$ increases; see the sub-figures in Figure \ref{fig::averagesvstildaa}, from left to right and from top to bottom. We also report the simulation of one trajectory of the just mentioned quantities.  In particular, apart from the level, the dynamics of $\alpha(t)$ and $\beta(t)$ looks very similar.    We interpret this result as being symptomatic of the type of dependence on $Y^{(2)}(t)$.    More precisely, from the (optimal) coupling condition in Equation \eqref{eq::couplingmfcendogenous} we have that
\begin{equation*}
    \beta(t) = 2 \nu (\overline{Y}^{(2)}(t)-Y^{(2)}(t))\quad\text{and}\quad\alpha(t) = 2 \eta Y^{(2)}(t) - \eta h.
\end{equation*}
Whence, both $\beta(t)$ and $\alpha(t)$ depends linearly on $Y^{(2)}(t)$ and are both affected by the idiosyncratic noise. Nonetheless, $\alpha(t)$ depends on the common shocks too. Finally, by construction, $\overline{\omega}=-\mathbb{E}[\overline{Y}^{(2)}(t)|\overline{\mathcal{F}}_t^0]$, and therefore the dependence on $\overline{Y}^{(2)}(t)$ is, again, linear. Moreover, as in \cite{aid2023optimal}, we observe large oscillations of the price near the maturity; see the last sub-figure in Figure \ref{fig::averagesvstildaa}.\\
\indent Furthermore, since the market power of the representative firm is (almost) equal to the one of the population ($\gamma=0.5$), the representative firm cannot charge higher prices to compensating possible lower sales; see also the discussion in the next Subsection \ref{subsec::the_economics_of_competition}.  Therefore, if the regulator tightens the cap, i.e., if $\lambda$ increases, all the other things being equal, then the production of the representative firm unambiguously decreases (Figure \ref{fig::averageproductionvstildaa}). Notice that this fact generalizes the results for monopoly of \cite{requate1993equivalence} and the one for Cournot oligopoly under taxes of \cite{requate1993pollution}. Consistently, the representative firm increases the use of the green level of capital at the expenses of the fossil-fuel level of capital; see Figure \ref{fig::averagecapital}. Finally,    Figure \ref{fig::averagebankaccount} plots three trajectories, one for every considered $\lambda$,  of the bank account $\widetilde{X}(t)$.   It seems that the representative firm pays more for a lower level of $\lambda$; in other words, a relaxation in the final penalty implicitly induces the firm to emit more. However, in the present work, the dynamic allocation of the regulator is exogenous and we do not have any compliance constraint, neither on the expected emissions nor on a point-wise value on the terminal net emissions of the representative firm, as in the very recent research paper \cite{biagini2024}.   Extending our model to such a setting is an interesting direction for future research; see Section \ref{sec::futureresearch}. 

\begin{figure}
    \centering
    \includegraphics[width=0.85\linewidth]{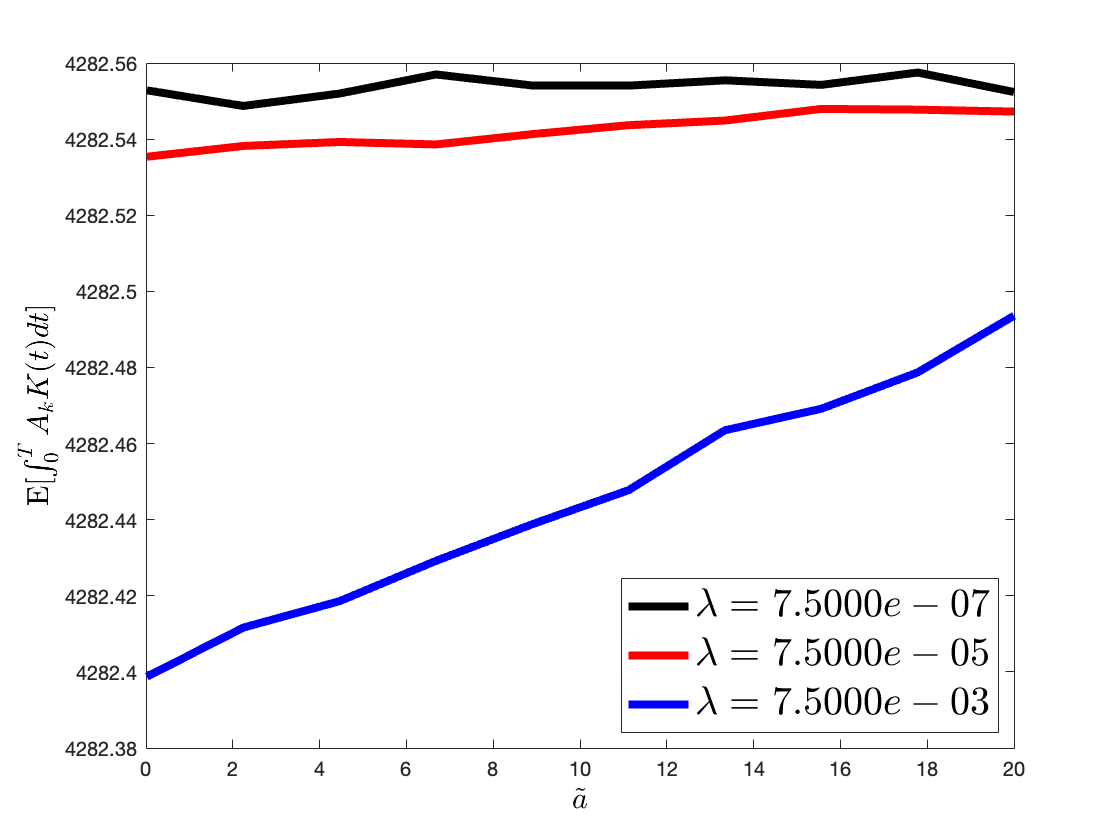}
    \caption{Effect of $\tilde{a}$ on the average level of production $\mathbb{E}\left[\int_{0}^{T}A_k K(t)\,dt\right]$. The figure plots the average level of production as a function of the growth rate emission permits $\tilde{a}$ for different levels of $\lambda$. Parameters value as in Table \ref{tab:par_model_simulation}.}
    \label{fig::averageproductionvstildaa}
\end{figure}

\begin{figure}
    \centering
    \includegraphics[width=0.49\linewidth]{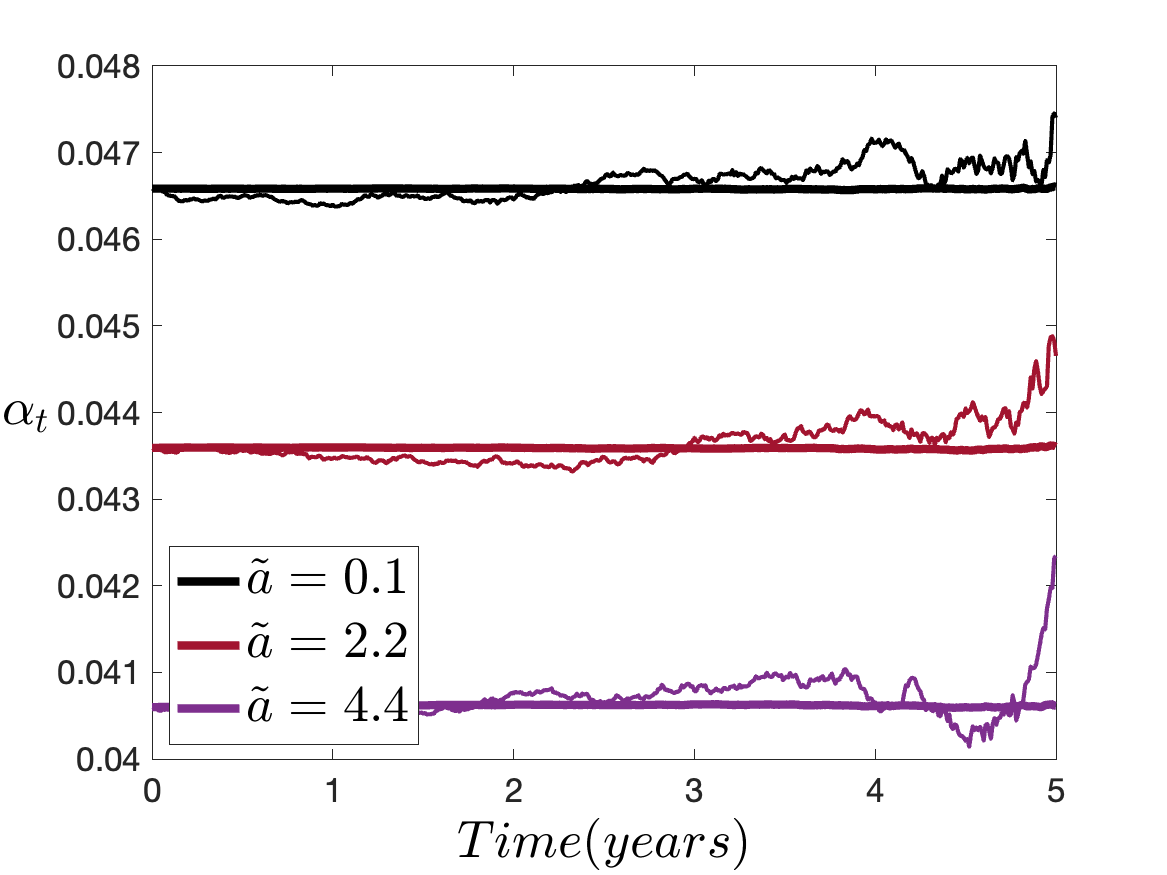}
    \includegraphics[width=0.49\linewidth]{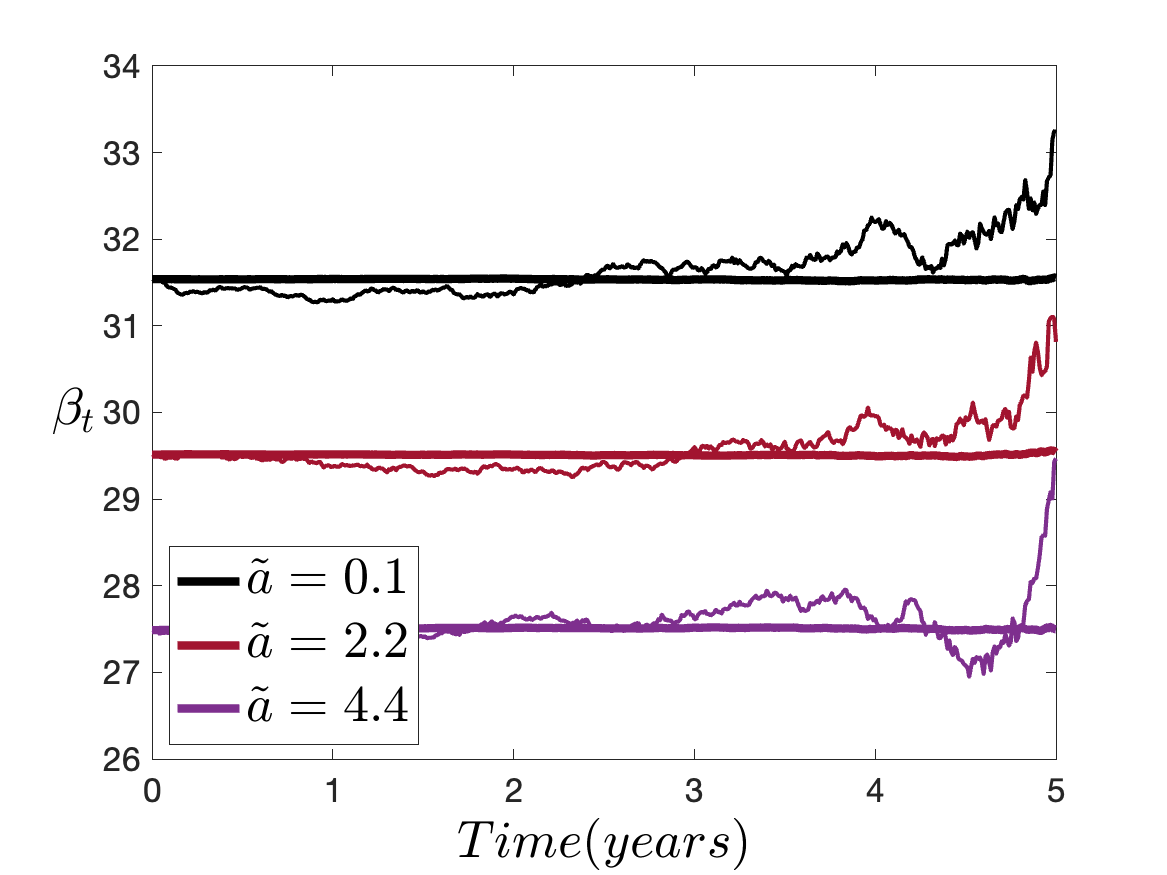}
    \includegraphics[width=0.49\linewidth]{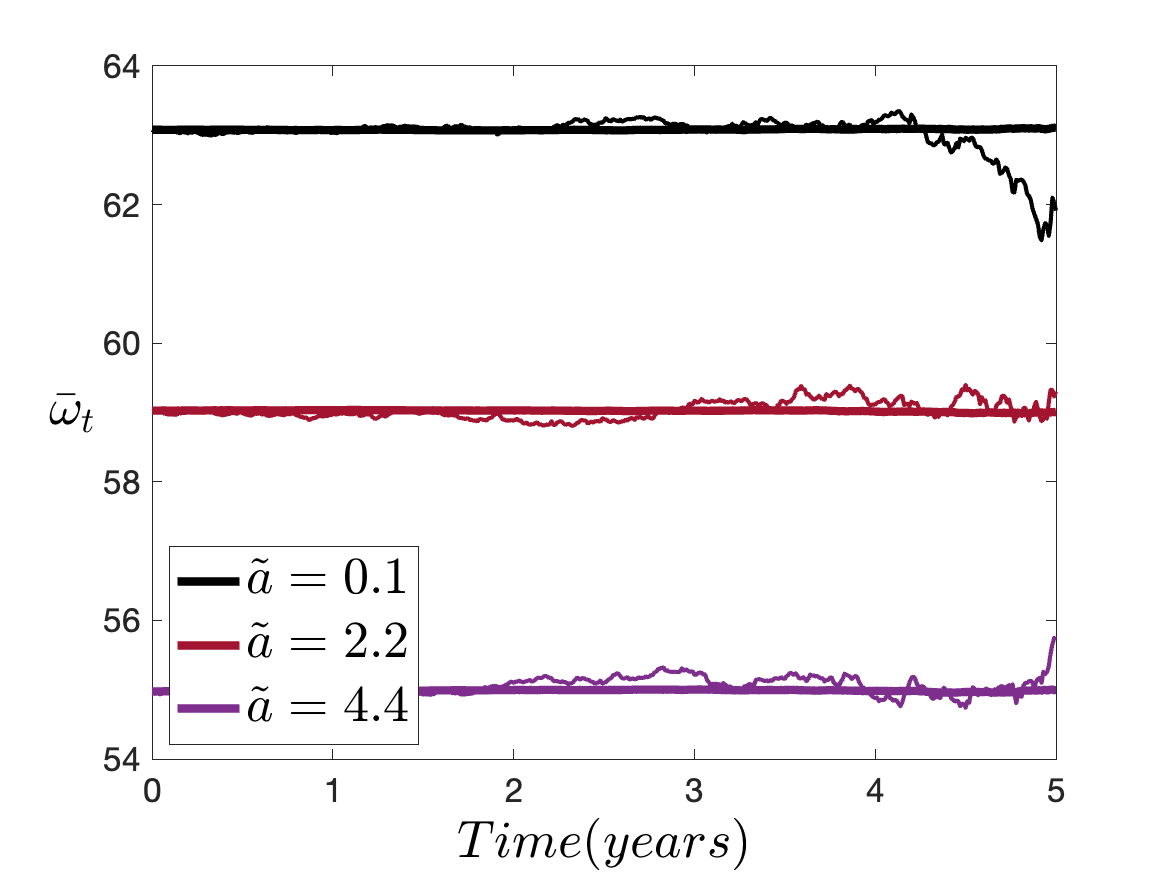}
    \includegraphics[width=0.49\linewidth]{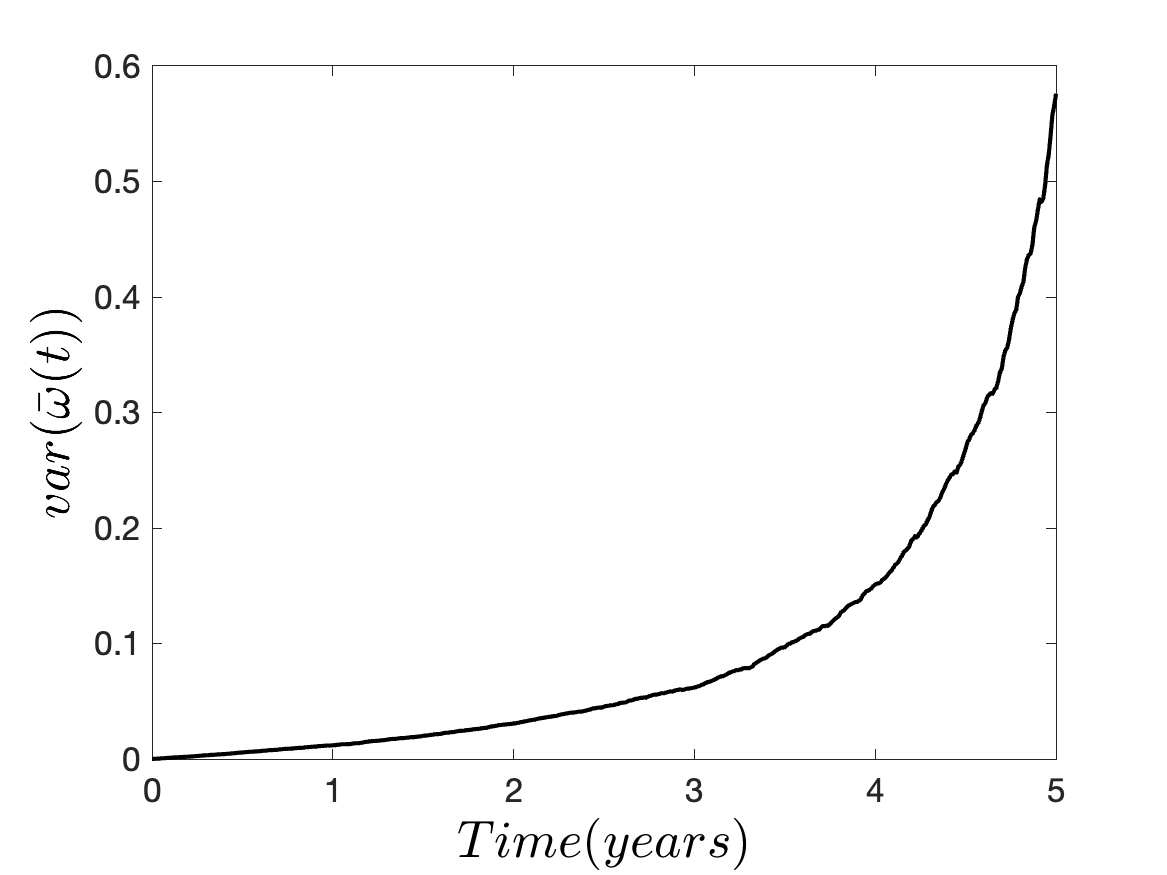}
    \caption{Effect of $\tilde{a}$ on the average pollution abatement rate $\alpha(t)$, average trading rate $\beta(t)$, and average price of permits $\bar{\omega}_t$. From left to right, from top to bottom: the figure plots 1. the optimal average level of pollution abatement $\alpha(t)$; 2. the optimal average level of optimal trading $\beta(t)$; 3. the optimal average price of permits, as a function of the growth rate emission permits $\tilde{a}$, and 4. the variance of the price of permits as a function of time. We also report a simulation of one trajectory of the corresponding quantities. Parameters value as in Table \ref{tab:par_model_simulation}.}
    \label{fig::averagesvstildaa}
\end{figure}

\begin{figure}
    \centering
    \includegraphics[width=0.49\linewidth]{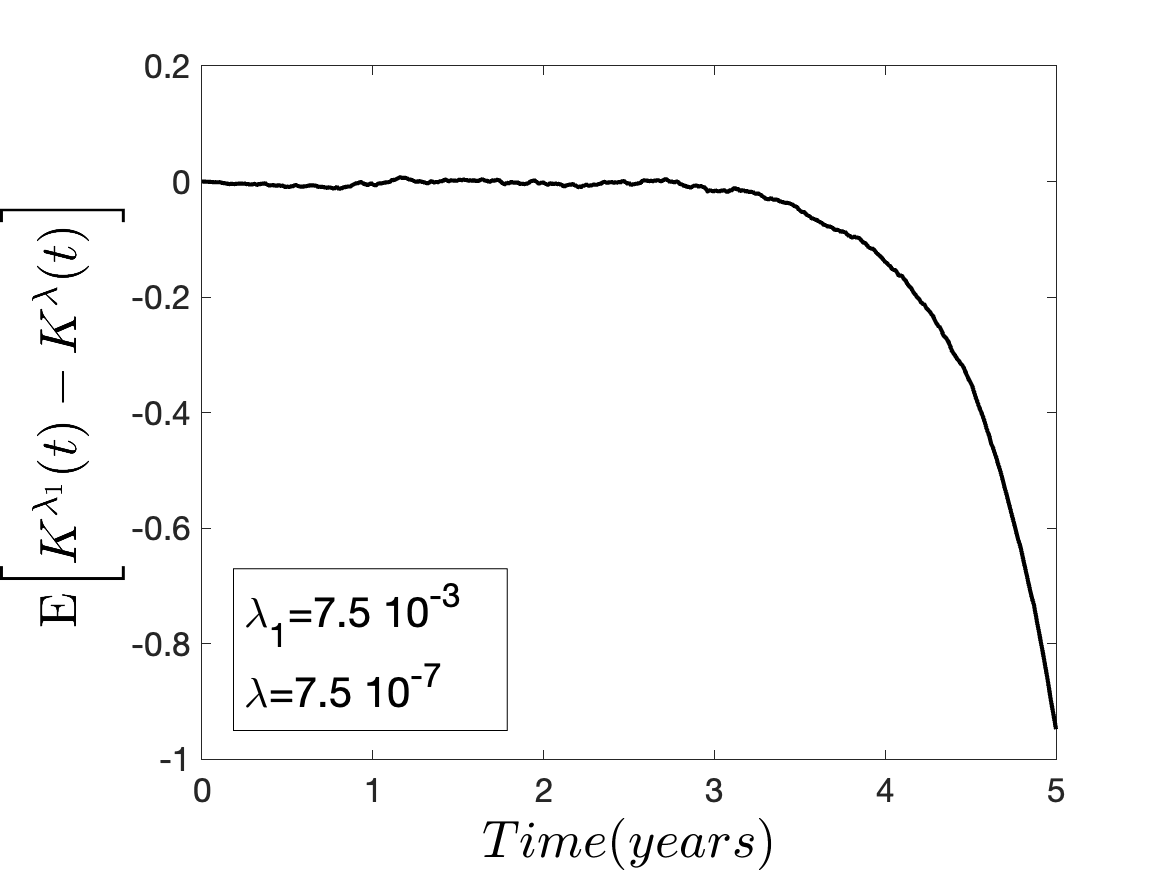}
    \includegraphics[width=0.49\linewidth]{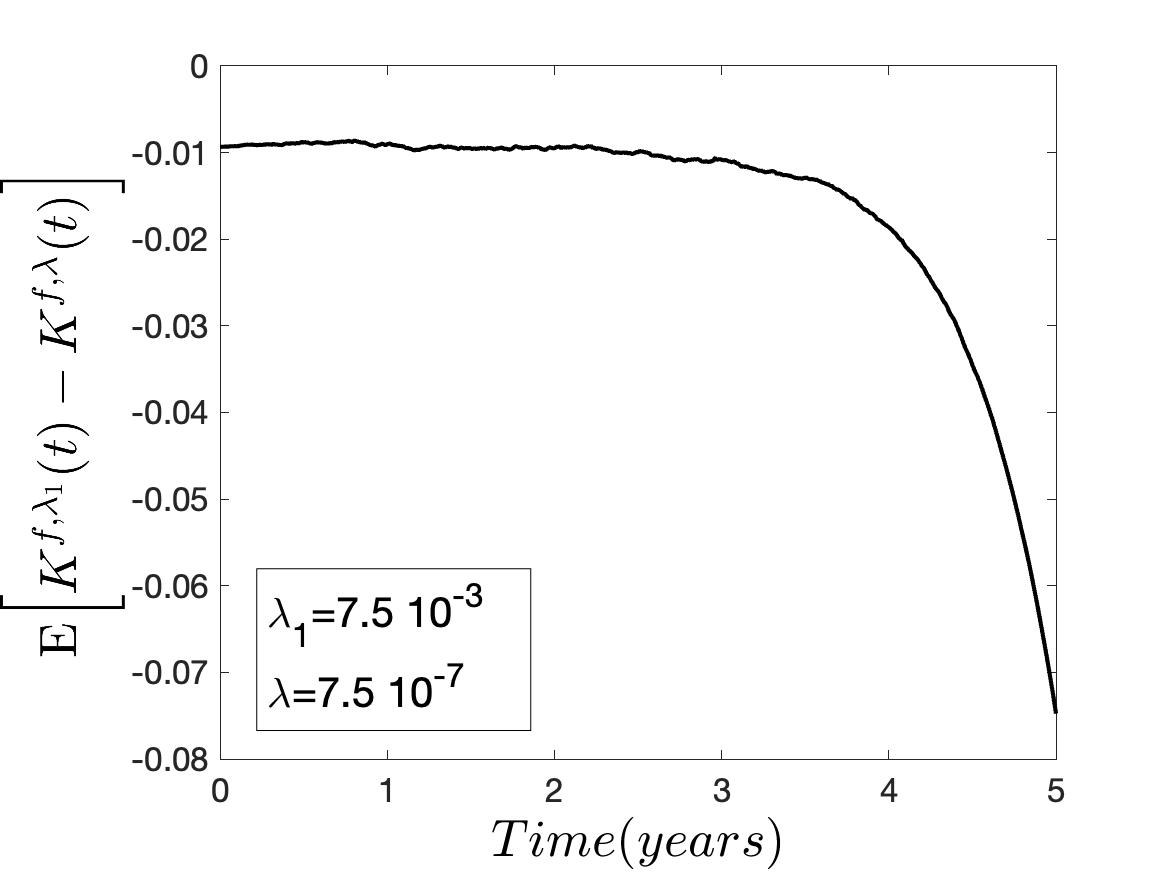}
    \includegraphics[width=0.49\linewidth]{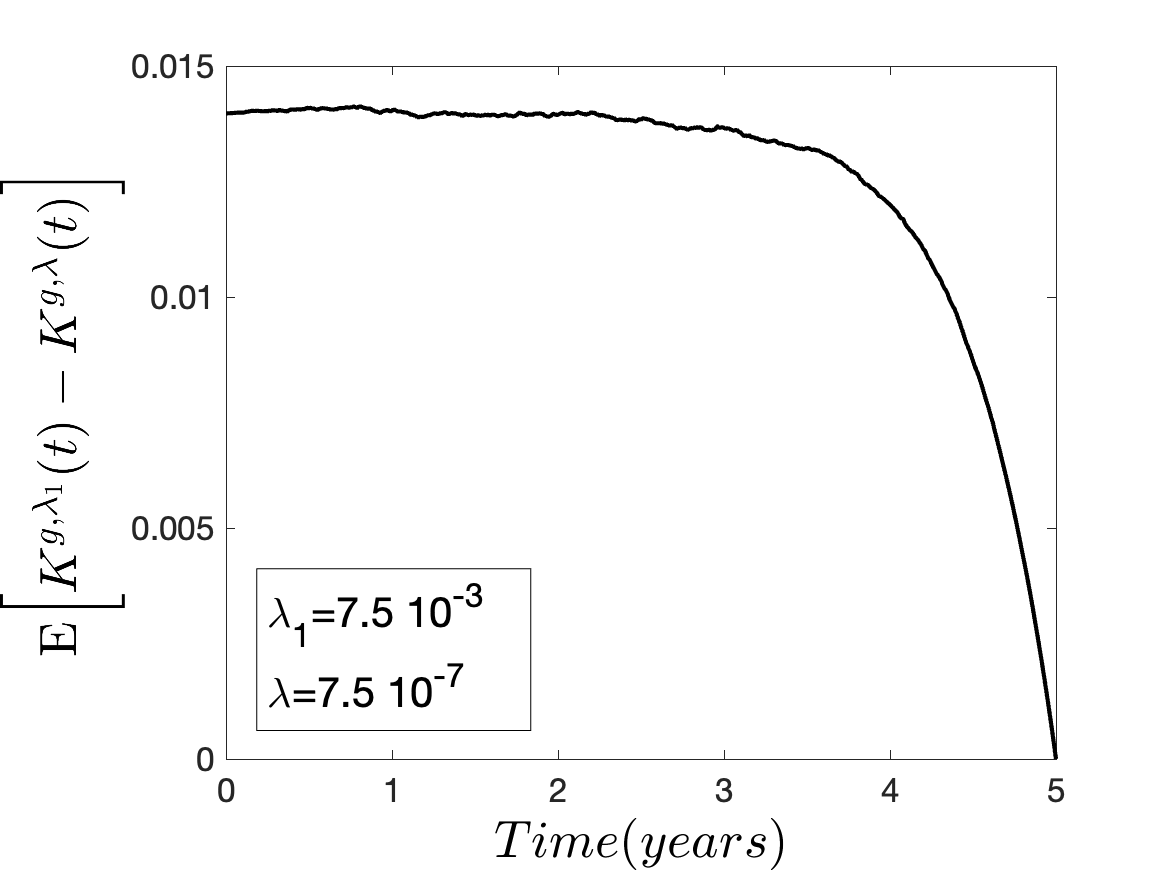}
    \caption{Effect of $\lambda$ on the average level of capital, on the average fossil-fuel level of capital, and on the average green level of capital.   From left to right, from top to bottom: The figure plots the average difference between the average level of capital, the average fossil-fuel level of capital, and the average green level of capital for two different values of the parameter $\lambda$ over the interval $[0,T]$. Parameters value as in Table \ref{tab:par_model_simulation}.}
    \label{fig::averagecapital}
\end{figure}

\begin{figure}
    \centering
    \includegraphics[width=0.85\linewidth]{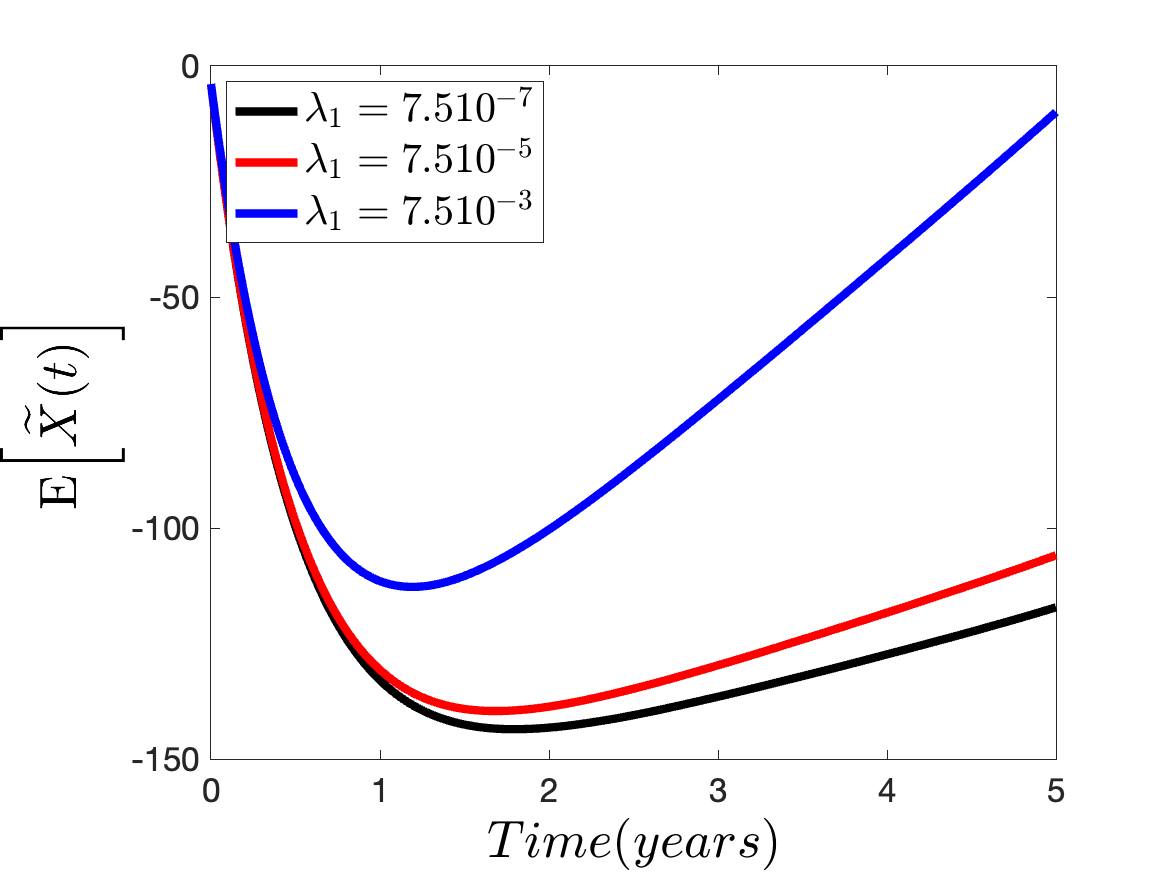}
    \caption{Effect of $\lambda$ on the average value of the bank account. The figure plots the average value of the bank account for three different values of the parameter $\lambda$ over the interval $[0,T]$. Parameters value as in Table \ref{tab:par_model_simulation}.}
    \label{fig::averagebankaccount}
\end{figure}

\newpage
\subsection{The economics of competition.}
\label{subsec::the_economics_of_competition}
Using a static model of imperfect market competition, \cite{anand2020pollution} emphasizes the importance of the degree of competition in determining the economic consequences of pollution regulation. It is therefore natural to ask whether their findings are also recovered in our (dynamic) setting with stochastic emissions\footnote{Stochastic emissions are considered as a direction for future research in \cite{anand2020pollution}, Section 7.2.} and production costs. Naturally,  the pollution regulator wants to encourage pollution abatement and discourage output reduction. Indeed, should the output be lower, consumer surplus and welfare would be hurt because of an increase in goods prices. Figure \ref{fig::gammavsprime} shows that the expected average output increases with $\gamma$,  whereas the average price of goods decreases.    This is because when $\gamma = 0$, the representative firm has significant market power and can charge higher prices, compensating it for lower sales. As $\gamma$ increases, competition with the rest of the population intensifies, thus reducing the firm's market power. This is made clear by the following relation: $p^{\overline{K}}(t)=a-b(1-\gamma) A_k K(t) - b \gamma A_k \overline{K}(t)$. Indeed, if the representative firm lowers the output, then this has a limited effect on the price because the rest of the population would increase its output in response. Consequently, the representative firm prefers pollution abatement over output reduction, with the caveat that even though the first-order partial effect is positive, it does not have the same compensatory dynamic as the one of the price. Hence, increasing the competition helps align the firm incentives with the goal of the regulation of pollution abatement; this is in line with \cite{anand2020pollution}. Consistently,  Figure \ref{fig::gammavspsecond} shows that ceteris paribus, the average level of capitals, and the average trading activity increase with an increase in the level of competition. In particular, the latter causes an increase in the average market price of permits because of increased liquidity.\\
\indent Figure \ref{fig::gammavspsecond} plots the value function along with its components. Generally,  the cap-and-trade mechanism has both a direct pollution abatement effect and an indirect output-reduction effect. Consistently with our theoretical argument in the previous paragraph, competition, i.e., a value of $\gamma \in (0,1]$, induces the representative firm to overproduce.   From a population perspective, it would be beneficial if every (representative) firm lowers the corresponding output to keep prices high. This is an unlikely scenario because no representative firm could credibly commit to such a lower output, as one would expect. The cap-and-trade mechanism should coordinate the previous mechanism in such a way that the population of firms agrees to reduce output by using the pollution constraint; naturally, this synchronization mechanism is expected to work under a suitable range of constraints imposed by the pollution regulator, the one for which the impact of output reduction on the representative firm's profits dominates the cost of pollution abatement, of trading, and production. In particular, in our numerical example costs dominate revenues, and therefore profits, as $\gamma \in (0,1]$ increases. When $\gamma=0$ (monopoly), the representative firm has significant market power and it can optimize its output.  However, should the regulator decide to tighten the cap, the output of the representative firm would further reduce and the representative firm can no longer leverage on the competition with the population of firms to implement the previously described synchronization mechanism. Therefore, a cap-and-trade mechanism hurts more monopoly than competitive firms, which is in agreement with the findings for competitive markets in \cite{anand2020pollution}.

\begin{figure}
    \centering
    \includegraphics[width=0.49\linewidth]{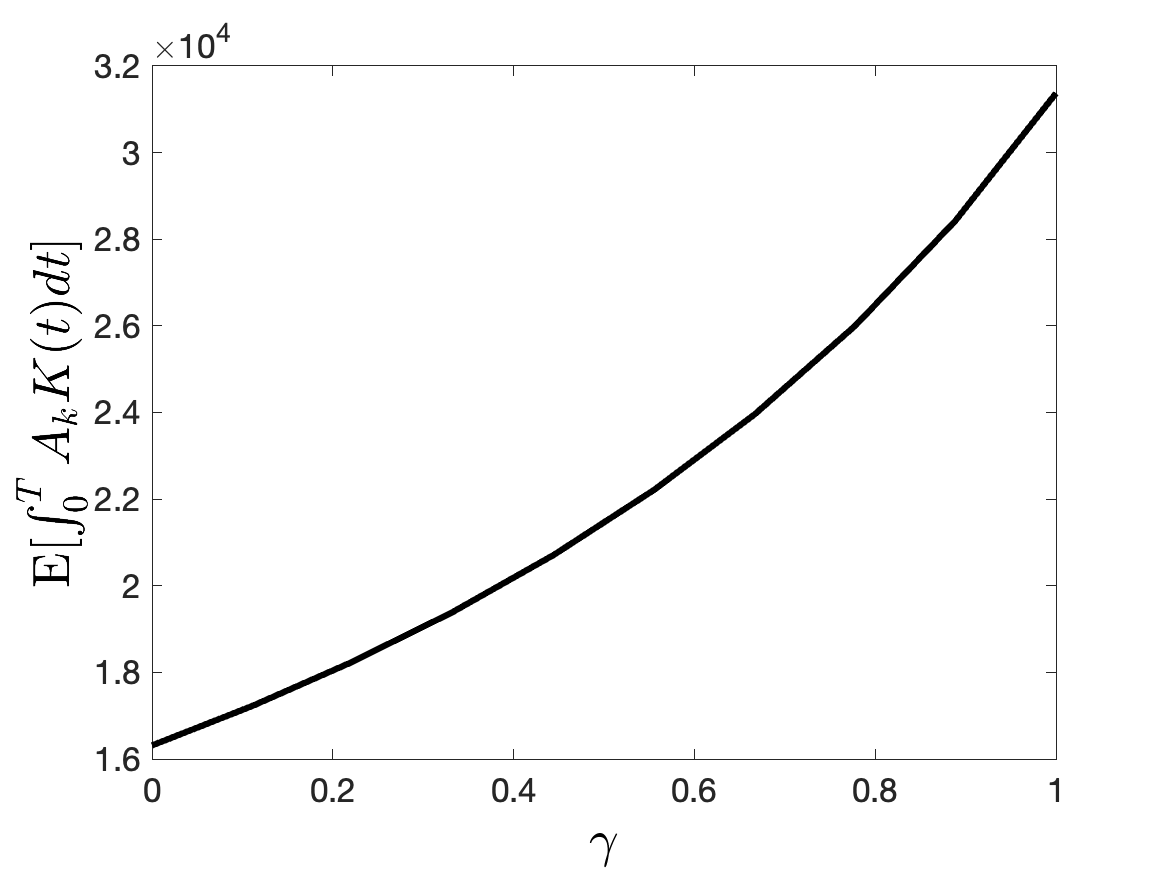}
    \includegraphics[width=0.49\linewidth]{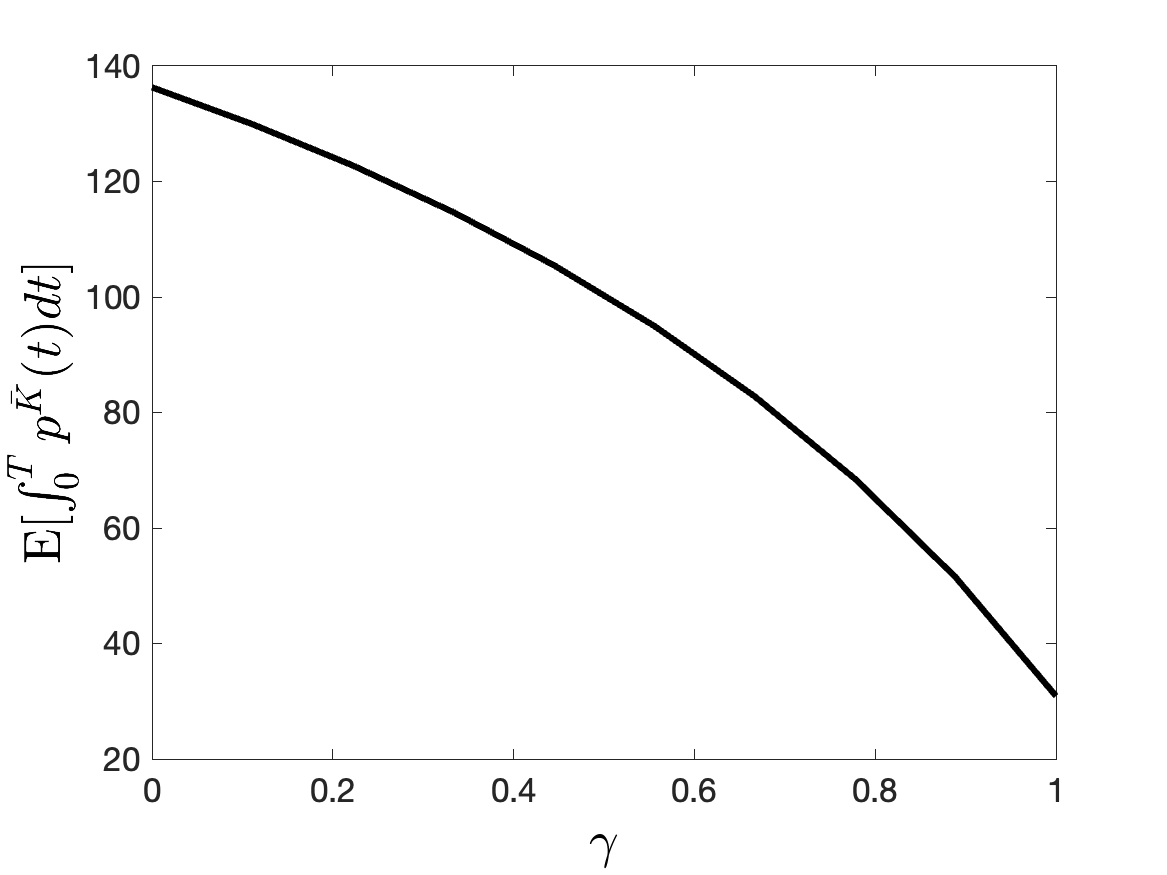}
    \includegraphics[width=0.49\linewidth]{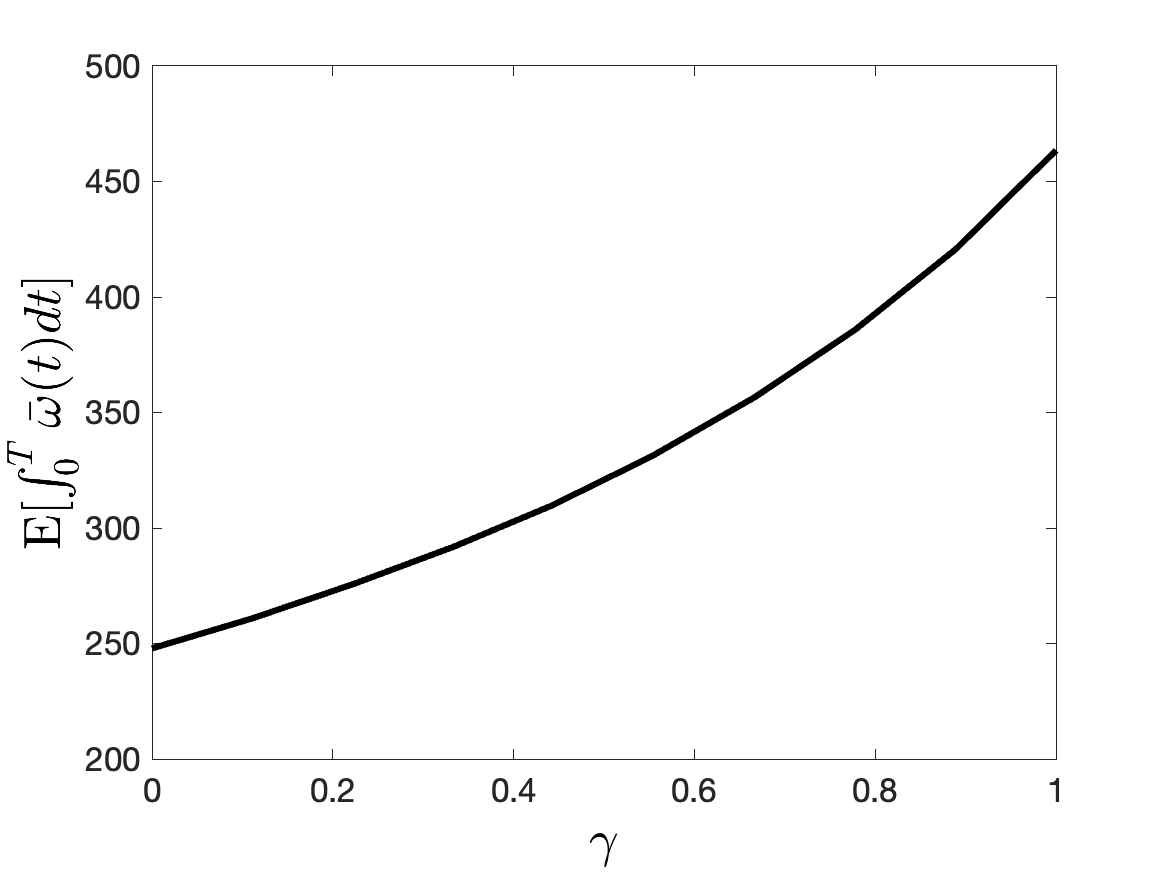}
    \caption{Effect of the level of competition $\gamma$ on the average level of production $\mathbb{E}\left[\int_{0}^{T}A_k K(t)\,dt\right]$, on the average price of good $\mathbb{E}\left[\int_{0}^{T}p^{\overline{K}}(t)\,dt\right]$, and on the average price of permits $\mathbb{E}\left[\int_{0}^{T} \overline{\omega}(t)\,dt\right]$. From left to right, from top to bottom: The figure plots the average value of the production, the average price of good, and the average price of permits as a function of the level of competition $\gamma$. Parameters value as in Table \ref{tab:par_model_simulation}.}
    \label{fig::gammavsprime}
\end{figure}

\begin{figure}
    \centering
    \includegraphics[width=0.49\linewidth]{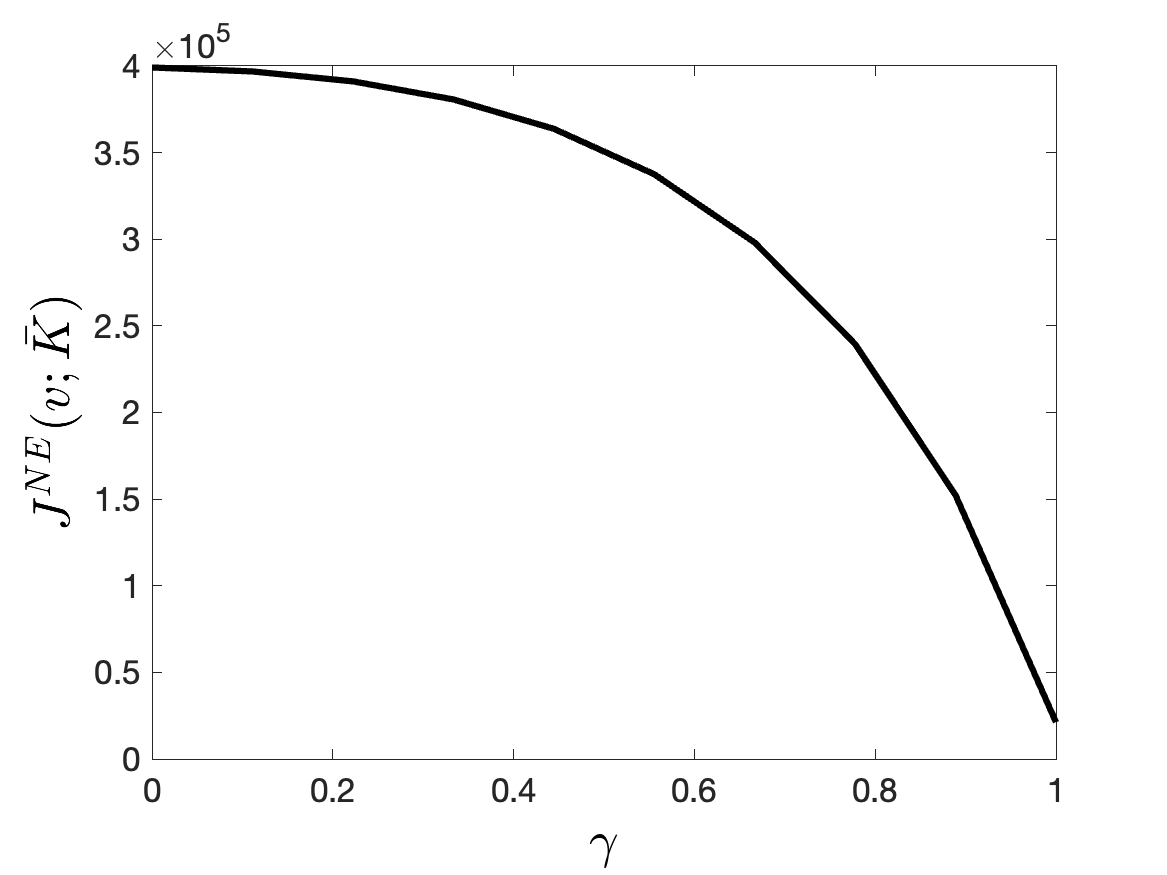}
    \includegraphics[width=0.49\linewidth]{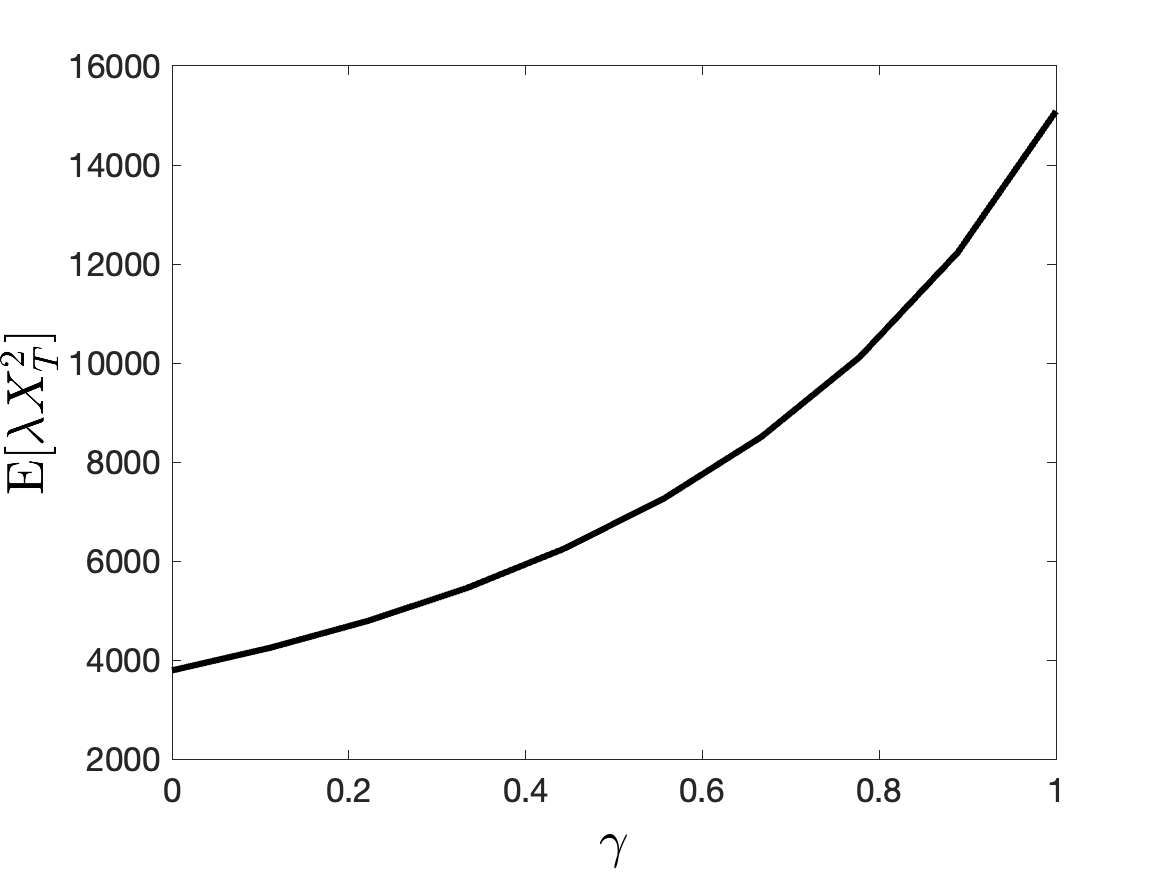}
    \includegraphics[width=0.49\linewidth]{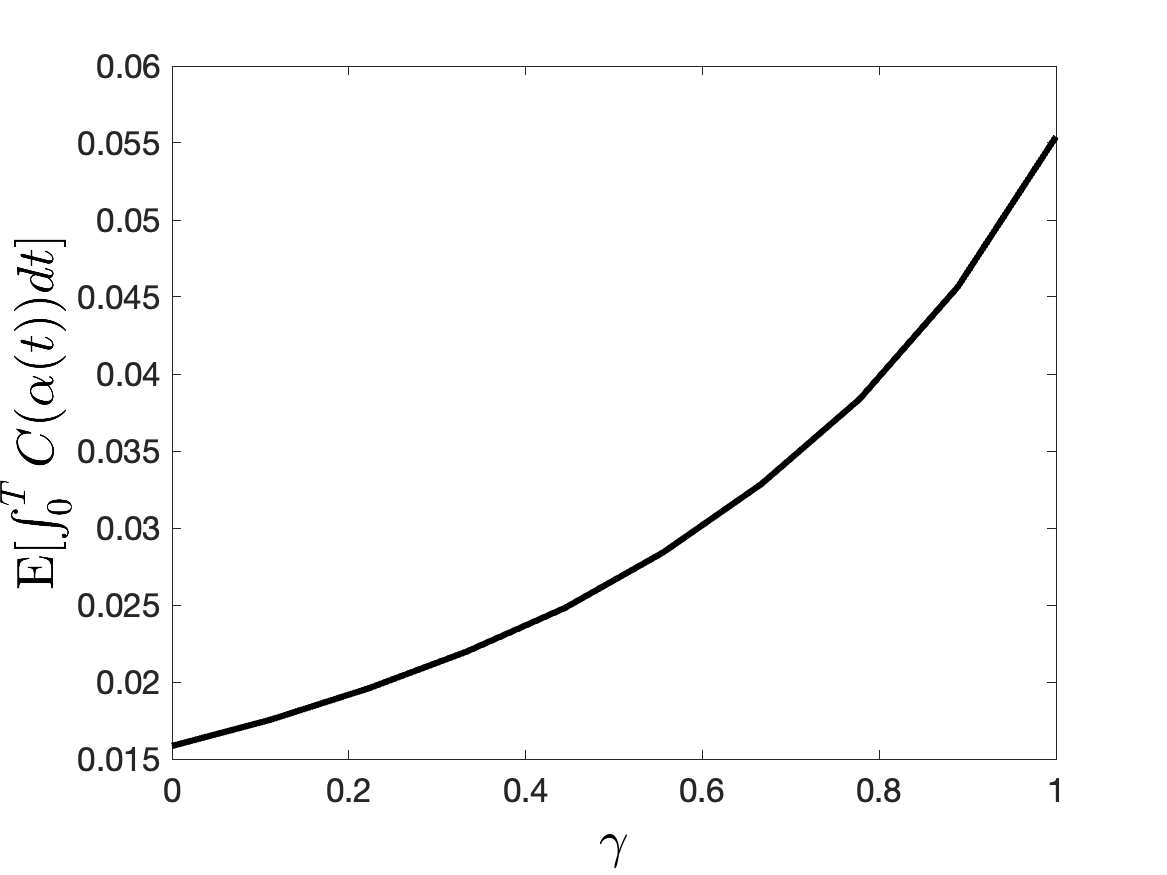}
    \includegraphics[width=0.49\linewidth]{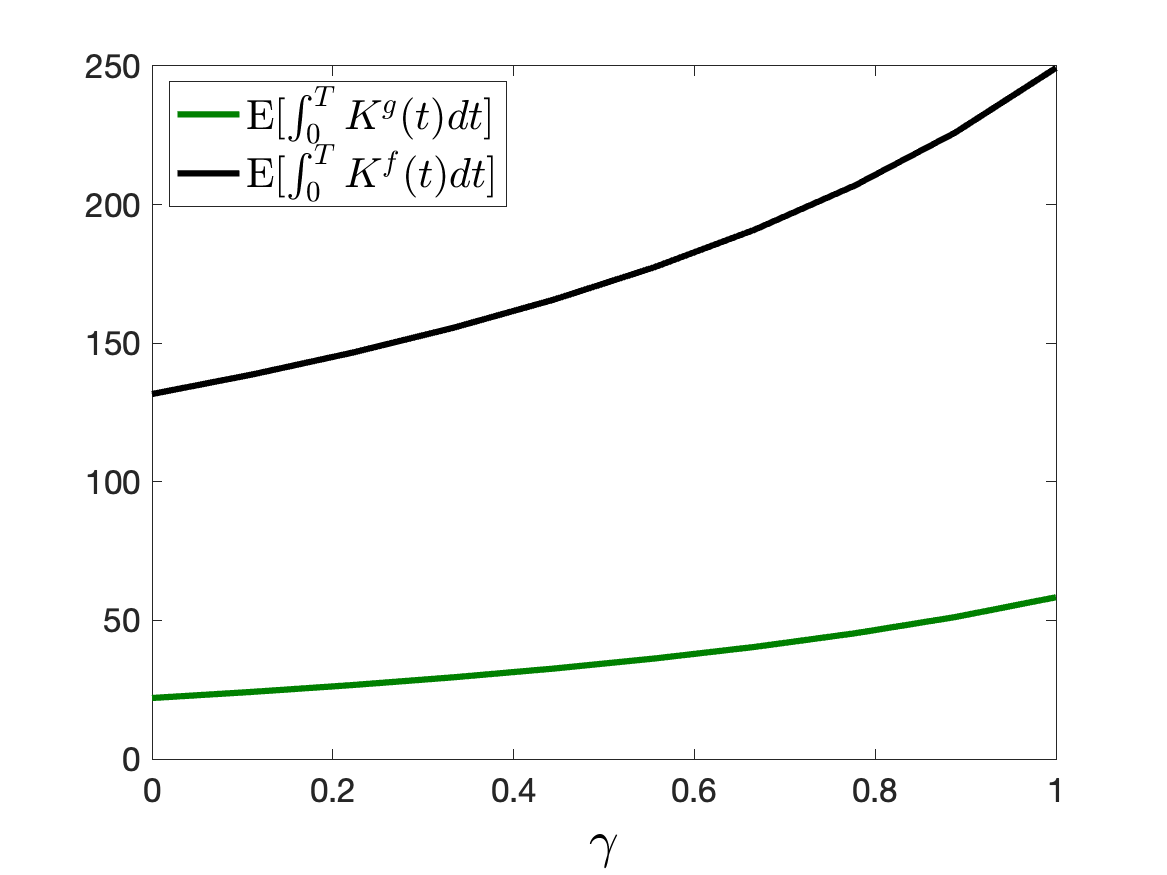}
     \includegraphics[width=0.49\linewidth]{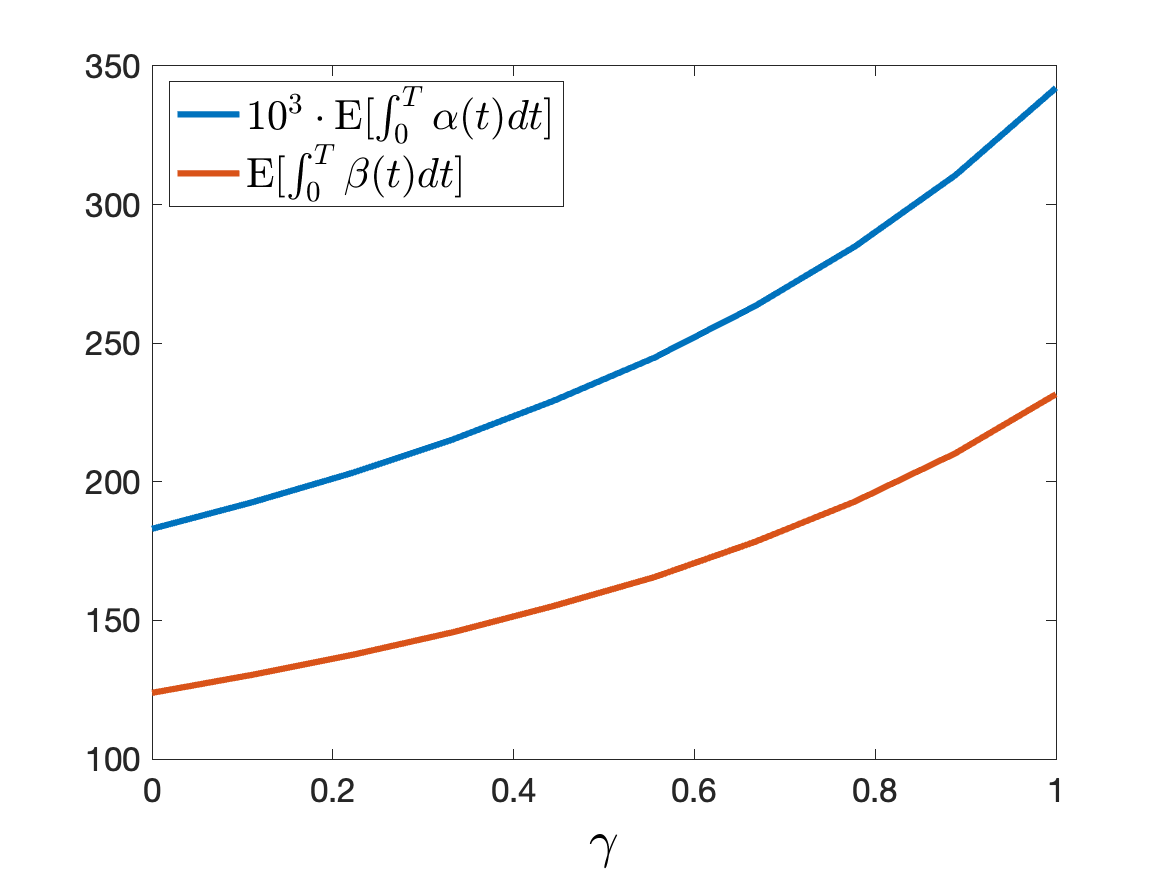}
    \caption{From left to right, from top to bottom: Effect of the level of competition $\gamma$ on the value function $\mathcal{J}^{NE}(v,\overline{K})$, on the average final penalty $\mathbb{E}[\lambda X^2_T]$, on the average abatement cost $\mathbb{E}[\int_{0}^{T}C(\alpha(t))\,dt]$, on the average level of fossil-fuel $\mathbb{E}[\int_{0}^{T}K^{f}(t)\,dt]$ and green capital $\mathbb{E}[\int_{0}^{T}K^{f}(t)\,dt]$, and on the average pollution abatement rate $\mathbb{E}[\int_{0}^{T}\alpha(t)\,dt]$ and average trading rate $\mathbb{E}[\int_{0}^{T}\beta(t)\,dt]$. Parameters value as in Table \ref{tab:par_model_simulation}.}
    \label{fig::gammavspsecond}
\end{figure}

\newpage
\section{Conclusion and future research}\label{sec::futureresearch}
The model proposed in this paper introduces several fundamental elements regarding pollution generation, abatement and costs, and regulation, which can serve as a basis for future research.\\
\indent The model assumes that firms produce products using a standard AK model with a positive depreciation rate of capital. Future research could consider relaxing this assumption and exploring a more realistic, namely non-linear, production function. Additionally, it assumes that the business cycle affecting the BAU carbon emissions does not correlate with the one affecting emission allowances.    Said differently,   we are assuming that the regulator has access to limited information and it is affected by a macroeconomic shocks driver that it is independent from the one influencing the emissions. Investigating correlated business cycles and the impact of asymmetric information on production and abatement costs could be potential areas for future research; the latter, in particular, can lead to interesting agency problems. In the present work, the dynamic allocation of the regulator is exogenous and we do not consider any compliance constraint, neither on the expected emissions nor on a point-wise value on the terminal net emissions of the representative firm, as done in the very recent research paper \cite{biagini2024}. Extending our
model to such a setting is an interesting direction for future research. The model also assumes that all firms share the same cost and coefficient functions. Extending the model to incorporate multiple populations, where firms within each population share the same cost and coefficient functions, but differ across populations, is an area for future exploration. This will provide an important tool to study the market equilibrium price in the presence of different types of firms. In addition, it's important for future research to consider relaxing the assumption of the carbon price being $(\overline{\mathcal{F}}^0)$-adapted. Also, it might be an interesting avenue for future research to account for the way in which the regulator allocates allowances to individual firms, in particular analysing the case when the initial allocation is through auctions, as initially intended by the European Union, which switched back to grandfathering in the third phase (after 2012). Extending the model to account for multiple compliance periods and the specific design of current cap-and-trade systems could lead to clear-cut predictions about permit prices and related derivatives. Finally, future research could consider integrated production-pollution-abatement models in continuous time and study other types of policies, such as the policy rules under the Market Stability Reserve (MSR), launched in 2019 by the EU.  

\newpage
\appendix
\section{Linear quadratic mean field games with common noise.}\label{app::LQMFG}
In this section, we present the general formulation of a linear quadratic mean field game class with common noise, which our framework fits into.\\  
Let $(\overline{\xi}(s))$ and $(\overline{\psi}(s))$ be given processes adapted to the filtration $(\overline{\mathcal{F}}^0_t)$. We consider the following dynamics:
\begin{equation}\label{eq::statedynamicsMfgGraber}
    \begin{split}
        dX(s) &= (A_0(s) + A(s)X(s)+\overline{A}(s)\overline{\xi}(s)+B(s)v(s)+\overline{B}(s)\overline{\psi}(s))\,ds\\
              &+ \sum_{j=1}^{d_1}\left(C_{0,j}(s) + C_{j}(s)X(s) + \overline{C}_{j}(s)\overline{\xi}(s)+D_{j}(s)v(s)+\overline{D}_{j}(s)\overline{\psi}(s)\right)\,dW^{j}(s)\\
              &+\sum_{\ell=1}^{d_0}\left(F_{0,j}(s)+F_{j}(s)X(s)+\overline{F}_{j}(s)\overline{\xi}(s)+G_{j}(s)v(s)+\overline{G}_{j}(s)\overline{\psi}(s)\right)\,dW^{0,j}(s),\\
    \end{split}
\end{equation}
with $X(t)=x$, and objective functional:
\begin{equation}\label{eq::costfunctionalMfgGraber}
    \begin{split}
        \mathcal{J}_{x,t}^{NE}(v)=\mathbb{E}\Bigg[\int_{t}^{T}&\Bigg( Q_0(s)+\langle Q(s)X(s), X(s)\rangle+ 2 \langle\overline{Q}(s)\overline{\xi}(s),\overline{X}(s)\rangle+\langle R(s)v(s),v(s)\rangle\\
        &+  2 \langle\overline{R}(s)\overline{\psi}(s), v(s)\rangle +2\langle S(s)X(s),v(s)\rangle
        +2\langle \overline{S}_1(s)\overline{\xi}(s), v(s)\rangle\\
        &+2\langle \overline{S}_2(s)X(s), \overline{\psi}(s)\rangle\\
        &+2\langle q(s),X(s)\rangle
        +2\langle \overline{q}(s),\overline{\xi}(s)\rangle +2\langle r(s), v(s)\rangle + 2\langle \overline{r}(s),\overline{\psi}(s)\rangle
        \Bigg)\,ds\\
        &+ \langle H X(T), X(T)\rangle + 2 \langle\overline{H}\overline{\xi}(T), X(T)\rangle\Bigg],
    \end{split}
\end{equation}
where $\langle\cdot, \cdot\rangle$ denotes the inner product on Euclidean space. The goal is to find a control $\hat{v}(s)$ with corresponding state process $\hat{X}$ such that $\mathcal{J}_{x,t}^{NE}(\hat{v};\overline{\xi}, \overline{\psi})=\inf_{v}\mathcal{J}_{x,t}^{NE}(v;\overline{\xi}, \overline{\psi})$ and $\mathbb{E}[\hat{X}(s)|\overline{\mathcal{F}}_s^0]=\overline{\xi}$, $\mathbb{E}[\hat{v}(s)|\overline{\mathcal{F}}_s^0]=\overline{\psi}$. The process $\hat{v}$ is called a mean field Nash equilibrium.
We state the following assumption on the coefficient matrices (cfr.~\cite{graber2016linear}, Assumption 3.1) 
\begin{enumerate}[label=(N\arabic*)]
\item\label{itm:N1} $A_0, C_{0,j}, F_{0,j} \in L^{\infty}([0,T];\mathbb{R}^{d})$, $1 \leq j \leq d_1$ and $\ell \leq 1 \leq d_0$, and $Q_0(s) \in L^{\infty}([0,T];\mathbb{R})$.
\item\label{itm:N2} $A, \overline{A}, C, \overline{C}, F, \overline{F} \in L^{\infty}([0,T];\mathbb{R}^{d \times d})$.
\item\label{itm:N3} $B, \overline{B}, D, \overline{D}, G, \overline{G} \in L^{\infty}([0,T];\mathbb{R}^{d \times d_2})$.
\item\label{itm:N4} $Q, \overline{Q}\in L^{\infty}([0,T];\mathcal{S}^{d})$,  $R, \overline{R}\in L^{\infty}([0,T];\mathcal{S}^{d_2})$,  $H, \overline{H} \in \mathcal{S}^{d}$.
\item\label{itm:N5} $H\geq 0$ and for some $\delta_1 \geq 0$, $\delta_2 >0$, $Q, Q\geq \delta_1 I_{d}$ and $R \geq \delta_2 I_{d}$.
\item\label{itm:N6} $S, \overline{S}_1, \overline{S}_2  \in L^{\infty}([0,T]; \mathbb{R}^{d_2 \times d})$; $q, \overline{q} \in L^{\infty}([0,T];\mathbb{R}^{d})$; $r, \overline{r} \in L^{\infty}([0,T];\mathbb{R}^{d_2})$.
\item\label{itm:N7} $\|S\|_{\infty}^2 < \delta_1\delta_2$ if $\delta_1>0$, $S=\overline{S}=0$ otherwise.
\end{enumerate}

\section{Linear quadratic mean field type control with common noise.}\label{app::LQMFC}
In this section, we provide explicit solutions of a class of linear quadratic mean field type control problems in terms of a system of Riccati equations. The class of problems we consider is a generalization of the one analyzed in \cite{graber2016linear}. In the latter, both the private states dynamics and the running cost appearing in the cost functional do not contain (possibly time-dependant) terms of order zero, and both common and idiosyncratic noise values are uni-dimensional, i.e., $d_0=d_1=1$. Instead, we consider the following dynamics:
\begin{equation}\label{eq::statedynamicsMfcGraber}
    \begin{split}
        dX(s) &= (A_0(s) + A(s)X(s)+\overline{A}(s)\overline{X}(s)+B(s)v(s)+\overline{B}(s)\overline{v}(s))\,ds\\
              &+ \sum_{j=1}^{d_1}\left(C_{0,j}(s) + C_{j}(s)X(s) + \overline{C}_{j}(s)\overline{X}(s)+D_{j}(s)v(s)+\overline{D}_{j}(s)\overline{v}(s)\right)\,dW^{j}(s)\\
              &+\sum_{\ell=1}^{d_0}\left(F_{0,j}(s)+F_{j}(s)X(s)+\overline{F}_{j}(s)\overline{X}(s)+G_{j}(s)v(s)+\overline{G}_{j}(s)\overline{v}(s)\right)\,dW^{0,j}(s),\\
    \end{split}
\end{equation}
with $X(t)=x$. In addition, the objective cost functional is given by:
\begin{equation}\label{eq::costfunctionalMfcGraber}
    \begin{split}
        \mathcal{J}_{x,t}^{LQ}(v)=\mathbb{E}\Bigg[\int_{t}^{T}&\Bigg( Q_0(s)+\langle Q(s)X(s), X(s)\rangle+ \langle\overline{Q}(s)\overline{X}(s),\overline{X}(s)\rangle+\langle R(s)v(s),v(s)\rangle\\
        &+  \langle\overline{R}(s)\overline{v}(s),\overline{v}(s)\rangle +2\langle S(s)X(s),v(s)\rangle+2\langle \overline{S}(s)\overline{X}(s),\overline{v}(s)\rangle+2\langle q(s),X(s)\rangle\\
        &+2\langle \overline{q}(s),\overline{X}(s)\rangle +2\langle r(s), v(s)\rangle + 2\langle \overline{r}(s),\overline{v}(s)\rangle
        \Bigg)\,ds\\
        &+ \langle H X(T), X(T)\rangle + 2 \langle\overline{H}\overline{X}(T), X(T)\rangle\Bigg],
    \end{split}
\end{equation}
where $\langle\cdot, \cdot\rangle$ denotes the inner product on Euclidean space.

We state the following assumption on the coefficient matrices (cfr.~\cite{graber2016linear}, Assumption 2.1) 
\begin{enumerate}[label=(M\arabic*)]
\item\label{itm:M1} $A_0, C_{0,j}, F_{0,j} \in L^{\infty}([0,T];\mathbb{R}^{d})$, $1 \leq j \leq d_1$ and $\ell \leq 1 \leq d_0$, and $Q_0(s) \in L^{\infty}([0,T];\mathbb{R})$.
\item\label{itm:M2} $A, \overline{A}, C, \overline{C}, F, \overline{F} \in L^{\infty}([0,T];\mathbb{R}^{d \times d})$.
\item\label{itm:M3} $B, \overline{B}, D, \overline{D}, G, \overline{G} \in L^{\infty}([0,T];\mathbb{R}^{d \times d_2})$.
\item\label{itm:M4} $Q, \overline{Q}\in L^{\infty}([0,T];\mathcal{S}^{d})$,  $R, \overline{R}\in L^{\infty}([0,T];\mathcal{S}^{d_2})$,  $H, \overline{H} \in \mathcal{S}^{d}$.
\item\label{itm:M5} $H, H+\overline{H}\geq 0$ and for some $\delta_1 \geq 0$, $\delta_2 >0$, $Q, Q+\overline{Q}\geq \delta_1 I_{d}$ and $R, R+\overline{R}\geq \delta_2 I_{d}$.
\item\label{itm:M6} $S, \overline{S} \in L^{\infty}([0,T]; \mathbb{R}^{d_2 \times d})$; $q, \overline{q} \in L^{\infty}([0,T];\mathbb{R}^{d})$; $r, \overline{r} \in L^{\infty}([0,T];\mathbb{R}^{d_2})$.
\item\label{itm:M7} $\|S\|_{\infty}^2, \|S+\overline{S}\|_{\infty}^2 < \delta_1\delta_2$ if $\delta_1>0$, $S=\overline{S}=0$ otherwise.
\end{enumerate} 
The procedure used in \cite{graber2016linear}, Section 2.2, uses a technique developed by \cite{yong2013linear}. In order to facilitate the reader, we will highlight in bold font the additional terms with respect \cite{graber2016linear}, Theorem 2.6, Equations (2.31)-- (2.32) and the subsequent non-numbered one, linked to the terms of order zero\footnote{Notice that the expression for $\Sigma_0$ and $\phi(s)$ derived in \cite{graber2016linear} presents some inaccuracies. First, the term $G^T P G$ is missed in the expression for $\Sigma_0$ (see Equation \eqref{eq::equationsixteenAppA}). Second, there is an extra term in the equation for $\phi$; nonetheless, the equation remains linear and, therefore, essentially trivial to solve (see Equation \eqref{eq::ODEforphi})}. We suppose that:
\begin{equation}\label{eq::equationOneAppA}
    Y(s) = P(s)(X(s)-\overline{X}(s)) + \Pi(s)\overline{X}(s) + \phi(s),
\end{equation}
where $P$ and $\Pi$ are $\mathcal{S}^{d}$-valued processes such that they satisfy the following terminal conditions: 
\begin{equation*}
    P(T)=H,\quad\Pi(T)=H+\overline{H},
\end{equation*}
and $\phi(s)$ is an $\mathbb{R}^d$-valued process; $P$, $\Pi$, and $\phi$ are deterministic. Hereafter, in order to ease the notation, we suppress the time indexes and we work under the assumption that $d_0=d_1=1$; we will provide the expressions for the case $d_0>1$ and $d_1>1$ at the end of the present section. By taking the conditional expectation in Equation \eqref{eq::equationOneAppA}, we obtain:
\begin{equation}\label{eq::equationTwoAppA}
    \overline{Y}=\Pi \overline{X} +\phi \,\,\text{and}\,\,Y -\overline{Y} = P (X -\overline{X}).
\end{equation}
Moreover, by taking the conditional expectation in Equation \eqref{eq::statedynamicsMfcGraber}, we obtain:
\begin{equation}\label{eq::equationThreeAppA}
    d\overline{X} = \left(\boldsymbol{A_0} + (A+\overline{A})\overline{X}+(B+\overline{B})\overline{v}\right)\,ds+\left(\boldsymbol{F_0} + (F+\overline{F})\overline{X}+(G+\overline{G})\overline{v}\right) dW^{0}.
\end{equation}
By subtracting the previous equation from Equation \eqref{eq::statedynamicsMfcGraber}, we have:
\begin{equation}\label{eq::equationFourAppA}
    \begin{split}
        d(X-\overline{X})&=\left(A (X - \overline{X}) + B (v - \overline{v})\right),ds\\
                    &+\left(\boldsymbol{C_0} + C (X-\overline{X}) + (C+\overline{C})\overline{X} + D(v-\overline{v})+(D+\overline{D})\overline{v}\right)\,dW\\
                    &+\left( F(X-\overline{X})+G(v-\overline{v})\right)\,dW^0.
    \end{split} 
\end{equation}
Proposition 2.4, Equation (2.4), in \cite{graber2016linear} gives us\footnote{Proposition 2.4, Equation (2.4), in \cite{graber2016linear} is not affected by the presence of zero-order terms.}:
\begin{equation}\label{eq::equationFiveAppA}
\begin{split}
    dY &=-\Bigl(A^T (Y-\overline{Y}) + (A^T+\overline{A}^T)\overline{Y} + C^T (Z-\overline{Z}) + (C^T+\overline{C}^T)\overline{Z}\\
       &+F^T(Z_0-\overline{Z}_0)+(F^T+\overline{F}^T)\overline{Z}_0 + Q (X-\overline{X}) + (Q+\overline{Q})\overline{X}\\
       &+ S^T v + \overline{S}^T \overline{v} + q + \overline{q}\Bigr)\,ds + Z\,dW + Z_0\,dW^0\\
\end{split}
\end{equation}
Now, on one hand we have:
\begin{equation}\label{eq::equationsixAppA}
    \begin{split}
        d(Y-\overline{Y}) &= \left(\overset{\cdot}{P}(X-\overline{X}) + P A (X - \overline{X}) + P B (v - \overline{v})\right)\,ds\\
                    &+P \left(\boldsymbol{C_0} + C (X-\overline{X}) + (C+\overline{C})\overline{X} + D(v-\overline{v})+(D+\overline{D})\overline{v}\right)\,dW\\
                    &+P \left(F(X-\overline{X})+G(v-\overline{v})\right)\,dW^0.
    \end{split}
\end{equation}
On the other hand, it holds that (see Equation \eqref{eq::equationTwoAppA}):
\begin{equation}\label{eq::equationsevenAppA}
\begin{split}
    d\overline{Y} &= (\overset{\cdot}{\phi} + \overset{\cdot}{\Pi})\,ds + \Pi d\overline{X}\\
             &=\left(\overset{\cdot}{\phi} + \overset{\cdot}{\Pi}\overline{X} + \boldsymbol{\Pi A_0} + \Pi(A+\overline{A})\overline{X}+\Pi(B+\overline{B})\overline{v}\right)\,ds\\
             &+\Pi\left(\boldsymbol{F_0} + (F+\overline{F})\overline{X}+(G+\overline{G})\overline{v}\right)\,dW^0
\end{split}
\end{equation}
Noting that $dY = d(Y-\overline{Y})+d\overline{Y}$, we compare the diffusion terms of the left and right hand side of this equation. We get:
\begin{equation}\label{eq::equationeightAppA}
    Z = P \left(\boldsymbol{C_0} + C (X-\overline{X}) + (C+\overline{C})\overline{X} + D(v-\overline{v})+(D+\overline{D})\overline{v}\right)
\end{equation}
\begin{equation}\label{eq::equationnineAppA}
    Z_0 =P \left(F(X-\overline{X})+G(v-\overline{v})\right) +\Pi\left(\boldsymbol{F_0} + (F+\overline{F})\overline{X}+(G+\overline{G})\overline{v}\right)
\end{equation}
which implies
\begin{equation}\label{eq::equationtenAppA}
    \overline{Z} = P \left(\boldsymbol{C_0}+(C+\overline{C})\overline{X} +(D+\overline{D})\overline{v}\right)
\end{equation}
\begin{equation}\label{eq::equationelevenAppA}
    Z-\overline{Z} = P \left(C (X-\overline{X}) + D(v-\overline{v})\right)
\end{equation}
\begin{equation}\label{eq::equationtwelveAppA}
    \overline{Z}_0 = \Pi\left(\boldsymbol{F_0}+(F+\overline{F})\overline{X}+(G+\overline{G})\overline{v}\right)
\end{equation}
and
\begin{equation}\label{eq::equationthirteenAppA}
    Z_0-\overline{Z}_0 = P \left(F(X-\overline{X})+G(v-\overline{v})\right)
\end{equation}
At this point, the coupling condition in \cite{graber2016linear}, Proposition 2.4, Equation (2.5) reads as\footnote{On the other hand, Equation (2.5) is affected by zero-order terms since it depends on $\overline{Z}$ and $\overline{Z}_0$.}:
\begin{equation}\label{eq::equationfourteenAppA}
    \begin{split}
        &B^T (Y-\overline{Y}) + (B^T+\overline{B}^T)\overline{Y} +D^T(Z-\overline{Z}) + (D+\overline{D}^T)\overline{Z}\\
        &+G^T(Z_0-\overline{Z}_0) + (G^T+\overline{G}^T)\overline{Z}_0 +R (v-\overline{v}) +(R+\overline{R})\overline{v} + S (X-\overline{X}) +(S+\overline{S})\overline{X}(s) +r +\overline{r}\\
        &=B^T P (X-\overline{X}) + (B^T+\overline{B}^T)\Pi\overline{X} + (B^T+\overline{B}^T)\phi + D^T P \left(C (X-\overline{X}) + D(v-\overline{v})\right)\\
        &+(D+\overline{D}^T)\left(\boldsymbol{C_0}+(C+\overline{C})\overline{X} +(D+\overline{D})\overline{v}\right) +G^TP \left(F(X-\overline{X})+G(v-\overline{v})\right)\\
        &+ (G^T+\overline{G}^T) \Pi\left(\boldsymbol{F_0} + (F+\overline{F})\overline{X}+(G+\overline{G})\overline{v}\right) \\
        &+R (v-\overline{v}) +(R+\overline{R})\overline{v} + S (X-\overline{X}) +(S+\overline{S})\overline{X} +r +\overline{r},\\
    \end{split}
\end{equation}
which can be rewritten in the following way
\begin{equation}\label{eq::equationfifteenAppA}
\Lambda_0 (X-\overline{X}) + \Lambda_1 \overline{X} + \Sigma_0 (v-\overline{v}) + \Sigma_1 \overline{v} + (B^T+\overline{B}^T)\phi + r + \overline{r} + \boldsymbol{(D+\overline{D}^T)C_0 + (G^T+\overline{G}^T)\Pi F_0}=0
\end{equation}
by setting
\begin{equation}\label{eq::equationsixteenAppA}
    \begin{split}
        \Lambda_0 &= B^T P + D^T P C + G^T P F + S;\\
        \Lambda_1 &= (B^T+\overline{B}^T)\Pi + (D+\overline{D}^T)P (C+\overline{C}) +  (G^T+\overline{G}^T) \Pi (F+\overline{F}) + (S+\overline{S}) \\
        \Sigma_0 &= D^T P D + G^T P G + R\\
        \Sigma_1 &= (D+\overline{D}^T)P (D+\overline{D}) + (G^T+\overline{G}^T) \Pi (G+\overline{G}) + R+\overline{R}\\
    \end{split}
\end{equation}
Taking the conditional expectation in Equation \eqref{eq::equationfifteenAppA}, assuming $\Sigma_1$ invertible, and making the term $\overline{v}$ explicit in Equation \eqref{eq::equationfifteenAppA}, we deduce 
\begin{equation}\label{eq::equationseventeenAppA}
   \overline{v}=- \Sigma_1^{-1}\left(\Lambda_1 \overline{X} + r + \overline{r} + (B^T+\overline{B}^T)\phi + \boldsymbol{(D+\overline{D}^T)C_0 + (G^T+\overline{G}^T)\Pi F_0}\right)
\end{equation}
Assuming $\Sigma_0$ is also invertible and observing that $v=v - \overline{v} + \overline{v}$ , we have:
\begin{equation}\label{eq::equationeighteenAppA}
\begin{split}
    v =&-\Sigma_0^{-1} \Lambda_0(X-\overline{X})\\
       &- \Sigma_1^{-1}\left(\Lambda_1 \overline{X} + r + \overline{r} + (B^T+\overline{B}^T)\phi + \boldsymbol{(D+\overline{D}^T)C_0 + (G^T+\overline{G}^T)\Pi F_0}\right)\\
\end{split}
\end{equation}
At this point, we compare the drift terms from \eqref{eq::equationFiveAppA} to those of \eqref{eq::equationsixAppA} and \eqref{eq::equationsevenAppA}. Using the relations \eqref{eq::equationtenAppA}, \eqref{eq::equationelevenAppA}, \eqref{eq::equationtwelveAppA}, \eqref{eq::equationthirteenAppA}, \eqref{eq::equationseventeenAppA}, and \eqref{eq::equationeighteenAppA} proved above. By noticing that 
$v-\overline{v}=-\Sigma_0^{-1}\Lambda_0(X-\overline{X})$, after some algebra, we deduce that $P$ and $\Pi$ should satisfy the following Riccati equations:
\begin{equation}\label{eq::RiccatiEquationforP}
\begin{cases}
\begin{split}
    &\overset{\cdot}{P} + P A + A^T P + C^T P C + F^T P F + Q - (P B +  C^T P D + F^T P G + S^T) \Sigma_0^{-1} \Lambda_0 = 0\\
    &\Lambda_0 = B^T P + D^T P C + G^T P F + S;\\
    &\Sigma_0 = D^T P D + G^T P G + R\\
    &P(T) = H.
\end{split}
\end{cases}
\end{equation}

\begin{equation}\label{eq::RiccatiEquationforPI}
   \begin{cases}
   \begin{split}
        &\overset{\cdot}{\Pi} + \Pi (A + \overline{A}) + (A^T + \overline{A}^T)\Pi + (C^T+\overline{C}^T)P(C+\overline{C})+ (F^T+\overline{F}^T)\Pi(F+\overline{F}) + (Q+\overline{Q})\\
        &-\left(\Pi (B+\overline{B}) +(C^T+\overline{C}^T)P(D+\overline{D}) + (F^T+\overline{F}^T)\Pi(G+\overline{G})+\overline{S}^T +  S^T +\right)\Sigma_1^{-1}\Lambda_1=0\\
        &\Lambda_1 = (B^T+\overline{B}^T)\Pi + (D+\overline{D}^T)P (C+\overline{C}) +  (G^T+\overline{G}^T) \Pi (F+\overline{F}) + (S+\overline{S}) \\
        &\Sigma_1 = (D+\overline{D}^T)P (D+\overline{D}) + (G^T+\overline{G}^T) \Pi (G+\overline{G}) + R+\overline{R}\\
        &\Pi(T) = H + \overline{H}
    \end{split} 
\end{cases}
\end{equation}
Once we have $P$ and $\Pi$ solution to Equation \eqref{eq::RiccatiEquationforP} and \eqref{eq::RiccatiEquationforPI}, we set:
\begin{equation}\label{eq::ODEforphi}
    \begin{split}
        &\overset{\cdot}{\phi} -\Bigl(\Pi (B+\overline{B}) + (C^T+\overline{C}^T)P(D+\overline{D}) + (F^T+\overline{F}^T)\Pi(G+\overline{G}) + S^T + \overline{S}^T\Bigr)\Sigma_1^{-1}\cdot\\
        &\Big(r+\overline{r}+\boldsymbol{(D+\overline{D}^T)C_0 + (G^T+\overline{G}^T)\Pi F_0}\\
        &+ q + \overline{q} + \boldsymbol{\Pi A_0-(C^T+\overline{C}^T)PC_0-(F^T+\overline{F}^T)\Pi F_0\Big)}\\
        &-\Bigl[\Bigl(\Pi (B+\overline{B}) + (C^T+\overline{C}^T)P(D+\overline{D}) + (F^T+\overline{F}^T)\Pi(G+\overline{G}) + S^T + \overline{S}^T\Bigr)\Sigma_1^{-1}(B^T+\overline{B}^T)\\
        &+ (A^T+\overline{A}^T)\Bigr]\phi=0
    \end{split}
\end{equation}
Finally, we obtain the optimal trajectory (using Equation \eqref{eq::equationeighteenAppA}), a formula for the process $Z$ (using Equations \eqref{eq::equationelevenAppA}, \eqref{eq::equationtwelveAppA} and \eqref{eq::equationeighteenAppA}), and a formula for the process $\overline{Z}_0$ (using Equations \eqref{eq::equationelevenAppA} and \eqref{eq::equationtwelveAppA}):
\begin{equation}\label{eq::optimaldynamics}
    \begin{split}
        dX &=\left(A (X-\overline{X}) + (A+\overline{A})\overline{X} + B(v-\overline{v}) + (B+\overline{B})\overline{v}\right)\,ds\\
           &+\left(C (X-\overline{X}) + (C+\overline{C})\overline{X} + D(v-\overline{v}) + (D+\overline{D})\overline{v}\right)\,dW\\ 
           &+\left(F (X-\overline{X}) + (F+\overline{F})\overline{X} + G(v-\overline{v}) + (G+\overline{G})\overline{v}\right)\,dW^0\\ 
    \end{split}
\end{equation}
  \begin{equation}\label{eq::processZ}
    \begin{split}
        Z   &=P\left(C (X-\overline{X}) - D\Sigma_0^{-1}\Lambda_0(X-\overline{X}) \right)\\
            &+P\Bigg(\boldsymbol{C_0} + (C+\overline{C})\overline{X}-(D+\overline{D})\Sigma_1^{-1}\Bigg(\Lambda_1 \overline{X} + r + \overline{r} + (B^T+\overline{B}^T)\phi\\
            &+ \boldsymbol{(D+\overline{D}^T)C_0 + (G^T+\overline{G}^T)\Pi F_0}\Bigg)\Bigg)
    \end{split}
\end{equation}
\begin{equation}\label{eq::processZ0}
\begin{split}
    Z_0 &=P \left(F - G\Sigma_0^{-1}\Lambda_0\right)(X-\overline{X}) \\
        &+\Pi\left(\boldsymbol{F_0} + (F+\overline{F}) - (G+\overline{G})\Sigma_1^{-1}\Lambda_1\right)\overline{X}\\
        &-\Pi(G+\overline{G})\Sigma_1^{-1}(r + \overline{r} +(B^T+\overline{B}^T)\phi + \boldsymbol{(D+\overline{D}^T)C_0 + (G^T+\overline{G}^T)\Pi F_0}).
\end{split}   
\end{equation}
Equations for the general case are easily obtained by using the summations where necessary (e.g., $C^T P C$ is replaced by $\sum_{j=1}^{d_1} C_j^T P C_j$).

\newpage

\end{document}